\documentclass[11pt]{article}

\usepackage{dsfont}

\def\colorful{1}

\oddsidemargin=-0.1in \evensidemargin=-0.1in \topmargin=-.5in
\textheight=9in \textwidth=6.5in
\parindent=18pt

\usepackage{amsthm,amsfonts,amsmath,amssymb,epsfig,color,float,graphicx,verbatim, enumitem}
\usepackage{multirow}
\usepackage[noend]{algpseudocode}
\usepackage{enumitem}
\usepackage[linesnumbered,lined,boxed,commentsnumbered,ruled,vlined]{algorithm2e}

\newif\ifhyper\IfFileExists{hyperref.sty}{\hypertrue}{\hyperfalse}
\hypertrue
\ifhyper\usepackage{hyperref}\fi

\usepackage{enumitem}

\usepackage{framed}
\usepackage{nicefrac}

\def\nnewcolor{0}
\ifnum\nnewcolor=1
\newcommand{\nnew}[1]{{\color{blue} #1}}
\fi
\ifnum\nnewcolor=0
\newcommand{\nnew}[1]{#1}
\fi

\ifnum\colorful=1

\else

\fi

\newtheorem{theorem}{Theorem}[section]

\newtheorem{lemma}[theorem]{Lemma}
\newtheorem{informal theorem}[theorem]{Theorem (informal statement)}

\newtheorem{proposition}[theorem]{Proposition}
\newtheorem{corollary}[theorem]{Corollary}
\newtheorem{claim}[theorem]{Claim}
\newtheorem{fact}[theorem]{Fact}

\newtheorem{remark}[theorem]{Remark}

\theoremstyle{definition}
\newtheorem{definition}[theorem]{Definition}
\newcommand{\eqdef}{\stackrel{{\mathrm {\footnotesize def}}}{=}}


\newcommand{\size}{\mathrm{size}}

\newcommand{\R}{\mathbb{R}}

\newcommand{\Z}{\mathbb{Z}}
\newcommand{\N}{\mathbb{N}}
\newcommand{\E}{\mathbf{E}}

\newcommand{\eps}{\epsilon}
\newcommand{\dtv}{d_{\mathrm{TV}}}

\renewcommand{\Pr}{\mathbf{Pr}}
\newcommand{\poly}{\mathrm{poly}}
\newcommand{\var}{\mathbf{Var}}

\newcommand{\Cov}{\mathbf{Cov}}
\newcommand{\polylog}{\mathrm{polylog}}

\newcommand{\D}{\mathcal{D}}

\newcommand{\littlesum}{\mathop{\textstyle \sum}}

\newcommand{\wt}{\widetilde}
\newcommand{\wh}{\widehat}

\newcommand{\proj}{\mathrm{proj}}

\newcommand{\relu}{\mathrm{ReLU}}

\title{Implicit High-Order Moment Tensor Estimation\\
and Learning Latent Variable Models}

\author{
Ilias Diakonikolas\thanks{Supported by NSF Medium Award CCF-2107079, and an H.I. Romnes Faculty Fellowship.}\\
University of Wisconsin-Madison\\
{\tt ilias@cs.wisc.edu}\\
\and
Daniel M. Kane\thanks{Supported by NSF Medium Award CCF-2107547 and NSF Award CCF-1553288 (CAREER).}\\
University of California, San Diego\\
{\tt dakane@cs.ucsd.edu}
}

\date{}

\begin{document}

\maketitle

\vspace{-0.6cm}

\begin{abstract}
We study the general task of learning latent-variable models on $\mathbb{R}^d$
with $k$ hidden parameters.
A common technique to address this task algorithmically 
is (some version of) the method of moments.
Unfortunately, moment-based approaches are often hampered by the fact
that the moment tensors of super-constant degree
cannot even be written down in polynomial time.
Motivated by such learning applications, we develop a general efficient algorithm
for {\em implicit moment tensor computation}. Roughly speaking, our algorithm
computes in $\mathrm{poly}(d, k)$ time a succinct approximate description of tensors
of the form $M_m = \sum_{i=1}^{k} w_i v_i^{\otimes m}$,
for $w_i \in \mathbb{R}_+$---even for $m=\omega(1)$---assuming that there exists
an unbiased estimator for $M_m$ with small variance that takes an appropriately nice form. 
Our framework broadly generalizes, both conceptually and technically,
the work of~\cite{LL21-opt} which developed an efficient algorithm for the specific moment tensors
that arise in the task of clustering mixtures of spherical Gaussians.

By leveraging our implicit moment estimation algorithm, we obtain the first
$\mathrm{poly}(d, k)$-time learning algorithms
for the following classical latent-variable models---thereby resolving
or making significant progress towards a number of important open problems in the literature.
\begin{itemize}[leftmargin=*]
\item {\bf Mixtures of Linear Regressions}
Given i.i.d.\ samples $(x, y)$ with $x \sim N(0, I)$ and such that the joint distribution on $(x, y)$
 is an unknown $k$-mixture of linear regressions on $\mathbb{R}^{d+1}$ corrupted with Gaussian noise,
the goal is to learn the underlying distribution in total variation distance.
We give a $\mathrm{poly}(d, k, 1/\epsilon)$-time algorithm for this task, where $\epsilon$ is the desired error.
The previously best algorithm has super-polynomial complexity in $k$.

\item {\bf Mixtures of Spherical Gaussians} Given i.i.d.\ samples
from a $k$-mixture of identity covariance Gaussians on $\mathbb{R}^d$,
the goal is to learn the target mixture. For density estimation, 
we give a $\mathrm{poly}(d, k, 1/\epsilon)$-time learning algorithm,
where $\epsilon$ is the desired total variation error,
under the condition that the means lie in a ball of radius $O(\sqrt{\log k})$.
Prior algorithms incur super-polynomial complexity in $k$.
For parameter estimation, we give a $\mathrm{poly}(d, k, 1/\epsilon)$-time algorithm
where $\epsilon$ is the target accuracy,
under the {\em optimal} mean separation of $\Omega(\log^{1/2}(k/\epsilon))$
and the condition that the largest
distance is comparable to the smallest.
Prior polynomial-time parameter estimation algorithms require separation
$\Omega(\log^{1/2+c}(k/\epsilon))$, for  $c>0$.

\item {\bf Positive Linear Combinations of Non-Linear Activations}
Given i.i.d.\ samples $(x, y)$ with
$x \sim N(0, I)$ and $y = F(x)$, where $F$ is a positive linear combination
of $k$ reasonable non-linear activations on $\mathbb{R}^d$,
the goal is to learn the target function in $L_2$-norm. 
Our main result is a general algorithm for this task with complexity $\poly(d, k) g(\epsilon)$, 
where $\epsilon$ is the desired error and the function $g$ depends 
on the Hermite concentration of the target class of functions. 
Specifically, for positive linear combinations of ReLU activations, our algorithm has complexity
$\mathrm{poly}(d, k) 2^{\mathrm{poly}(1/\epsilon)}$. This is the first algorithm for this class  
that runs in $\poly(d, k)$ time for {\em sub-constant} values of $\epsilon = o_{k, d}(1)$.
Finally, for positive linear combinations of cosine activations with bounded frequency, 
our algorithm runs in $\mathrm{poly}(d, k,1/\epsilon)$ time. 
\end{itemize}
\end{abstract}


\setcounter{page}{0}

\thispagestyle{empty}

\newpage

\section{Introduction} \label{sec:intro}

This work is motivated by the general algorithmic problem of learning
probabilistic models with latent variables in high dimensions.
This is a prototypical statistical estimation task,
first studied in the pioneering work of Karl Pearson from 1894~\cite{Pearson:94}.
Pearson introduced the method of moments motivated by the problem of
learning Gaussian mixtures in one dimension---one of the most basic latent-variable models.
The algorithmic version of learning high-dimensional latent variable models,
including mixtures of Gaussians, has been extensively studied in recent decades
within the theoretical computer science and machine learning communities.
The relevant literature is vast and has resulted in significant algorithmic
advances for a diverse range of settings.

Here we focus on the regime where both the dimension $d$ and
the number of hidden parameters $k$ are large.
In this context, we are interested in designing
learning algorithms for natural models with complexity $\poly(d, k)$, i.e.,
polynomial in both $d$ and $k$. Of course,
there are examples where such a complexity upper bound may not be possible,
due to fundamental computational limitations.
A prominent such example (where $\poly(d, k)$ complexity is unlikely) 
is for the task of learning (proper learning or density estimation)
$k$-mixtures of Gaussians on $\R^d$ with general---aka non-spherical---covariances.
For this learning task, recent work~\cite{DKS17-sq, Bruna0ST21, GupteVV22, DiaKPP24-sos}
has provided strong evidence of computational hardness---ruling out $\poly(d, k)$ runtime,
even though $\poly(d, k)$ samples information-theoretically suffice.

For the latent-variable models we study in this paper---namely
$k$-mixtures of linear regressions, $k$-mixtures of spherical Gaussians, and
one-hidden-layer neural networks with $k$ gates---
there is no known formal evidence of hardness.
On the other hand, prior to this work,
the best known algorithms
had {\em super-polynomial} complexity (as a function of the parameter $k$).
{\em By leveraging our main result (Proposition~\ref{prop:main-informal}),
we design the first $\poly(d, k)$ time learning algorithms
for all of these problems. We thus resolve, or make significant progress towards,
several well-known open problems in the learning theory literature.}
All of our learning algorithms rely on a novel methodology---which we
term {\em implicit moment tensor estimation}---that is
of broader interest and we believe may find other applications.

Before we describe our main technique, we start by highlighting
its algorithmic implications on learning latent-variable models.

\subsection{Efficiently Learning Latent Variable Models} \label{ssec:intro-apps}

\paragraph{Mixtures of Linear Regressions}
A $k$-mixture of linear regressions ($k$-MLR), specified by
mixing weights $w_i \geq 0$, where $\sum_{i=1}^k w_i=1$,
and regressors $\beta_i \in \R^d$, $i \in [k]$, is
the distribution $F$ on pairs $(x, y) \in \R^d \times \R$, where $x \sim N(0, I)$
and with probability $w_i$ we have  that $y = \beta_i \cdot x + \nu$, where
$\nu \sim N(0, \sigma^2)$ is independent of $x$.
Mixtures of linear regressions are a classical probabilistic
model introduced in~\cite{DV89, JJ94}
and have since been extensively studied in machine learning;
see Section~\ref{ssec:related} for a detailed summary of prior work.

Here we focus on density estimation for $k$-MLRs,
where the goal is to learn the underlying distribution in total variation distance---without making
any separation assumptions on the pairwise distance between the components.
While density estimation can be information-theoretically solved with $\poly(d, k)$ samples,
all prior algorithms required super-polynomial time. Specifically, \cite{LL18} gave an algorithm
with complexity exponential in $k$. This bound was improved by \cite{CLS20}
to sub-exponential, namely scaling with $\exp(\tilde{O}(k^{1/2}))$.
The fastest previously known algorithm~\cite{DK20-ag} has
quasi-polynomial complexity in $k$, namely $\poly(d, (k/\eps)^{\log^2(k)})$,
where $\eps$ is the accuracy.

As our main contribution for this problem, we give the first $\poly(d, k)$ time algorithm.
Specifically, we show (see Theorem~\ref{MLRThm}):

\begin{theorem}[Density Estimation for $k$-MLR]\label{MLRThm-inf}
Let $F$ be a $k$-MLR distribution on $\R^{d+1}$ with $B,\sigma \leq 1$,
where $B = \max_i \|\beta_i\|_2$ and $\sigma>0$ is the standard deviation of the Gaussian noise.
There exists an algorithm that
draws $N=\poly(k,d)(1/\eps)^{O(\sigma^{-2})}$ samples from $F$,
runs in $\poly(N, d)$ time, and returns a sampler for a distribution that is $\eps$-close to $F$ in total variation distance.
\end{theorem}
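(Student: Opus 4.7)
The plan is to cast density estimation of the $k$-MLR as implicit moment matching of the regressor distribution, using Proposition~\ref{prop:main-informal} to avoid ever writing down a high-degree tensor. First, I would observe that for the $m$-th Hermite tensor $\mathrm{He}_m(x)$ with respect to $N(0,I)$, samples $(x,y)$ from the target $k$-MLR satisfy
\[ \frac{1}{m!}\,\E\bigl[\, y^m \cdot \mathrm{He}_m(x)\,\bigr] \;=\; \sum_{i=1}^{k} w_i \,\beta_i^{\otimes m} \;=:\; M_m. \]
This is because, conditional on the latent component $i$, $\E[y^m\mid x,i]$ is a polynomial in $\beta_i\cdot x$ of degree exactly $m$, and the only contribution non-orthogonal to $\mathrm{He}_m(x)$ is the leading monomial $(\beta_i\cdot x)^m$. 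The estimator $y^m\mathrm{He}_m(x)/m!$ is a symmetric multilinear polynomial in the samples, so it fits the hypotheses of Proposition~\ref{prop:main-informal}; the bounds $B,\sigma\le 1$ control its variance, so that $(1/\eps)^{O(\sigma^{-2})}$ samples suffice to estimate every $M_m$ for $m$ up to $M:=\Theta(\sigma^{-2}\log(1/\eps))$.

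Invoking Proposition~\ref{prop:main-informal} with this estimator then produces, in total time $\poly(d,k)(1/\eps)^{O(\sigma^{-2})}$, succinct descriptions of $M_1,\ldots,M_M$. These descriptions provide query access: for any $v_1,\ldots,v_m\in \R^d$, we can evaluate $M_m(v_1\otimes\cdots\otimes v_m)=\sum_i w_i\prod_{j=1}^{m}(\beta_i\cdot v_j)$ in $\poly(d,k)$ time per query, even though each $M_m$ has $d^m$ entries.

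Given this implicit moment access, I would then proceed as follows. Estimate $M_2=\sum_i w_i\beta_i\beta_i^\top$ directly by sample averaging (no implicit machinery needed at degree two) and take $V$ to be its top-$k$ eigenspace, which approximately contains every $\beta_i$. Restricting to $V\cong \R^k$ reduces the problem to density estimation of a $k$-MLR whose regressors live in $\R^k$. In this low-dimensional regime I would recover a candidate mixture $\{(\hat w_i,\hat \beta_i)\}_{i\in [k]}$ by a random $1$-dimensional projection strategy: for each of $O(k)$ random directions $v\in V$, the implicit moments $M_m(v^{\otimes m})=\sum_i w_i(\beta_i\cdot v)^m$ for $m\le M$ give enough scalar moments of the $k$-atomic measure $\sum_i w_i\delta_{\beta_i\cdot v}$ to recover it by classical Prony/Hankel methods in $\poly(k/\eps)$ time; combining a few random projections pins down the $k$ vectors $\hat \beta_i\in V$. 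Estimating the noise variance via $\hat\sigma^2=\E[y^2]-\mathrm{tr}(M_2)$, we then output the obvious sampler for the resulting $k$-MLR.

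The most delicate step is the \emph{moment-to-TV} identifiability claim: any two $k$-MLRs whose regressor moment tensors agree to degree $M=\Theta(\sigma^{-2}\log(1/\eps))$ within inverse-polynomial error must be $\eps$-close in total variation distance. I would prove this through a joint Hermite expansion of the MLR density in $(x,y)$, noting that the level-$m$ contribution in $y$ scales like $\sigma^{-m}$ relative to the corresponding regressor moment, so truncating past $m=M$ introduces at most $(B/\sigma)^{O(M)}\le \eps$ error in $L^2$ and hence (after a standard density comparison) in TV. A secondary obstacle is the stability of the $1$-dimensional Prony stitching under the $\eps'$-scale perturbation in the implicitly-accessed moments, but this should follow from a standard perturbation analysis of the same type used in \cite{DK20-ag}.
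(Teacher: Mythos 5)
Your opening identity---that
\[
M_m \;=\; \tfrac{1}{m!}\,\E\bigl[\,y^m\,\mathrm{He}_m(x)\,\bigr]
\;=\; \sum_i w_i\,\beta_i^{\otimes m}
\]
---is correct and is exactly Lemma~\ref{MLRMomentCompLem} in the paper (modulo normalization). But from there you diverge sharply: you attempt \emph{parameter recovery} of the $\beta_i$'s via random one-dimensional projections and Prony/Hankel, whereas the paper never recovers parameters at all. This divergence is where the proposal breaks down.

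The central problem is that Prony-type recovery of a $k$-atomic measure on $\R$ from its first $O(k)$ moments is badly ill-conditioned: when two atoms $\beta_i\cdot v$ and $\beta_j\cdot v$ are separated by $\Delta$, the Hankel system has condition number roughly $\Delta^{-\Omega(k)}$. With no separation hypothesis on the regressors (and density estimation for $k$-MLR explicitly makes none), you would need the projected moments to be accurate to precision $\exp(-\Omega(k))$, but the implicit-moment machinery only delivers $\poly(1/N)$ accuracy with $N=\poly(k,d)(1/\eps)^{O(\sigma^{-2})}$ samples. Even the $1$-d version of the problem is this hard; Moitra--Valiant-style analyses pay an exponential price in $k$ precisely here, which is the wall that the paper's quasi-polynomial predecessors already ran into. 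You also cannot dismiss the cross-referencing of values across different random projections as routine---this is a genuine combinatorial problem when atoms collide---but it is secondary to the conditioning issue. Finally, your stated tail bound ``$(B/\sigma)^{O(M)}\le\eps$'' is in the wrong direction: with $B\le 1$ and $\sigma\le 1$ one may well have $B/\sigma\gg 1$, so this quantity blows up rather than shrinks; the paper's actual estimate is $\|T_n\|_2^2\le n\,\gamma^n$ with $\gamma\le 1-\sigma^2/2<1$, derived from the eigenvalues of the $2\times 2$ block $\begin{psmallmatrix}0&\beta\\\beta^\top&\|\beta\|_2^2+\sigma^2-1\end{psmallmatrix}$, which is what actually justifies the truncation at $n_0=O(\sigma^{-2}\log(1/\eps))$.

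The paper sidesteps all of this by never attempting to localize the $\beta_i$'s. It instead Hermite-expands the density ratio $(F/G)(x)$ directly, $(F/G)(x)=\sum_n\langle T_n,H_n(x)\rangle$, expresses each $T_n$ in terms of the moment tensors $M_{a+2b}$ via an explicit combinatorial formula coming from $\E[H_n(L_{\beta,\sigma})]$, uses Proposition~\ref{prop:main} to approximate each $\langle T_n,H_n(x)\rangle$ in mean square over random $x\sim G$, truncates at $n_0$ using the $\gamma^n$ decay, and then does rejection sampling against the approximate evaluator (after truncating it to $[0,A]$). This avoids any inverse problem whose conditioning depends on atom separation, which is the reason the argument achieves $\poly(k)$ dependence. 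Your proposal would need a fundamentally different workaround for the Prony ill-conditioning to be salvageable.
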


We note that the existence of a learning algorithm for $k$-MLR
with a $\poly(d, k)$ complexity had been a well-known open problem
in the literature. This question was most recently posed as one of the main open problems in~\cite{CLS20};
see the associated talk at STOC 2020~\cite{Chen20-youtube}.
Theorem~\ref{MLRThm-inf} answers this open question in the affirmative.

Some additional remarks are in order.
For the setting that the standard deviation $\sigma$ is at least a positive
universal constant, the complexity of our density estimation algorithm is $\poly(k, d, 1/\eps)$---essentially resolving
the complexity of this learning task. If $\sigma$ is very small (but still non-zero),
for example for $\sigma = \Theta(\eps)$ (a regime considered in \cite{CLS20}),
the complexity of our learning algorithm becomes
$\poly(k, d) 2^{\tilde{O}(1/\eps^2)}$.  Even in this regime, we obtain the
first learning algorithm for $k$-MLR that runs in $\poly(d, k)$ time
for {\em sub-constant} values of $\eps = o(1)$---namely up to
$\eps = 1/\log^{1/2}(dk)$. The fastest previous algorithm~\cite{DK20-ag}
has complexity {\em quasi-polynomial} in $k$ even for constant $\eps$.
An interesting question left open by our work is to
remove the dependence on $\sigma$ from the complexity of our algorithm.

\paragraph{Mixtures of Spherical Gaussians}
A $k$-mixture of spherical Gaussians is any distribution on $\R^d$
of the form $F = \sum_{i=1}^k w_i N(\mu_i,  I)$,
where $\mu_i \in \R^d$ are the unknown mean vectors and $w_i \geq 0$,
with $\sum_{i=1}^k w_i = 1$, are the mixing weights.
Gaussian mixture models are one of the oldest and most extensively
studied latent variable models~\cite{Pearson:94}; see Section~\ref{ssec:related}
for the most relevant prior work.

We study both density estimation and parameter estimation for spherical $k$-GMMs.
In density estimation, the goal is to output a hypothesis
within small total variation distance from the target. In parameter estimation,
the goal is to learn the mean vectors to any desired accuracy---under some necessary
assumptions on the pairwise distance between the component means.

Prior to our work, the fastest density estimation algorithm~\cite{DK20-ag}
had sample and computational complexity $\poly(d) (k/\eps)^{O(\log^2(k))}$.
Here we achieve a polynomial time algorithm
under a boundedness assumption on the component means
(see Theorem~\ref{thm:formal-dens-GMMs}):

\begin{theorem}[Density Estimation for Spherical $k$-GMMs with Bounded Means]\label{thm:inf-dens-GMMs}
There is an algorithm that given $\eps>0$ and $n=\poly(d, k, 1/\eps)$ samples
from a $k$-mixture of spherical Gaussians $F$ on $\R^d$ with component means in a ball of radius
$O(\sqrt{\log(k)})$, it runs in $\poly(n, d)$ time and outputs a
hypothesis distribution $H$ such that with high probability $\dtv(H, F) \leq \eps$.
\end{theorem}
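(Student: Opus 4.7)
The plan is to combine the implicit moment estimation of Proposition~\ref{prop:main-informal} with a Hermite-expansion argument tailored to bounded-means spherical mixtures. The key observation is that the degree-$m$ Hermite-moment tensor of $F$ equals $M_m := \sum_{i=1}^k w_i \mu_i^{\otimes m}$, which is exactly of the form handled by our framework. Indeed, for $x \sim N(\mu, I)$ the probabilist's Hermite tensor satisfies $\E[\mathrm{He}_m(x)] = \mu^{\otimes m}$, so $M_m = \E_{x \sim F}[\mathrm{He}_m(x)]$. The statistic $\mathrm{He}_m(x)$ is a symmetric tensor whose entries are degree-$m$ polynomials in $x$ and admits the ``sum-of-products'' structure required by Proposition~\ref{prop:main-informal}; a standard calculation bounds its coordinate variances by $\poly(d)\cdot R^{O(m)}$, where $R := \max_i \|\mu_i\|_2 = O(\sqrt{\log k})$. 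I would set the truncation level $m^* = C\log(k/\eps)$ and invoke the proposition separately for each $m \le m^*$ to obtain succinct descriptions $\widetilde{M}_m$ with Frobenius error at most some prescribed $\eps_m$, in total time $\poly(d, k, 1/\eps)$.

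Next, I would show that matching the first $m^*$ Hermite moments in Frobenius norm suffices for $\eps$-closeness in total variation. By Parseval in the Hermite basis (with respect to the standard Gaussian $\phi$), for any measure $H$ absolutely continuous w.r.t.\ $\phi$, $\int (F-H)^2/\phi\,dx = \sum_{m\ge 0}\|\hat F_m - \hat H_m\|_F^2 / m!$, and by Cauchy--Schwarz $\dtv(F, H)^2 \le \int (F-H)^2/\phi\,dx$. Since $\|\hat F_m\|_F \le R^m$ by the triangle inequality applied to $\sum_i w_i \mu_i^{\otimes m}$, the tail $\sum_{m > m^*} R^{2m}/m!$ is $o(\eps^2)$ for $m^* = \Theta(R^2 + \log(1/\eps)) = \Theta(\log(k/\eps))$, our chosen truncation. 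Allocating the $\eps_m$'s so that $\sum_m \eps_m^2/m! \le \eps^2$ is straightforward and only inflates the constant $C$.

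Finally, I would produce the hypothesis $H$ as a mixture of spherical Gaussians whose means lie on a sufficiently fine net of the radius-$O(\sqrt{\log k})$ ball, solving a convex feasibility linear program for the mixing weights whose constraints enforce that the Hermite moments of $H$ agree with $\widetilde{M}_m$ up to $\eps_m$ for all $m \le m^*$. Feasibility follows because a rounding of the true $F$ to the net is a valid solution. The main obstacle I expect is precisely this last step: the target tensors $\widetilde{M}_m$ are accessed only through succinct descriptions, so the moment-matching constraints must be formulated and verified without ever materializing the $d^{m^*}$-sized tensors. This is exactly what Proposition~\ref{prop:main-informal} is designed to enable, since its succinct output supports efficient contractions with rank-one tensors $\mu^{\otimes m}$ at candidate means $\mu$; threading this capability through the net-based reconstruction---and bounding the accumulated error---is the technical heart of the argument.
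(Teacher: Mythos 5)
Your first two steps track the paper closely: you correctly identify that the Hermite-moment tensors of $F$ are exactly of the form $M_m=\sum_i w_i\mu_i^{\otimes m}$ (up to the normalization $1/\sqrt{m!}$; see Lemma~\ref{lem:Hermite-GMMs}), that $\E_{Y}[H_m(x,Y)]=H_m(x)$ gives a suitable unbiased estimator (see Definition~\ref{def:Hxy}), and that truncating the Hermite expansion of $F/G$ at degree $n_0 = O(\log(k/\eps))$ incurs $L_2$-error at most $\eps/2$ by the tail bound $\|T_n\|_2^2 \le R^{2n}/n!$ with $R = O(\sqrt{\log k})$. That part matches the paper.

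The gap is the final step --- turning the succinct tensor estimates into an actual hypothesis distribution --- and it is not a deferred technicality but a step where your proposed route breaks. A $\delta$-net of the ball of radius $O(\sqrt{\log k})$ has $(\sqrt{\log k}/\delta)^{\Omega(d)}$ points (or, after the standard PCA reduction to $k$ dimensions, $(\sqrt{\log k}/\delta)^{\Omega(k)}$ points); for any resolution $\delta$ fine enough to make the tail/net rounding argument go through, this is super-polynomial, so the resulting LP has too many variables to be written down, let alone solved, in $\poly(d,k,1/\eps)$ time. The fact that a rounding of $F$ to the net is a feasible point does not help, because enumerating the net is itself the bottleneck. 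The paper avoids the reconstruction problem entirely: rather than fit a parametric hypothesis, it uses Proposition~\ref{prop:main} as an \emph{evaluation oracle} for the truncated $(\tilde F/G)(x)$ at a random query $x\sim G$ (this is exactly what the ``single sample $\mathcal{X}\sim X$'' generality in the proposition is for, as emphasized in Remark~\ref{rem:gen}). It then clips this approximation $R(x)$ to $[0,A]$ for $A=\poly(k/\eps)$, arguing that $\E_{x\sim G}[\max((F/G)(x)-A,0)]\le\eps$ via a tail bound on $\exp(\mu_i\cdot x)$, and performs rejection sampling: draw $x\sim G$ and accept with probability $R'(x)/A$. Since $\|R'-F/G\|_{1,G}$ is small, the accepted distribution is $\eps$-close to $F$, and the expected number of iterations is $O(A)=\poly(k/\eps)$. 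Your writeup never reaches a workable version of this step; replacing the net-LP by the rejection-sampling reduction is what is actually needed.
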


\noindent This is the first polynomial-time algorithm for density estimation
of a natural subclass of spherical mixtures. Theorem~\ref{thm:inf-dens-GMMs} makes significant progress
towards the resolution of a longstanding open problem in the  literature, most recently posed
in the STOC 2022 talk~\cite{Liu22-youtube} associated with~\cite{LL21-opt}.

We now switch our attention to parameter estimation.
The state-of-the-art result prior to our work is due to~\cite{LL21-opt}.
For the case of uniform mixtures\footnote{For simplicity of exposition, we focus on the uniform case here.
Both the results of \cite{LL21-opt} and ours hold for arbitrary weights with modified guarantees.},
they
gave a $\poly(d, k)$ time algorithm under the assumption that the minimum
pairwise mean separation $s = \min_{i\neq j} \|\mu_i-\mu_j\|_2$ satisfies
$s \gg (\log k)^{1/2+c}$ for any small constant $c>0$.
The information-theoretically optimal separation under which this task
is solvable with $\poly(d, k)$ samples is $s \gg (\log k)^{1/2}$~\cite{RV17-mixtures}.
We give a $\poly(d, k, 1/\eps)$ time parameter estimation algorithm
under the optimal separation, with the additional assumption that the
largest pairwise distance is comparable to the smallest.
Specifically, we establish the following (see Theorem~\ref{thm:formal-cluster-GMMs}):

\begin{theorem}[Clustering Mixtures of Spherical Gaussians under Optimal Separation] \label{thm:inf-cluster-GMMs}
Let $F = \sum_{i=1}^k (1/k) N(\mu_i,I)$
and $\eps > 0$.
Suppose that the minimum separation $s := \min_{i\neq j}\|\mu_i - \mu_j\|_2$
is at least a sufficiently large constant multiple of
$\sqrt{\log(k/\eps)}$ and furthermore that
$\max_{i\neq j} \|\mu_i - \mu_j\|_2 = O(\min_{i\neq j} \|\mu_i - \mu_j\|_2).$
There exists an algorithm that given
$n=\poly(d, k, 1/\eps)$ i.i.d.\ samples from $F$, runs in $\poly(n, d)$ time, and
outputs estimates $\tilde \mu_i$ such that with high probability,
for some permutation $\pi$ of $[k]$, it holds
$\|\tilde \mu_i - \mu_i\|_2 \leq \eps$, for all $i\in [k]$.
\end{theorem}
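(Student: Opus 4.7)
The plan is to reduce clustering to approximate recovery of the degree-$m$ Hermite moment tensor $M_m = (1/k)\sum_{i=1}^k \mu_i^{\otimes m}$ with $m = \Theta(\log(k/\eps))$, use Proposition~\ref{prop:main-informal} to obtain this tensor implicitly, extract the individual means by a tensor-power / deflation scheme, and close with a sample-based cluster-and-average refinement step.

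First I would exploit that for $X \sim N(\mu, I)$ the expectation of the probabilist's Hermite tensor $H_m(X)$ equals $\mu^{\otimes m}$, so averaging $H_m$ over samples from $F$ is an unbiased estimator of $M_m$ whose variance is polynomial in $m$ and in the mean norms (which are $O(\sqrt{\log(k/\eps)})$ after a rough recentering). This estimator has exactly the form required by Proposition~\ref{prop:main-informal}, which in $\poly(d,k,1/\eps)$ time returns a succinct representation $\widetilde M_m$ satisfying $\|\widetilde M_m - M_m\|_F \leq \tau$ for $\tau$ inverse-polynomial of our choice. Because the framework accommodates $m = \omega(1)$, I can take $m = \Theta(\log(k/\eps))$, and this is exactly what unlocks the optimal separation threshold of \cite{RV17-mixtures}: previous $\poly(d,k)$-time methods such as \cite{LL21-opt} were stuck at bounded $m$ and therefore needed the extra $(\log k)^c$ slack.

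Next I would run a power-method-style tensor decomposition that accesses $\widetilde M_m$ only through contractions $\widetilde M_m(v,\dots,v,\cdot)$, which the succinct representation lets us compute in $\poly(d,k)$ time. For a direction $v$ well-correlated with some $\mu_j$ we have $M_m(v,\dots,v,\cdot) = (1/k)\sum_i \la \mu_i, v\ra^{m-1}\mu_i$, and the $(m-1)$-st power amplifies the lead term whenever the relative gap among the inner products $\la \mu_i,v\ra$ is $\Omega(1/m)$. Under the assumed separation $\min_{i\neq j}\|\mu_i - \mu_j\|_2 = \Omega(\sqrt{\log(k/\eps)})$ together with the balancedness assumption $\max_{i\neq j}\|\mu_i - \mu_j\|_2 = O(\min_{i\neq j}\|\mu_i - \mu_j\|_2)$, random initializations produce such gaps with probability at least $\Omega(1/k)$; iterating with deflation then recovers constant-accuracy estimates $\widehat\mu_i$ for all $k$ components. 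Finally, given the $\widehat\mu_i$, one assigns each sample to its nearest $\widehat\mu_i$ (the separation of $\sqrt{\log(k/\eps)}$ forces correct classification of all but an $\eps$-fraction of samples) and sets $\tilde\mu_i$ to the empirical cluster mean; standard Gaussian concentration yields $\|\tilde\mu_i - \mu_i\|_2 \leq \eps$ with $\poly(d,k,1/\eps)$ samples.

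The main obstacle will be carrying out the tensor-power / deflation step robustly against the Frobenius error $\tau$ on $\widetilde M_m$. Since individual entries of $M_m$ can be as large as $(\log(k/\eps))^{m/2}$ and $m$ is logarithmic, the signal we need to detect in each contraction is much smaller than the worst-case magnitude of the tensor, so $\tau$ must be driven polynomially small and we must certify that deflation does not amplify errors geometrically across $k$ rounds. The balancedness hypothesis is precisely what makes the rank-$k$ tensor $M_m$ well enough conditioned for this to go through: it prevents any single component from absorbing all of the tensor's mass and thereby masking the others under deflation, and it ensures that the typical gap obtained from a random initialization is $\Omega(1/m)$ rather than exponentially small.
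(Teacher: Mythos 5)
Your proposal takes a genuinely different route from the paper's, and it contains a gap that I don't think can be repaired without a substantial new argument. The paper never attempts to decompose $M_m$: Theorem~\ref{thm:formal-cluster-GMMs} instead uses Proposition~\ref{prop:main} as a \emph{pairwise same-component classifier}. It works with the difference tensor $M_t = \sum_{i,j} w_i w_j (\mu_i-\mu_j)^{\otimes t}$, obtained from the difference distribution $X' = (Y-Y')/\sqrt2$ via $M_t = \sqrt{t!}\,\E[H_t(X',\cdot)]$, and for each pair of samples $(x,x')$ evaluates the scalar $P_t = \langle H_t((x-x')/\sqrt2), M_t\rangle$. The growth of the Hermite polynomial $h_t$ on inputs of size $\gtrsim s/\sqrt2$ versus $O(\sqrt{\log(k/\tau)})$ makes $P_t$ exponentially (in $t$) larger when $x,x'$ come from different components, and this classifier directly yields the clustering; estimating the cluster means afterward is routine. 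No tensor decomposition occurs anywhere.

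The gap in your approach is in the claim that Proposition~\ref{prop:main-informal} ``returns a succinct representation $\widetilde M_m$ satisfying $\|\widetilde M_m - M_m\|_F \leq \tau$ for $\tau$ inverse-polynomial of our choice,'' and that you can then query it via contractions $\widetilde M_m(v,\dots,v,\cdot)$ in a robust tensor-power/deflation loop. That is not what the framework delivers. It gives approximations to \emph{scalar inner products} $\langle T, M_m\rangle$ where $T$ must itself be the expectation of a sequential tensor computation, and the error bound carries the factor $(1+\max_i\|v_i\|_2)^{2m}$, where $v_i$ are the vectors in the definition of $M_m$. With $v_i = \mu_i$ of norm $\Theta(\sqrt{\log(k/\eps)})$ and $m = \Theta(\log(k/\eps))$ this factor is $(\log(k/\eps))^{\Theta(\log(k/\eps))}$, i.e.\ quasi-polynomial, so the naive error budget does not close. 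The paper handles the analogous issue (in the density-estimation proof) by rescaling $v_i \mapsto \mu_i/\sqrt{\log k}$ so that $\|v_i\|_2 = O(1)$, and by using the extended Hermite tensor $H_n(x,y)$ (Definition~\ref{def:Hxy}) so that the estimator is both an STC of size $O(n)$ and has second moment $O(\|\mu\|_2^2/n+1)^n = \poly(k)$ by Lemma~\ref{lem:cov-hxy}; you invoke neither device. Moreover, the internal representation is a $k$-dimensional projection $\Phi_m(M_m)$ via a recursive pseudo-projection; turning this into an explicit tensor-power iteration with deflation — where the deflated objects $\mu_j^{\otimes m}$ must be subtracted from a tensor you cannot materialize, and where the errors compound over $k$ rounds of deflation — would require a nontrivial new argument that is not supplied. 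You flag this obstacle yourself, but the burden of the theorem is precisely to overcome it, and the balancedness hypothesis alone does not do so.

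In short, even if your outline could in principle be fleshed out, it would constitute a different and considerably more involved algorithm than the paper's, and as stated the critical step (obtaining a decomposable explicit approximation of $M_m$ and deflating it) does not follow from Proposition~\ref{prop:main-informal}.
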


\noindent
Prior work~\cite{DKS18-list, HL18-sos, KSS18-sos, DK20-ag}
had given quasi-polynomial time parameter estimation algorithms
under the optimal separation. In the same context, the polynomial-time
algorithm of \cite{LL21-opt} succeeds under the stronger condition
that $s \gg \log(k/\eps)^{1/2+c}$, for a constant $c>0$.
We note that they do not require the extra assumption
that the largest pairwise distance is comparable to the smallest.
To remove this, they developed a complicated recursive clustering
argument (see Sections 10 and 11 of \cite{LL21-opt}),
which we expect can also be applied mutatis mutandis to our setting.

\paragraph{Learning One-hidden-layer Neural Networks}
A one-hidden-layer neural network
is any function $F: \R^d \to \R$ of the form
$F(x) = \sum_{i=1}^k w_i \sigma(v_i\cdot x)$ for some $w_i \in \R$
and unit vectors $v_i \in \R^d$, where $\sigma: \R \to \R$ is a known activation.
In this paper, we focus on the setting that the weights $w_i$ are positive
(for general weights, recent work has given evidence of computational hardness ruling out $\poly(d, k)$ runtime;
see Remark~\ref{rem:general-NNs}).

Let $\mathcal{C}_{\sigma, d, k}$ be the corresponding class of functions.
The PAC learning problem for $\mathcal{C}_{\sigma, d, k}$ is the following:
The input is a multiset of i.i.d.\ labeled examples $(x, y)$, where $x \sim N(0,I)$ and $y = F(x)$,
for some $F \in \mathcal{C}_{\sigma, d, k}$, where the activation $\sigma$ is known to the algorithm.
The goal is to output a hypothesis $H: \R^d \to \R$ that with high probability
is close to $F$ in $L_2$-norm.

A prominent special case corresponds to $\sigma(u) = \relu(u) \eqdef \max\{ 0, u\}$.
The learnability of one-hidden-layer ReLU networks has been extensively
studied over the past decade;
see, e.g.,~\cite{GeLM18, BakshiJW19, DKKZ20, DK20-ag, ChenKM21,  CDGKM23, ChenN24, DK23-schur}.

A central question in the foundations of deep learning is whether this class
is learnable in polynomial time,
when the marginal distribution is Gaussian. Quoting \cite{ChenKM21} 
(see also Chen's thesis~\cite{Chen-thesis}):
\begin{quote}
``Ideally, we would like an algorithm with sample complexity and running time
that is polynomial in all the relevant parameters. Even for learning arbitrary sums of ReLUs, [...],
it remains {\bf a major open question} to obtain a polynomial-time algorithm [...].''
\end{quote}

We give a general algorithm for $\mathcal{C}_{\sigma, d, k}$ that succeeds for a wide
class of activations $\sigma$, including ReLUs and cosine activations. 
Roughly speaking, the only properties required on $\sigma$ is that it has bounded fourth moment
and non-vanishing even-degree Hermite coefficients. Specifically, we prove the following
(see Theorem~\ref{thm:formal-sum-ReLUs}):

\begin{theorem}[PAC Learning $\mathcal{C}_{\sigma, d, k}$] \label{thm:inf-sum-ReLUs}
Suppose that the activation $\sigma$ has $\|\sigma\|_4 = O(1)$.
Letting $c_{\sigma,t} := \E[\sigma(G)h_t(G)]$ be the $t^{th}$ Hermite coefficient of $\sigma$,
suppose additionally that for some $\eps,\delta>0$ and some $n \in \Z_+$ it holds:
(i) $\sum_{t>n} c_{\sigma,t}^2 < \eps^2/4$, and (ii) $|c_{\sigma,t}| \geq \delta$
for all $1\leq t \leq 2n$ unless $t$ is odd and $c_{\sigma,t}=0$.

Then there is an algorithm that given $\eps$, $\sigma, k, d$, and access to examples $(x, F(x))$,
where $x \sim N(0, I)$ and $F \in \mathcal{C}_{\sigma, d, k}$, it runs in time
$\poly(dk/(\eps\delta)) 2^{O(n)}$, and outputs a hypothesis $H: \R^d \to \R$ such that with high probability
$\|H-F\|_2  \leq \eps$.
\end{theorem}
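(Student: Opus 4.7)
The plan is to reduce PAC learning of the target $F(x) = \sum_{i=1}^k w_i \sigma(v_i \cdot x)$ to implicit estimation of the symmetric rank-$k$ moment tensors $T_t := \sum_{i=1}^k w_i v_i^{\otimes t}$ via Proposition~\ref{prop:main-informal}, followed by tensor decomposition. First I would Hermite-expand $\sigma(u) = \sum_t c_{\sigma,t} h_t(u)$ so that $F(x) = \sum_t c_{\sigma,t} \sum_i w_i h_t(v_i \cdot x)$. By orthonormality of the family $\{h_t(v \cdot x)\}_t$ under $x \sim N(0,I)$ and assumption (i), truncating to Hermite degree at most $n$ costs at most $\eps/2$ in $L_2$ (up to a universal factor of $\sum_i w_i$, which is $O(1)$ after normalizing $F$), so it suffices to recover the Hermite content of $F$ up to degree $n$.

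Next, I would exploit the identity $\E_{x \sim N(0,I)}[h_t(v \cdot x)\,\mathrm{He}_t(x)] = v^{\otimes t}$ for the appropriately normalized symmetric multivariate Hermite tensor $\mathrm{He}_t$. This gives $\E[F(x)\,\mathrm{He}_t(x)] = c_{\sigma,t}\,T_t$, so $F(x)\,\mathrm{He}_t(x)/c_{\sigma,t}$ is a single-sample unbiased estimator of $T_t$. Because $\mathrm{He}_t(x)$ expands as an explicit low-complexity polynomial in $x$ and the identity tensor, each realization takes exactly the simple-sum-of-tensor-products ``nice form'' required by Proposition~\ref{prop:main-informal}; using $\|\sigma\|_4 = O(1)$, Cauchy--Schwarz, and standard Gaussian moment bounds, the variance is bounded by $\poly(dk)\,2^{O(t)}/\delta^2$. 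Then, for each even $t \in \{2, 4, \ldots, 2n\}$---for which assumption (ii) guarantees $|c_{\sigma,t}| \geq \delta$---invoking Proposition~\ref{prop:main-informal} yields a succinct approximation $\widetilde T_t$ of $T_t$ in time $\poly(dk/(\eps\delta))\,2^{O(n)}$.

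With the implicit tensors $\widetilde T_2, \widetilde T_4, \ldots, \widetilde T_{2n}$ in hand, I would extract the parameters $(w_i, v_i)$ via tensor decomposition. Unfolding $T_{2n}$ as a $d^n \times d^n$ PSD matrix yields a rank-$k$ matrix whose leading eigenvectors are proportional to $v_i^{\otimes n}$, from which the unit vectors $v_i$ can be read off (and then the weights $w_i$ recovered from $\widetilde T_2$). All required spectral primitives---matrix-vector products and contractions against test vectors---are supported by the succinct representation provided by Proposition~\ref{prop:main-informal}, so a Jennrich- or SVD-style decomposition runs in $\poly(d, k, 2^n)$ time. I would then output $H(x) := \sum_i \widetilde w_i\, \sigma(\widetilde v_i \cdot x)$ and bound $\|H - F\|_2$ by triangle inequality: the high-degree tail contributes at most $\eps/2$ by (i), while the low-degree error is controlled by $L_2$-Lipschitzness of the map $(w, v) \mapsto w\, \sigma(v \cdot x)$ composed with the tensor-perturbation bounds on $\widetilde T_t - T_t$.

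The main obstacle I anticipate is the stability of the tensor decomposition executed through the succinct representation: even granted a good Frobenius-norm estimate of each $\widetilde T_t$, converting this into a componentwise approximation of every $v_i$ to accuracy sufficient for $L_2$-closeness of $H$ requires carefully propagating $1/\delta$ factors through a tensor power method while keeping the total runtime $\poly(dk/(\eps\delta))\,2^{O(n)}$. The condition that $|c_{\sigma,t}| \geq \delta$ for \emph{all} non-trivial even $t \leq 2n$---rather than merely at the top degree---is what should make this feasible, by granting robust access to an entire even-degree Hermite moment sequence of the same positive-weight mixture and thereby enabling a stable simultaneous-diagonalization argument analogous in spirit to (but strictly generalizing) the Gaussian-clustering approach of~\cite{LL21-opt}.
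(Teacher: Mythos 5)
Your proposal diverges from the paper's argument in a substantive way, and the divergence introduces a real gap. After using Proposition~\ref{prop:main-informal} to obtain implicit access to the tensors $T_t = c_{\sigma,t}\sum_i w_i v_i^{\otimes t}$, you propose to \emph{decompose} these tensors to recover the parameters $(w_i, v_i)$ and output the proper hypothesis $H(x) = \sum_i \widetilde{w}_i\,\sigma(\widetilde{v}_i\cdot x)$. This is the harder problem of parameter estimation, and it fails in the setting of the theorem because no separation or conditioning assumption is made on the $v_i$. Concretely, your claim that the leading eigenvectors of the $d^n\times d^n$ unfolding of $T_{2n}$ are ``proportional to $v_i^{\otimes n}$'' is false unless the $v_i^{\otimes n}$ are (near-)orthogonal; in general the eigenvectors are mixtures. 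Two nearby components $v_i \approx v_j$ cannot be resolved from any finite set of moment tensors, yet they must still be fit in $L_2$ -- which is possible precisely because a good $L_2$ fit does not require parameter identification. Stable Jennrich- or SVD-style decomposition needs exactly the kind of full-rank/condition-number hypothesis that the paper (following~\cite{GeLM18}) explicitly contrasts with the PAC-learning setting, and having access to the whole even-degree moment sequence does not repair this: $T_2, T_4, \ldots, T_{2n}$ share the same ill-conditioned directions.

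The paper avoids decomposition altogether. It directly outputs the \emph{truncated Hermite expansion} of $F$ as a compressed polynomial: the hypothesis is $\widetilde{H}(x) = \sum_{m=0}^{n} \widetilde{\langle T_m, H_m(x)\rangle}$, where each term is produced by invoking Proposition~\ref{prop:main} with the extended Hermite tensor $H_m(x,y)$ (Definition~\ref{def:Hxy}, Lemma~\ref{lem:cov-hxy}) serving as both the unbiased estimator for $M_t$ (via $F(x)H_t(x,y)/c_{\sigma,t}$) and as the query tensor (via $H_m(x) = \E_y[H_m(x,y)]$). The error bound then follows from orthogonality and assumption (i) on the tail: no parameters are ever recovered and the learner is improper. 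Your first two paragraphs (Hermite expansion, unbiased estimators for the moment tensors, the role of assumption (ii) in controlling $1/|c_{\sigma,t}|\le 1/\delta$, second-moment bounds via $\|\sigma\|_4=O(1)$ and hypercontractivity) match the paper's preparatory steps closely; the third paragraph, the decomposition, is where the argument breaks. If you replace that step with ``output the implicitly-represented low-degree polynomial and evaluate it via the oracle from Proposition~\ref{prop:main},'' the proof goes through exactly as in the paper.
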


For the special case where $\sigma(u) = \relu(u)$, this leads to the following 
(see also Corollary~\ref{cor:sum-relu}):

\begin{corollary} \label{cor:sum-relu-intro}
There is a PAC learning algorithm for positive linear combinations of $k$ ReLUs on $\R^d$
with complexity $\poly(d,k) 2^{\poly(1/\eps)}$.
\end{corollary}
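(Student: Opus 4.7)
The plan is to derive the corollary directly from Theorem~\ref{thm:inf-sum-ReLUs} by instantiating it with $\sigma(u) = \relu(u) = \max\{0, u\}$ and carefully tracking the parameters $n$ and $\delta$ as functions of $\eps$.

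First, I would verify the $L_4$ hypothesis: a one-line Gaussian computation gives $\|\relu\|_4^4 = \Ex_{G \sim N(0,1)}[\max(0,G)^4] = (1/2)\Ex[G^4] = 3/2$, so $\|\relu\|_4 = O(1)$ as required. Next, I would invoke the standard (and well-known) Hermite expansion of ReLU under the Gaussian measure: writing $c_{\relu,t} = \Ex[\relu(G) h_t(G)]$ with the orthonormal probabilists' normalization, one has $c_{\relu,0} = 1/\sqrt{2\pi}$ and $c_{\relu,1} = 1/2$; the coefficient $c_{\relu,t}$ vanishes for every odd $t \geq 3$; and for every even $t \geq 2$ an explicit formula together with Stirling's approximation gives $|c_{\relu,t}| = \Theta(t^{-5/4})$, in particular all such coefficients are nonzero.

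With these two facts in hand, the choice of parameters is immediate. For hypothesis (i) of Theorem~\ref{thm:inf-sum-ReLUs}, we need $\sum_{t > n} c_{\relu,t}^2 < \eps^2/4$. Since only the even terms contribute and $c_{\relu,t}^2 = \Theta(t^{-5/2})$, the tail behaves as $\Theta(n^{-3/2})$, so taking $n = C\, \eps^{-4/3}$ for a sufficiently large absolute constant $C$ suffices. For hypothesis (ii), we need $|c_{\relu,t}| \geq \delta$ for every $1 \leq t \leq 2n$ outside the odd-zero exceptions; the smallest nonzero coefficient in this range is $|c_{\relu,2n}| = \Theta((2n)^{-5/4}) = \Theta(\eps^{5/3})$, so it is enough to set $\delta = c\, \eps^{5/3}$ for some small constant $c > 0$.

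Plugging these into the complexity bound $\poly(dk/(\eps\delta))\, 2^{O(n)}$ of Theorem~\ref{thm:inf-sum-ReLUs} gives
\[
\poly(d, k, 1/\eps) \cdot 2^{O(\eps^{-4/3})} \;=\; \poly(d,k) \cdot 2^{\poly(1/\eps)},
\]
which is exactly the claimed bound. I do not expect any serious obstacle here: the general theorem does all of the algorithmic work, and the corollary reduces to a quantitative bookkeeping of the ReLU Hermite tail. The only mildly delicate point is citing (or briefly rederiving) the $t^{-5/4}$ decay of the even Hermite coefficients of $\relu$, which simultaneously controls both the truncation level $n$ and the lower bound $\delta$; once this is recorded, the rest is a mechanical substitution.
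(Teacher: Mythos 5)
Your proof is correct and takes essentially the same route as the paper: verify $\|\relu\|_4 = O(1)$, cite the fact that $c_{\relu,t}=0$ for odd $t>1$ and $|c_{\relu,t}|=\Theta(t^{-5/4})$ for even $t$, and plug the resulting $n$ and $\delta$ into Theorem~\ref{thm:inf-sum-ReLUs}. One small remark: your bookkeeping gives $n = \Theta(\eps^{-4/3})$ (from $\sum_{t>n} t^{-5/2} = \Theta(n^{-3/2}) \le \eps^2/4$), whereas the paper states $n = O(\eps^{-2/3})$; your exponent appears to be the correct one, though both yield the stated $\poly(d,k)\, 2^{\poly(1/\eps)}$ bound. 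You also explicitly track $\delta = \Theta(\eps^{5/3})$, which the paper leaves implicit.
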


Corollary~\ref{cor:sum-relu-intro} makes significant progress
towards resolving the above open problem~\cite{ChenKM21, Chen-thesis}.
Specifically, it gives the first learning algorithm for sums of ReLUs
that runs in $\poly(d, k)$ time for {\em sub-constant} values of $\eps = o(1)$---namely up to
$\eps = 1/\log^c(dk)$, for some constant $c>0$.
The fastest previous algorithm~\cite{DK20-ag}
has sample and computational complexity $\poly(d) (k/\eps)^{O(\log^2(k))}$, i.e.,
quasi-polynomial in $k$ even for $\eps = \Theta(1)$.
On the other hand, the running time of the \cite{DK20-ag} algorithm is
better than ours for $\eps$ very small (e.g., $1/\poly(d, k)$).

Our second application is for positive linear combinations of periodic activations, 
in particular cosine activations. Such periodic activations are of particular interest in signal processing and computer vision, 
where they have been empirically observed to be more accurate that ReLU networks 
in representing natural signals; 
see, e.g.,~\cite{SitzmannMBLW20, MildenhallSTBRN20,RaiNMZYYFMPRABC21, VargasPHA24}. 


For $\sigma(u) = \cos(\gamma u)$, where $\gamma>0$ is a parameter, 
we obtain the following (see Corollary~\ref{cor:sum-cos}):

\begin{corollary} \label{cor:sum-cos-intro}
For $\sigma(u) = \cos(\gamma u)$, for some frequency parameter $\gamma > 0$, 
there is a PAC learning algorithm for $\mathcal{C}_{\sigma,d,k}$ 
with complexity $2^{O(1/\gamma^2)}\poly(dk/\eps)$.
\end{corollary}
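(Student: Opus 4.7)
The plan is to derive Corollary~\ref{cor:sum-cos-intro} as a direct consequence of Theorem~\ref{thm:inf-sum-ReLUs} applied to the activation $\sigma(u) = \cos(\gamma u)$, by verifying each of its hypotheses and tracking the resulting parameters. First I would dispatch the $L^4$ bound: since $|\cos(\cdot)| \leq 1$ pointwise, we trivially have $\|\sigma\|_4 \leq 1$, satisfying the $\|\sigma\|_4 = O(1)$ condition. Second, I would compute the Hermite expansion of $\sigma$ using the generating function identity for the normalized Hermite polynomials, $\sum_{t \ge 0} (\xi^t/\sqrt{t!})\, h_t(u) = e^{\xi u - \xi^2/2}$. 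Setting $\xi = i\gamma$ and taking real parts on both sides yields
$$\cos(\gamma u) = e^{-\gamma^2/2} \sum_{j \ge 0} (-1)^j \frac{\gamma^{2j}}{\sqrt{(2j)!}}\, h_{2j}(u),$$
so $c_{\sigma,t} = 0$ for every odd $t$ and $|c_{\sigma,2j}| = e^{-\gamma^2/2} \gamma^{2j}/\sqrt{(2j)!}$ for every $j \ge 0$. The vanishing of the odd-degree coefficients aligns exactly with the ``unless $t$ is odd and $c_{\sigma,t} = 0$'' escape clause in hypothesis (ii) of the theorem.

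The next step is to verify hypothesis (i). The key observation is that $c_{\sigma,2j}^2 = e^{-\gamma^2}\gamma^{4j}/(2j)!$ equals the probability mass function of $\mathrm{Poisson}(\gamma^2)$ evaluated at $2j$, so
$$\sum_{t > n} c_{\sigma,t}^2 \;\le\; \Pr_{X \sim \mathrm{Poisson}(\gamma^2)}[X > n].$$
A standard Chernoff-type tail bound for the Poisson distribution then supplies a cutoff $n = O(\gamma^2 + \log(1/\eps))$ for which this tail drops below $\eps^2/4$.

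For hypothesis (ii), I would exploit the fact that the sequence $(|c_{\sigma,2j}|)_{j \ge 1}$ is log-concave in $j$ (it is essentially a reparameterized Poisson density), so its minimum over $j \in \{1, 2, \ldots, n\}$ is attained at one of the two endpoints. This yields a lower bound of the form $\delta = \Omega(\min(e^{-\gamma^2/2}\gamma^2,\, e^{-\gamma^2/2}\gamma^{2n}/\sqrt{(2n)!}))$. Finally, I would plug these values of $n$ and $\delta$ into the complexity bound $\poly(dk/(\eps \delta)) \cdot 2^{O(n)}$ supplied by Theorem~\ref{thm:inf-sum-ReLUs}. The main bookkeeping task is to verify that, once the $\gamma$-dependent factors are collected, the resulting expression simplifies to the advertised $2^{O(1/\gamma^2)}\poly(dk/\eps)$; this reduces to an algebraic calculation that hinges on how the small-$j$ coefficient $|c_{\sigma,2}| = \Theta(\gamma^2)$ interacts with the chosen tail-cutoff $n$, which is where the dependence on $1/\gamma^2$ in the exponent enters.
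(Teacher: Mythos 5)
Your plan mirrors the paper's proof of the detailed version, Corollary~\ref{cor:sum-cos}: compute $c_{\sigma,t}$, check the two hypotheses of Theorem~\ref{thm:formal-sum-ReLUs}, and plug in the resulting $n$ and $\delta$. Your derivation of $c_{\sigma,2j} = (-1)^j e^{-\gamma^2/2}\gamma^{2j}/\sqrt{(2j)!}$ (and $c_{\sigma,t}=0$ for odd $t$) via the Hermite generating function $\sum_{t\ge 0} (\xi^t/\sqrt{t!})\,h_t(u) = e^{\xi u - \xi^2/2}$ is the same computation the paper performs by evaluating $\frac{1}{\sqrt{2\pi}}\int h_t(x)e^{\alpha x}e^{-x^2/2}\,dx = e^{\alpha^2/2}\alpha^t/\sqrt{t!}$ and analytically continuing to $\alpha = i\gamma$; the Poisson-pmf interpretation of $c_{\sigma,2j}^2$ and the log-concavity argument for locating the minimum of $|c_{\sigma,2j}|$ are tidy ways to organize the tail and lower-bound estimates that the paper leaves implicit.

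One substantive point deserves flagging, and it is not a defect in your reasoning. Your tail cutoff $n = O(\gamma^2 + \log(1/\eps))$ is what the Poisson Chernoff bound actually gives, whereas the paper's proof writes $n = O(1/\gamma^2 + \log(1/\eps))$, and the corollary statement advertises $2^{O(1/\gamma^2)}\poly(dk/\eps)$. If you execute the final bookkeeping step you describe with your (correct) parameters, you will obtain $2^{O(\gamma^2)}\poly(dk/\eps)$, not $2^{O(1/\gamma^2)}\poly(dk/\eps)$. This is almost certainly what the authors intended: the discussion immediately after the corollary says the complexity ``blows up as $\gamma$ goes to infinity'' and cites a matching SQ lower bound of $\exp(\Omega(\gamma^2))$, both of which are consistent only with an upper bound of $2^{O(\gamma^2)}$. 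So you should not try to force your final algebra to reproduce the stated exponent $1/\gamma^2$---and in particular the heuristic in your last sentence, that the $1/\gamma^2$ exponent ``enters'' through the interaction of $|c_{\sigma,2}|=\Theta(\gamma^2)$ with the cutoff, does not actually hold up, since $\poly(1/\delta)$ with $\delta = \Theta(\gamma^2)$ contributes only a polynomial (not exponential) factor in $1/\gamma$. The $\gamma^2$ you would naturally obtain is the right answer; the discrepancy is a typo in the target statement, which your proposed bookkeeping step, if carried out, would have surfaced rather than resolved.
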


\noindent While the dependence on $\gamma$ in our algorithm is exponential, 
this is inherent due to matching computational lower 
bounds~\cite{SongVW017, SongZB21, DiakonikolasKRS23} (even for $k=1$). 
Note that for $\gamma \ll 1/\sqrt{\log(dk/\eps)}$, our algorithm runs in 
$\poly(dk/\eps)$ time.

\begin{remark} \label{rem:general-NNs}
{\em For one-hidden-layer ReLU networks with {\em general} (i.e., positive or negative)  weights,
the fastest known algorithms take time $(d/\eps)^{\poly(k)}$~\cite{CN23, DK23-schur}.
Moreover, cryptographic hardness results~\cite{LZZ24} rule out
the existence of an algorithm with $\poly(d, k)$ runtime.}
\end{remark}


\subsection{Main Result: General Implicit Moment Tensor Estimation} \label{ssec:intro-main}

All aforementioned learning results are obtained as
applications of a general implicit moment tensor computation algorithm that we design,
which is the focus of this section.
We start by describing our framework, under which such an algorithm is possible,
followed by an informal statement of our main result (Proposition~\ref{prop:main-informal})
and a comparison with the most relevant prior work~\cite{LL21-opt}.

\vspace{-0.3cm}

\paragraph{Background on Tensors}
In order to proceed, we require basic background on tensors
(see Section~\ref{sec:prelims} for more details).
For a vector $x \in \R^d$, we denote by $x^{\otimes m}$ the $m$-th order tensor power of $x$.
Let $\otimes$ denote the tensor/Kronecker product between vector spaces.
Fix some $t \in \Z_+$. For vector spaces $V_i$, $i \in [t]$,
an order-$t$ tensor is an element $A \in \bigotimes_{i=1}^t V_i$.
If $A, B \in \bigotimes_{i=1}^t V_i$, for some inner product spaces $V_i$, $i \in [t]$,
then we use  $\langle A, B \rangle$ to denote the inner product of $A$
and $B$ {\em induced by the inner products on $V_i$}.
Namely, using the unique inner product structure on $\bigotimes_{i=1}^t V_i$
so that
$\langle v_1\otimes \cdots \otimes v_t, w_1\otimes \cdots \otimes w_t\rangle
= \prod_{i=1}^t \langle v_i, w_i\rangle$ for all vectors $v_i,w_i \in V_i$.
For the special case that $V_i = \R^{d_i}$, $d_i \in \Z_+$,
a tensor $A$ is defined by a $t$-dimensional array with real entries $A_{\alpha}$,
where $\alpha = (\alpha_1, \ldots, \alpha_t)$ with $\alpha_i \in [d_i]$.
For tensors $A, B$ of order $t$ with the same dimensions, the inner product
$\langle A, B \rangle$ is the entrywise inner product of the corresponding arrays.


\paragraph{Implicit Moment Tensor Estimation}
Roughly speaking, our algorithmic result shows that it is possible to efficiently
compute a {\em succinct approximate description} of tensors of the form
$M_m = \sum_{i=1}^{k} w_i v_i^{\otimes m}$, for non-negative $w_i$'s,
even for large (aka super-constant) values of the order parameter $m$, assuming
there exists a nice polynomial-size arithmetic circuit whose expected
output---under an appropriate sampleable distribution---is equal to $M_m$
and whose covariance (under the same distribution) is not too large.

Having a succinct description of a higher order tensor
begs the question of what exactly we can \emph{do} with this description;
as we cannot hope, for example, to efficiently decompress the description back into the full tensor.
Instead, we could hope that for the following:
for some reasonable class of functions $f:(\R^d)^{\otimes m} \to \R$,
our succinct description allows us to efficiently compute approximations to $f(M_m)$.
A natural class of such functions involves taking the inner product of $M_m$
with some tensor $T$ (of the same order and dimension). Of course, to have a hope of achieving this,
$T$ itself will need to have a concise representation---which for our purposes means that $T$
is the expected output of a nice arithmetic circuit on some sampleable random input distribution.
Fortunately, this limited means of querying $M_m$ turns out to be sufficient for many learning theoretic applications.
For further details on these applications, see Section~\ref{sssec:tech-learn}.

We are now ready to state an informal description of our algorithmic result in this context
(see Definition~\ref{def:seq-tensor} and Proposition~\ref{prop:main} for the detailed formal versions).

\begin{proposition}[Implicit Moment Tensor Computation, Informal] \label{prop:main-informal}
Let $d, k , m \in \Z_+$, $w_i \in \R_{\geq 0}$ and $v_i \in \R^d$ for all $i \in [k]$.
Consider the sequence of dimension-$d$ tensors $\{M_t\}$ with $M_t$ order-$t$, for $t\in\Z_+$,
defined by $M_t = \sum_{i=1}^{k} w_i v_i^{\otimes t}$.
Suppose that both assumptions below hold:
\begin{enumerate}[leftmargin=*, nosep]
\item For all positive integers $t \leq 2m$, with $t$ even or equal to $m$,
we have a reasonable way to sample from a distribution $\mathcal{S}_t$ over order-$t$ tensors
that has expectation $M_t$ and variance at most $V$.
\item We have a reasonable way to sample from a distribution $\mathcal{F}_m$ over order-$m$ tensors
that has expectation $T$ and variance at most $V$.
\end{enumerate}
Then there is an algorithm that draws $N$ such samples,
runs in time $\poly(N, d, k, m)$, and computes an approximation to
$\langle T, M_m \rangle$ that with high probability has absolute error
$$\frac{\poly\left(k, m, d, V\right) \left(1+\max_{i \in [k]}\|v_i\|_2\right)^{m}}{N^{1/4}} \;.$$
\end{proposition}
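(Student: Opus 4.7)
The central observation is that $M_m = \sum_{i=1}^k w_i v_i^{\otimes m}$ lies in the subspace $V^\star := \mathrm{span}(v_1^{\otimes m}, \ldots, v_k^{\otimes m})$ of $(\R^d)^{\otimes m}$, which has dimension at most $k$ and coincides with the column space of $M_{2m}$ once that tensor is reshaped as a symmetric PSD $d^m \times d^m$ matrix of rank $\leq k$ (here we crucially use $w_i \geq 0$). My plan is: (i) use the $\mathcal{S}_{2m}$-samples to identify a $k$-dimensional subspace $\hat V \approx V^\star$, represented implicitly through short linear combinations of sampled tensors; (ii) for an orthonormal basis $\{e_\alpha\}_{\alpha=1}^k$ of $\hat V$, estimate $\langle T, e_\alpha\rangle$ from $\mathcal{F}_m$-samples and $\langle M_m, e_\alpha\rangle$ from $\mathcal{S}_m$-samples; (iii) return $\sum_\alpha \widehat{\langle T,e_\alpha\rangle}\,\widehat{\langle M_m,e_\alpha\rangle}$, which approximates $\langle T, P_{\hat V} M_m\rangle \approx \langle T, M_m\rangle$. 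All manipulations avoid ever instantiating a $d^m$-dimensional object, since the ``nice arithmetic circuit'' assumption lets us evaluate pairwise inner products between sampled tensors (and their linear combinations) in $\poly(d,k,m)$ time.

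\textbf{Implicit subspace identification.}
I draw $N$ samples $A_1,\ldots,A_N \sim \mathcal{S}_{2m}$, each flattened into a $d^m \times d^m$ symmetric matrix with compact description. The image of $\hat M_{2m} := \frac{1}{N}\sum_i A_i$ is contained in the span $S$ of the column spaces of the $A_i$, a subspace of $\R^{d^m}$ of dimension at most $\poly(N)$ whose basis is given explicitly by compactly-represented vectors. The restriction of $\hat M_{2m}$ to $S$ is a $\poly(N) \times \poly(N)$ matrix whose entries are inner products of these basis vectors, which are efficiently computable by hypothesis. After orthonormalizing the basis of $S$ via Gram--Schmidt in the compact representation, I perform ordinary finite-dimensional PCA on $\hat M_{2m}|_S$ and take $\hat V$ to be the span of its top $k$ eigenvectors, each of which is then explicitly a short linear combination of the sampled tensors.

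\textbf{Projection and error analysis.}
For each $\alpha$ I form unbiased averages $\hat a_\alpha = \frac{1}{N}\sum_j \langle f_j, e_\alpha\rangle$ with $f_j \sim \mathcal{F}_m$ and $\hat b_\alpha = \frac{1}{N}\sum_j \langle g_j, e_\alpha\rangle$ with $g_j \sim \mathcal{S}_m$, so that each has variance $O(V/N)$; the final output is $\sum_\alpha \hat a_\alpha \hat b_\alpha$. The bias equals $\langle T, (I-P_{\hat V})M_m\rangle$, which I bound by $\|T\|_2 \cdot \|(I-P_{\hat V})M_m\|_2$. Since $\hat V$ maximizes $\mathrm{tr}(P_{\hat V}\hat M_{2m})$ over $k$-dimensional subspaces while $V^\star$ attains $\mathrm{tr}(P_{V^\star}M_{2m}) = \mathrm{tr}(M_{2m})$, combining this with $\|\hat M_{2m}-M_{2m}\|_F = O(\sqrt{V/N})$ (from the variance hypothesis via Chebyshev) yields $\mathrm{tr}((I-P_{\hat V})M_{2m}) = O(k\sqrt{V/N})$. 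Expanding this trace as $\sum_i w_i\|(I-P_{\hat V})v_i^{\otimes m}\|_2^2$ and applying Cauchy--Schwarz gives $\|(I-P_{\hat V})M_m\|_2 = O\bigl(\sqrt{k\sum_i w_i}\bigr)(V/N)^{1/4}$, which is precisely where the advertised $N^{-1/4}$ rate originates. The factor $(1+\max_i\|v_i\|_2)^m$ is inherited from $\|v_i^{\otimes m}\|_2 = \|v_i\|_2^m$, which upper-bounds $\|M_m\|_2$ and the admissible magnitude of $\langle T, e_\alpha\rangle$; the $\mathcal{F}_m$- and $\mathcal{S}_m$-estimation errors contribute only an $O(\sqrt{V/N})$ term and are therefore dominated.

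\textbf{Main obstacle.}
The delicate step is the \emph{implicit linear algebra} on $S$: the compactly represented spanning vectors of $S$ need not be well-conditioned, so the Gram matrix assembled from pairwise inner products may be nearly singular and na\"ive Gram--Schmidt could amplify the noise in $\hat M_{2m}$ uncontrollably. I plan to handle this by discarding the small-eigenvalue part of the Gram matrix (regularization), which corresponds to throwing away directions in $S$ along which $\hat M_{2m}$ cannot be trusted, and arguing that since the $k$ directions we \emph{do} keep come from the top-$k$ spectrum of $\hat M_{2m}|_S$, they have bounded coefficients when expressed in the sampled basis. Carrying this regularization through the Davis--Kahan-style subspace perturbation argument above without degrading the $(V/N)^{1/4}$ bound is where I expect the bulk of the technical work to sit; the rest is bookkeeping of variance through the two independent sampling layers.
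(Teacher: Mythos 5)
Your plan hinges on a claim that does not hold: that the span $S$ of the column spaces of the $N$ sampled $d^m \times d^m$ matrices $A_i$ has dimension $\poly(N)$ with an explicitly computable basis. A sample produced by a Sequential Tensor Computation, viewed as a $d^m \times d^m$ matrix, has no a priori rank bound in terms of the circuit size. The extended Hermite tensor $H_{2m}(x,y)$ of Definition~\ref{def:Hxy} is the prototypical example the paper cares about: it is produced by an $O(m)$-size STC, yet expands to a sum of $2^{2m}$ rank-one terms, and its flattening is not low-rank. So the ``column space of $A_i$'' is a $d^m$-sized object with no compact handle, $S$ is not $\poly(N)$-dimensional, and the Gram-matrix/Gram--Schmidt/PCA step you propose cannot be carried out. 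The same confusion corrupts your statistical bound: the directional covariance bound $V$ gives Frobenius error roughly $\sqrt{\dim \cdot V/N}$ for an empirical mean, which in the ambient space $(\R^d)^{\otimes 2m}$ is $\sqrt{d^{2m}V/N}$, not the $O(\sqrt{V/N})$ you assert, and would remain $\sqrt{\poly(N)V/N}$ even under your (unjustified) dimension bound on $S$. One-shot PCA on $M_{2m}$ is therefore neither computationally nor statistically feasible; it is precisely the obstacle the proposition exists to overcome, and you cannot defer it to a ``conditioning/regularization'' afterthought.

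The missing idea is the recursive bootstrap that constitutes the core of the paper's algorithm. One first estimates $M_2$, takes the span $W_1$ of its top $k$ singular vectors, and records the pseudo-projection $\phi_1 : \R^d \to \R^k$. Inductively, given a recursive pseudo-projection $\Phi_r : (\R^d)^{\otimes r}\to\R^k$ approximately capturing the span of $v_1^{\otimes r},\ldots,v_k^{\otimes r}$, one pushes each $\mathcal{S}_{2r+2}$-sample through $\Phi_r\otimes I_d\otimes\Phi_r\otimes I_d$ --- an object of size only $(dk)^2$, estimable with error $\poly(mdkV)/\sqrt{N}$ in Frobenius norm independent of $d^m$ --- takes its top-$k$ singular subspace, and extends to $\Phi_{r+1}$. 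Two things make this work and are absent from your proposal: Lemma~\ref{lem:rpp-stc}, which shows via a dynamic program over the circuit that a recursive pseudo-projection can be applied to an STC output in $\poly(d,S,k)$ time (this is the correct replacement for your ``evaluate inner products of linear combinations''); and the recursion $\eta_{r+1}\le \eta_r\max_i\|v_i\|_2 + O(\sqrt{\delta})$ of Lemma~\ref{lem:eta-r}, which controls how far $v_i^{\otimes r}$ escapes the current subspace and is where the $(1+\max_i\|v_i\|_2)^m$ factor actually comes from. Your last step --- estimate $\langle T, e_\alpha\rangle$ and $\langle M_m, e_\alpha\rangle$ in the recovered $k$-dimensional frame, multiply, and bound the bias by $\|(I-P_{\hat V})M_m\|_2$ via a square-root-of-a-square-root, giving $N^{-1/4}$ --- does match the final stage of the paper's argument, but without the recursive construction of $\Phi_m$ there is no way to obtain that frame, and that construction is where the content of the proposition resides.
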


\paragraph{Discussion}
Some comments are in order. Recall that the goal is to efficiently compute a succinct
approximation to $M_m$ for a given $m \in \Z_+$. Formally speaking, the distribution over order-$t$ tensors
(with expected value $M_t$) that we are assumed to have sample access to is the output
of a nice arithmetic circuit (in particular, what we term a Sequential Tensor Computation; see Definition~\ref{def:seq-tensor})
on some sampleable distribution.
The algorithm of Proposition~\ref{prop:main-informal} outputs an approximation of the inner product
$\langle T, M_m \rangle$, for a tensor $T$ that is the expected output of a nice arithmetic circuit
on some sampleable distribution.



\paragraph{Comparison to Prior Work}
Proposition~\ref{prop:main-informal} gives an efficient method
to approximate higher moment tensors
of the form $\sum_{i=1}^k w_i v_i^{\otimes t}$ for non-negative $w_i$'s,
where the order $t$ is large,
so long as there are ``reasonable'' unbiased estimators
for these moments that can be efficiently sampled.
This result broadly generalizes
the work by Li and Liu~\cite{LL21-opt},
who introduced the idea of implicit moment tensor estimation in the context of clustering
mixtures of separated spherical Gaussians and, more broadly,
separated mixtures of translates of any known Poincare distribution $D$.
Roughly speaking, using the terminology introduced above
they gave an efficient algorithm for implicit moment tensor estimation only
for the {\em specific} tensors 
arising in their clustering
application---namely, 
the parameter moment tensors computed in a specific way using their so-called adjusted polynomials. 

The key conceptual contribution of our work
is the realization that the computational task of implicit moment tensor estimation
can be solved efficiently in a vastly more general setting, and thus
applied to several other learning problems.
This answers the main open question posed in~\cite{LL21-opt};
see, e.g.,~\cite{Liu24-TTIC}.

It turns out that the kind of sequence
of tensors $M_t = \sum_i w_i v_i^{\otimes t}$, that our technique requires,
arises in a variety of probabilistic models as Fourier or moment tensors.
We also generalize the way in which these tensors can be computed
from the expectation of an adjusted-polynomial to the output 
of any Sequential Tensor Computation (Definition~\ref{def:seq-tensor})---which is new
to this work.
Learning these moment tensors is thus a powerful tool and, as we have shown (Section~\ref{ssec:intro-apps}),
leads to more efficient learning algorithms for these models.
Beyond its generality, this broader view leads to quantitative improvements
even for the specific clustering application of~\cite{LL21-opt},
allowing us to cluster mixtures of spherical Gaussians under the optimal separation of $\sqrt{\log(k)}$;
see Theorem~\ref{thm:inf-cluster-GMMs} and Remark~\ref{rem:comp}.

\subsection{Technical Overview} \label{ssec:techniques}

\subsubsection{Proof Overview of Proposition~\ref{prop:main-informal}} \label{sssec:tech-impl}

The naive method to approximate  $\langle T, M_m \rangle$ is as follows:
Since $M_m$ is the expectation of $\mathcal{S}_m$, we can approximate
$M_m$ by drawing sufficiently many samples from this distribution and averaging them.
We can approximate $T$ similarly by sampling from $\mathcal{F}_m$, and 
computing the inner product of these approximations yields an approximation to the desired answer.

This naive approach suffers from two major issues that we need to overcome.
The first is statistical and the second is computational.

\paragraph{Statistical Challenges}
We start with the statistical aspect.
Note that both $T$ and $M_m$ are order-$m$ tensors in $d$ dimensions, i.e.,
they lie in a $d^m$-dimensional space.
Thus, in the absence of further structural properties, in
order to approximate them to error $\eps$ by sampling from $\mathcal{S}_m$ and $\mathcal{F}_m$, one
would need roughly $\Omega(d^m (V/\eps^2))$ samples,
where $V$ is the covariance upper bound.
Unfortunately, this sample upper bound is prohibitively large
for our learning applications---indeed, in the underlying applications, we need to take $m = \omega(1)$.\footnote{Concretely,
for learning mixtures of $k$ spherical Gaussians
one would need $m = \Omega(\log(k))$. For learning sums of
ReLU activations, we would need to select $m = \poly(1/\eps)$, where $\eps$ is the desired $L_2$-error.
Finally, for
learning $k$-mixtures of linear regressions with Gaussian noise of standard deviation $\sigma$,
one would need $m= \Omega(\log(1/\eps)/\sigma^2)$.}

In order to overcome this obstacle,
we need a method of reducing the dimension of the underlying space.
We can achieve this by considering
moment tensors $M_t = \sum_{i=1}^{k} w_i v_i^{\otimes t}$ of different orders.
Suppose, for example, that we knew {\em exactly} the moment tensor
$M_{2m} = \sum_{i=1}^k w_i v_i^{\otimes 2m} =
\sum_{i=1}^k w_i v_i^{\otimes m} \otimes v_i^{\otimes m}$.
When viewed as a $d^m \times d^m$ matrix, $M_{2m}$
will be rank-$k$ with image (span of its column vectors)
$W$ spanned by the $v_i^{\otimes m}$'s.
Thus, if we already knew $M_{2m}$, we could compute its image $W$
and then note that $M_m$ lies in $W$.
Then, instead of computing the full output tensors of
the $\mathcal{S}_m$ and $\mathcal{F}_m$---which
are objects lying in $(\R^d)^{\otimes m}$---we could compute their projections onto $W$.
Since $\dim(W) \leq k$ (as $M_{2m}$ is rank-$k$ by definition),
by standard concentration arguments it follows that
$O(k V/\eps^2)$ samples suffice to
compute these projections to $\ell_2$-error $\eps$.

\paragraph{Computational Challenges} 
The above discussion assumes that we have access to the tensor $M_{2m}$.
This begs the question of how we can compute $M_{2m}$, even approximately.
It turns out that this can be achieved via some form of bootstrapping.
In particular, suppose that we have computed
(an approximation to) $M_{2t}$, for some $t\leq m$.
We can write $M_{2t} = \sum_{i=1}^k w_i v_i^{\otimes t} \otimes v_i^{\otimes t}$.
When viewed as a $d^t \times d^t$ matrix, $M_{2t}$
has image spanned by the $v_i^{\otimes t}$'s.
Since the image is at most $k$-dimensional,
if $W_t$ is the span of the top-$k$ singular vectors of (our approximation to) $M_{2t}$,
then the $v_i^{\otimes t}$'s will all (approximately) lie in $W_t$.
This means that $M_{2t+2}$
will (approximately) lie in the space
$$W_t \otimes \R^d \otimes W_t \otimes \R^d \;.$$
As this space has dimension $\dim(W_t)^2 d^2 =O(k^2 d^2)$,
this allows a relatively fast  approximation to
$M_{2t+2}$. Thus, an approximation to $M_{2t}$ allows us to compute an approximation to
$M_{2t+2}$. A careful analysis of the error rates involved
resolves our statistical issues.

While the aforementioned ideas can be used to overcome sample complexity considerations,
significant {\em computational} challenges remain.
In particular, all of the vectors/tensors in question still lie in $d^t$ dimensional spaces;
so, even writing them down explicitly will take $\Omega(d^t)$ time.
We circumvent this obstacle by devising a {\em succinct representation}
of the relevant vector spaces {\em that suffices for the purpose of the underlying computations}.
In particular, by the above discussion, we can express $W_{t+1}$
(the span of the top-$k$ singular vectors of our approximation to $M_{2(t+1)}$)
as a subspace of $W_t \otimes \R^d$. By associating this ($k$-dimensional) subspace
with $\R^k$ (by picking an appropriate orthonormal basis/isometry),
we can then associate $W_{t+2}$
with a subspace of $\R^k \otimes \R^d$; and so on.
In particular, for each $t \leq m$,
we will have a map $\Phi_t: (\R^{d})^{\otimes t} \to \R^k$
given by the projection onto $W_t$ followed by an isometry.
These maps can then be efficiently constructed recursively from each other.

\subsubsection{From Proposition~\ref{prop:main-informal} to Learning Latent-variable Models}
\label{sssec:tech-learn}

In this subsection, we sketch how Proposition~\ref{prop:main-informal} is
leveraged to obtain our learning applications.
We start with the problem of learning sums of non-linear activations,
followed by our density estimation algorithms for MLRs and GMMs, and
concluding with our application on clustering GMMs.

\paragraph{PAC Learning Sums of Non-linear Activations}
Recall that the goal of this task is to approximate
a function of the form $F(x) = \sum _{i=1}^k w_i \sigma(v_i \cdot x)$,
where $w_i \geq 0$ and $\sigma$ is an appropriate (non-linear) activation function,
given access to random labeled examples $(x, F(x))$ with $x \sim N(0, I)$.
We proceed by attempting to learn the low degree part of the Fourier transform of $F$.
In particular, we have that
$$
F(x) = \sum_{m=0}^\infty \langle T_m, H_m(x)\rangle
$$
where $H_m$ is the normalized Hermite tensor (Definition~\ref{def:Hermite-tensor})
and $T_m = \E[F(G)H_m(G)]$. Our approach relies on approximating $F$ by the sum of the first $n$
many terms in the sum above. To do this, we note that the tensors $T_m$ are moment tensors
of approximately the kind desired. 
Namely, it can be shown (see Lemma~\ref{lem:Hermite-sum-ReLUs}) that
$$
T_m = c_{\sigma,m} \sum_{i=1}^k w_i v_i^{\otimes m} \;,
$$
for some known constants $c_{\sigma,m}$ 
(which are critically bounded away from $0$ when $m$ is even).
Given this, we can use Proposition \ref{prop:main-informal} directly to approximate
$\langle T_m, H_m(x)\rangle$ in time roughly $\poly(kd/\eps)2^{O(m)}$.
Doing this for all $m$ up to degree $n$
yields the desired approximation (see proof of Theorem~\ref{thm:formal-sum-ReLUs} for the detailed argument).

\paragraph{Density Estimation for MLRs and spherical GMMs}
Our algorithms for density estimation, both
for mixtures of spherical Gaussians and mixtures of linear regressions,
proceed similarly to the function approximation one above.
In particular, in order to do density estimation for a distribution $X$,
our goal will be to approximate the function $F(x) := X(x)/G(x)$, where $G$ is the standard Gaussian.
Once we have this, we can simulate $X$ via rejection sampling, by producing a sample $x\sim G$
and accepting with probability proportional to $F(x)$.

As above, $F(X)$ has a Taylor expansion given by
$$
F(x) = \sum_{n=0}^\infty \langle T_n, H_n(x)\rangle \;,
$$
where $T_n = \E[H_n(X)]$. This in turn needs to be related to an appropriate moment tensor.
For Gaussian mixtures, $T_n$ is already $\sum_{i=1}^k w_i \mu_i^{\otimes n}$.
For mixtures of linear regressions on the other hand, it needs to be computed indirectly
as $\sum_{i=1}^k w_i \beta_i^{\otimes n} = \E[(y^n/\sqrt{n!})H_n(X)]$,
and the Fourier coefficient is given by (see Lemma~\ref{MLRMomentCompLem})
$$
T_n = \frac{(n-1)!!}{\sqrt{n!}} \sum_{i=1}^k w_i \mathrm{Sym}\left(\left[ \begin{matrix} 0 & \beta_i\\ \beta_i^T & \|\beta_i\|_2^2 +\sigma^2-1\end{matrix} \right]^{\otimes n/2} \right) \;.
$$
The inner product of this quantity with $H_n(x)$ can be obtained indirectly
by taking appropriate inner products with the moment tensor.

\paragraph{Clustering Spherical GMMs}
The basic idea for our clustering application mirrors the approach of \cite{LL21-opt}:
given two samples $x$ and $x'$ from our mixture $X = \sum_{i=1}^k w_i N(\mu_i,I)$,
we want to reliably determine whether or not $x$ and $x'$ are from the same component or different ones.
If we can cluster samples by component, we can then easily learn each component.
We do this by considering the distribution $X' = (Y-Y')/\sqrt{2}$,
where $Y$ and $Y'$ are independent copies of $X$.
Here $X'$ is another mixture of Gaussians with means $\mu_i-\mu_j$.
We can determine whether $x$ and $x'$ are in the same component
by considering the size of the inner product of $(x-x')^{\otimes t}$
(actually in our case $H_t((x-x')/\sqrt{2})$) with the moment tensor of $X'$,
namely $\sum_{i,j} w_i w_j (\mu_i-\mu_j)^{\otimes t} = \E[H_t(X')]$; indeed,
this inner product will be large with high probability
if and only if $x$ and $x'$ come from different components of the mixture.

While the above is essentially the strategy employed by \cite{LL21-opt},
we devise a more efficient sequential tensor computation for estimating
the values of Hermite tensors (see Definition \ref{def:Hxy} and Lemma \ref{lem:cov-hxy}).
In particular, this allows us to efficiently consider order $t=\Theta(\log(k))$ tensors,
while \cite{LL21-opt} could only consider order up to $O(\log(k)/\log\log(k))$.
This quantitative improvement is responsible for our improvement in the separation parameter.

\subsection{Prior and Related Work} \label{ssec:related}
Here we record the most relevant prior work on the learning problems we study.

\vspace{-0.2cm}

\paragraph{Learning Mixtures of Linear Regressions}
Since their introduction~\cite{DV89, JJ94}, mixtures of linear regressions (MLRs)
have been broadly studied as a generative model for supervised data, and
have found applications to various problems, including trajectory clustering and phase retrieval;
see~\cite{Chen-thesis} for additional discussion and references.
A line of algorithmic work focused on the parameter estimation problem.
Specifically, a sequence of papers (see, e.g.,~\cite{ZJD16, LL18, KC19} and references therein)
analyzed non-convex methods,
including expectation maximization and alternating minimization.
These works established {\em local} convergence guarantees:
Given a sufficiently accurate solution (warm start),
these non-convex methods can efficiently boost this
to a solution with arbitrarily high accuracy.
The focus of our algorithmic results is to provide such a warm start,
which captures the complexity of the problem.
We note that the local convergence result of~\cite{LL18} applies for the noiseless case,
while the result of~\cite{KC19} can handle non-trivial
regression noise when the weights of the unknown mixture are known.

The prior works most closely related to ours are~\cite{LL18, CLS20, DK20-ag}.
The work of \cite{LL18} studied the noiseless setting
(corresponding to $\sigma = 0$) and provided a parameter estimation algorithm with sample complexity
and running time scaling exponentially with $k$.
Subsequently, the work of~\cite{CLS20} gave a sub-exponential time parameter estimation algorithm
for both the noiseless case and the case where $\sigma$ is small, namely $\sigma = O(\eps)$.
Specifically, if the weights are uniform and $\sigma = O(\eps)$,
their algorithm has sample and computational complexity
$\poly(d k /(\eps \Delta)) \, (k/\eps)^{\tilde{O}(k^{1/2}/\Delta^2)}$,
where $\Delta$ is the minimum pairwise separation between the regressors.
The fastest previously known algorithm~\cite{DK20-ag} for both
parameter and density estimation has sample and
computational complexity scaling quasi-polynomially in $k$.
Specifically, for density estimation (the task we focus on in this paper) 
the algorithm of ~\cite{DK20-ag}
has complexity $\poly(d, (k/\eps)^{\log^2(k)})$.

\vspace{-0.3cm}

\paragraph{Learning Mixtures of Spherical Gaussians}
Here we survey the most relevant prior work on learning
this distribution family both for density estimation and parameter estimation.

Density estimation for mixtures of high-dimensional spherical Gaussians
has been studied in a series of works~\cite{FOS:06, MoitraValiant:10,SOAJ14,
HardtP15, DKKLMS16, LiS17, AshtianiBHLMP18}.
The sample complexity of the problem for $k$-mixtures on $\R^d$, for variation distance error $\eps$,
is $\tilde{\Theta}(d k/\eps^2)$~\cite{AshtianiBHLMP18}.
Unfortunately, until fairly recently, all known algorithms had running times
exponential in number $k$; see~\cite{SOAJ14} and references therein.
The fastest density estimation algorithm prior to our work is from~\cite{DK20-ag}
and has complexity $\poly(d) (k/\eps)^{O(\log^2(k))}$.

In the related task of parameter estimation, the goal is to approximate the parameters of the components
(i.e., mixture weights and component means).
This task requires further assumptions, so that it is solvable with polynomial sample complexity
(even information-theoretically). The typical assumption in the literature is that pairwise separation
between the component means is bounded below by some parameter $\Delta>0$.
A long line of work since the late 90s steadily improved the separation
requirement~\cite{Dasgupta:99, AroraKannan:01, VempalaWang:02,
AchlioptasMcSherry:05, KSV08, BV:08, RV17-mixtures, HL18-sos, KSS18-sos, DKS18-list, LL21-opt}.

A related line of work studies the parameter estimation problem in a smoothed setting,
e.g., under the assumption that the means are linearly independent and the corresponding matrix
has bounded condition number.
Specifically,~\cite{HsuK13}
gave a polynomial-time learner for such instances. Interestingly, under such
assumptions, access to constant-order moments suffices. We emphasize that in the general case
that we focus on (i.e., without such assumptions) access to super-constant degree moments is necessary. 

The state-of-the-art algorithm for parameter estimation (in the general case)
prior to this paper is the work of~\cite{LL21-opt}: they gave a $\poly(d, k, 1/\eps)$ time
algorithm under near-optimal separation of $\Theta((\log (k/\eps))^{1/2+c})$.
We note that the assumption on spherical component covariances
is crucial for the latter result (and our Theorem~\ref{thm:inf-cluster-GMMs}).
Namely, a departure from sphericity leads to information-computation tradeoffs:
\cite{DKPZ23} gave evidence that quasi-polynomial runtime is inherent
for any $\polylog(k)$ separation, even if the Gaussian components have the same bounded covariance.

\vspace{-0.2cm}

\paragraph{Learning One-hidden-layer Networks}
Over the past decade, we have witnessed an explosion of research activity 
on provable algorithms for learning neural networks
in various settings, see, e.g.,~\cite{Janz15, SedghiJA16, DanielyFS16, ZhangLJ16,
ZhongS0BD17, GeLM18, GeKLW19, BakshiJW19, GoelKKT17,  GoelK19, VempalaW19, DKKZ20, GoelGJKK20, DK20-ag, ChenKM21,
ChenGKM22, CDGKM23, DK23-schur, ChenN24} for
some works on the topic.
Many of these works focused on parameter learning---the problem of
recovering the weight matrix of the data generating neural network.
PAC learning of simple classes of networks has been studied
as well ~\cite{GoelKKT17, GoelK19, VempalaW19, DKKZ20, ChenGKM22, CDGKM23, DK23-schur}.

The work of~\cite{GeLM18} studies the parameter learning of positive linear combinations of ReLUs
and other non-linear activations 
under the Gaussian distribution.
It is shown in~\cite{GeLM18}  that the parameters can be efficiently approximated,
if the weight matrix is full-rank with bounded condition number.
The complexity of their algorithm scales polynomially with the condition number.
\cite{BakshiJW19, GeKLW19} obtained efficient parameter learners for vector-valued
depth-$2$ ReLU networks under the Gaussian distribution. Similarly, the algorithms in these works
have sample complexity and running time scaling polynomially with the condition number.

In contrast to parameter estimation, PAC learning 
does not require any assumptions on the structure of the weight matrix. The PAC learning
problem for one-hidden-layer networks is information-theoretically solvable with polynomially
many samples. The question is whether a computationally efficient algorithm exists.
For the task of positive linear combinations of ReLUs studied in this work,
\cite{DKKZ20} gave a learner with 
computational complexity $\poly(dk/\eps)+(k/\eps)^{O(k^2)}$.
This runtime bound was improved to $\poly(d)(k/\eps)^{O(\log^2(k))}$ in~\cite{DK20-ag}, 
which was the state-of-the-art prior to our work.


\subsection{Organization}

The structure of the paper is as follows:
In Section~\ref{sec:prelims}, we provide the necessary definitions and technical facts.
In Section~\ref{sec:alg-general}, we prove our main algorithmic result on implicit tensor estimation.
Section~\ref{sec:apps} presents our learning algorithms for  sums of ReLUs,
mixtures of spherical Gaussians, and mixtures of linear regressions.

\section{Preliminaries} \label{sec:prelims}

\paragraph{Notation}
For $n \in \Z_+$, we denote by $[n]$ the set $\{1, 2, \ldots, n\}$.
For a vector $v \in \R^d$, let $\| v \|_2$ denote its Euclidean norm.
We denote by $x \cdot y$ the standard inner product between $x, y \in \R^d$.
We will denote by $\delta_0$ the Dirac delta function
and by $\delta_{i, j}$ the Kronecker delta.
Throughout the paper, we let $\otimes$ denote the tensor/Kronecker product.
For a vector $x \in \R^d$, we denote by $x^{\otimes m}$ the $m$-th order tensor power of $x$.

Fix some $t \in \Z_+$. For vector spaces $V_i$, $i \in [t]$,
an order-$t$ tensor on $\bigotimes_{i=1}^t V_i$ is simply an element $A \in \bigotimes_{i=1}^t V_i$.
If $A, B \in \bigotimes_{i=1}^t V_i$, for some inner product spaces $V_i$, $i \in [t]$,
then we use  $\langle A, B \rangle$ to denote the inner product of $A$
and $B$ {\em induced by the inner products on $V_i$}.
Namely, using the unique inner product structure on $\bigotimes_{i=1}^t V_i$
so that
$\langle v_1\otimes \cdots \otimes v_t, w_1\otimes \cdots \otimes w_t\rangle
= \prod_{i=1}^t \langle v_i, w_i\rangle$ for all vectors $v_i,w_i \in V_i$.
For the special case that $V_i = \R^{d_i}$, $d_i \in \Z_+$,
a tensor $A$ is defined by a $t$-dimensional array with real entries $A_{\alpha}$,
where $\alpha = (\alpha_1, \ldots, \alpha_t)$ with $\alpha_i \in [d_i]$.
For tensors $A, B$ of order $t$ with the same dimensions, the inner product
$\langle A, B \rangle$ is the entrywise inner product of the corresponding arrays.
We also use $\|A\|_2 = \langle A, A \rangle^{1/2}$ for the corresponding $\ell_2$-norm.


The covariance of an order-$t$, dimension $d$ tensor-valued random variable $T \in (\R^d)^{\otimes t}$
is defined as the tensor in $(\R^d)^{\otimes 2t}$ defined as
$\Cov[T]: = \E[T \otimes T] - \E[T] \otimes \E[T]$, where the expectation operation is applied entry-wise.
Similarly, the corresponding ``second moment''
is the tensor $ \E[T \otimes T]$.
For an order-$t$, dimension-$d$ tensor $W$, consider the real-valued random variable
$\langle W, T \rangle$ and its corresponding variance, $\var[\langle W, T \rangle]$.
We say that $\Cov[T]$ is bounded above by $C$, for some $C>0$,
if $\max_{\|W\|_2 = 1} \var[\langle W, T \rangle] \leq C$, i.e., if the variance of $T$ in any (normalized)
direction is at most $C$. The analogous definition applies for the second moment $ \E[T \otimes T]$.

For a space $W$, we use $I_W$ for the identity matrix on $W$. If $W = \R^d$, we use the notation $I_d$.

We will denote by $N(0, I_d)$ the $d$-dimensional Gaussian distribution
with zero mean and identity covariance; we will use $N(0, I)$ when the underlying dimension
will be clear from the context. We will use $N(0, 1)$ for the univariate case.
For a random variable $X$ and $p \geq 1$,
we will use $\|X\|_p \eqdef \E[|X|^p]^{1/p}$ to denote its {\em $L_p$-norm}.

\paragraph{Hermite Analysis and Concentration} \label{ssec:hermite}
Consider $L_2(\R^d, N(0, I))$, the vector space of all
functions $f : \R^d \to \R$ such that $\E_{x \sim N(0, I)}[f(x)^2] <\infty$.
This is an inner product space under the inner product
$\langle f, g \rangle = \E_{x \sim N(0, I)} [f(x)g(x)] $.
This inner product space has a complete orthogonal basis given by
the \emph{Hermite polynomials}.
In the univariate case, we will work with normalized Hermite polynomials
defined below.

\begin{definition}[Normalized Probabilist's Hermite Polynomial]\label{def:Hermite-poly}
For $k\in\N$, the $k$-th \emph{probabilist's} Hermite polynomial
$He_k:\R\to \R$ is defined as
$He_k(t)=(-1)^k e^{t^2/2}\cdot\frac{\mathrm{d}^k}{\mathrm{d}t^k}e^{-t^2/2}$.
We define the $k$-th \emph{normalized} probabilist's Hermite polynomial
$h_k:\R\to \R$ as
$h_k(t)=He_k(t)/\sqrt{k!}$.
\end{definition}

\noindent Note that for $G\sim N(0,1)$ we have $\E[h_n(G)h_m(G)] = \delta_{n,m}$.
We will use multivariate Hermite polynomials in the form of
Hermite tensors. We define the normalized Hermite tensor as follows,
in terms of Einstein summation notation.

\begin{definition}[Normalized Hermite Tensor]\label{def:Hermite-tensor}
For $k\in \N$ and $x \in V$ for some inner produce space $V$,
we define the $k$-th Hermite tensor as
\[
(H_k^{(V)}(x))_{i_1,i_2,\ldots,i_k}:=\frac{1}{\sqrt{k!}}\sum_{\substack{\text{Partitions $P$ of $[k]$}\\ \text{into sets of size $1$ and $2$}}}\bigotimes_{\{a,b\}\in P}(-I_{i_a,i_b})\bigotimes_{\{c\}\in P}x_{i_c} \;,
\]
where $I$ above denotes the identity matrix over $V$. Furthermore, if $V=\R^d$,
we will often omit the superscript and simply write $H_k(x)$.
\end{definition}

\noindent
We will require a few properties that follow from this definition.
First, note that if $V$ is a subspace of $W$,
then $H^{(V)}_k(\mathrm{Proj}_V(x)) = \mathrm{Proj}_V^{\otimes k}H_k^{(W)}(x)$.
Applying this when $V$ is the one-dimensional subspace spanned by a unit vector $v$
gives that
$\langle H_k(x), v^{\otimes k} \rangle = h_k(v \cdot x)$.
We will also need to know that the entries of $H_k(x)$
form a useful Fourier basis of $L^2(\R^d,N(0,I))$.
In particular, for non-negative integers $m$ and $k$, we have that
$\E_{x \sim N(0, I)}[H_k(x) \otimes H_{m}(x)]$ is $0$
if $m\neq k$ and  $\mathrm{Sym}_k(I_{d^k})$, if $m=k$,
where $\mathrm{Sym}_k$ is the symmetrization operation over the first $k$ coordinates.
From this we conclude that if $T$ is a symmetric $k$-tensor,
then $\E_{x \sim N(0, I)}[\langle H_k(x), T \rangle H_m(x)]$
is $0$ if $m\neq k$ and $T$ if $m=k$.

We will also require the following fact, where
$\mathrm{Sym}$ denotes the symmetrization operator
that averages a tensor over all permutations of its entries.

\begin{fact} \label{fact:Ht-mean-cov}
We have that $\E_{X \sim N(\mu,I)}[H_n(X)] = \mu^{\otimes n}/\sqrt{n!}$ and
$\Cov_{X \sim N(0,I)}[H_n(X)] = \mathrm{Sym}(I_{d^n})$.
\end{fact}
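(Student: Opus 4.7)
The plan is to prove the two parts separately, reducing the tensor identities to scalar ones that can be handled by the one-dimensional Hermite generating function.

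For the mean, I would reduce the tensor identity to a family of scalar identities by contracting both sides against $v^{\otimes n}$ for an arbitrary unit vector $v \in \R^d$. Using the property $\langle H_n(x), v^{\otimes n} \rangle = h_n(v \cdot x)$ (noted in the excerpt right after Definition~\ref{def:Hermite-tensor}), the claim becomes the scalar statement
$$\E_{X \sim N(\mu, I)}[h_n(v \cdot X)] \;=\; \frac{(v \cdot \mu)^n}{\sqrt{n!}}.$$
Since $v \cdot X \sim N(v \cdot \mu, 1)$, this is a purely one-dimensional identity. I would establish it via the normalized Hermite generating function $\sum_{n \geq 0} \frac{s^n}{\sqrt{n!}} h_n(t) = e^{st - s^2/2}$: setting $\alpha = v \cdot \mu$ and writing $v \cdot X = \alpha + Z$ with $Z \sim N(0,1)$, the expectation of the right-hand side is $e^{s\alpha - s^2/2} \E[e^{sZ}] = e^{s\alpha}$, and matching the $s^n$ coefficient gives exactly $\alpha^n / \sqrt{n!}$. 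Because $\E[H_n(X)]$ and $\mu^{\otimes n}/\sqrt{n!}$ are both symmetric tensors and a symmetric tensor is determined by its values on rank-one symmetric contractions $v^{\otimes n}$, agreeing on all unit $v$ suffices to conclude equality.

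For the covariance, observe that for $X \sim N(0,I)$ and $n \geq 1$, the mean identity just established (with $\mu = 0$) gives $\E[H_n(X)] = 0$, so
$$\Cov_{X \sim N(0,I)}[H_n(X)] \;=\; \E_{X \sim N(0,I)}[H_n(X) \otimes H_n(X)].$$
But this is precisely the $k = m = n$ case of the Hermite orthogonality relation already stated in the preliminaries: $\E_{X \sim N(0,I)}[H_k(X) \otimes H_m(X)]$ equals $0$ if $k \neq m$ and $\mathrm{Sym}_k(I_{d^k})$ if $k = m$. So this half is essentially immediate. (The $n = 0$ case is trivial since $H_0 \equiv 1$, giving zero covariance, consistent with $\mathrm{Sym}(I_{d^0}) = 1 - 1 = 0$ interpretation, or simply handled separately.)

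The main obstacle is purely notational: one has to be careful about the normalization $1/\sqrt{n!}$ versus $1/n!$ when switching between the probabilist's Hermite polynomials $He_n$ and the normalized $h_n$, and about what $\mathrm{Sym}_k(I_{d^k})$ means as a $2k$-tensor. No deep step is required beyond the generating-function computation. An alternative route for the mean would be a direct combinatorial proof: expanding the partition-sum definition of $H_n(X)$ for $X = \mu + Z$ with $Z \sim N(0, I)$ and invoking Isserlis' theorem, the $(-I_{i_a,i_b})$ contributions from pair-partitions cancel the Gaussian pair-moment contributions from $\E\bigl[\prod_{c} (\mu + Z)_{i_c}\bigr]$, leaving only the all-singletons term $\mu^{\otimes n}$ with coefficient $1/\sqrt{n!}$. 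I would include only the shorter generating-function argument in the final writeup.
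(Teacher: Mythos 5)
Your proof is correct. It is worth noting, though, that the paper does not actually supply a proof of this Fact: it is stated bare in the preliminaries, and where the first identity $\E_{X\sim N(\mu,I)}[H_n(X)]=\mu^{\otimes n}/\sqrt{n!}$ is needed later (e.g., in Lemma~\ref{lem:Hermite-GMMs} and the MLR computations) the paper simply cites Lemma 2.7 of \cite{Kane20}. So there is no argument in the paper to compare against; yours is a legitimate self-contained substitute.

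A few small remarks on your argument. For the mean, the reduction to scalar identities via contraction with $v^{\otimes n}$ implicitly uses that both $\E[H_n(X)]$ and $\mu^{\otimes n}/\sqrt{n!}$ are symmetric tensors; $H_n(x)$ is indeed symmetric because Definition~\ref{def:Hermite-tensor} sums over all partitions of $[n]$, so permuting the slots permutes the summands. The generating-function step and the normalization bookkeeping (with $h_n = He_n/\sqrt{n!}$, the generating identity is $\sum_n (s^n/\sqrt{n!})\,h_n(t)=e^{st-s^2/2}$) check out, and the coefficient match does yield $\alpha^n/\sqrt{n!}$. For the covariance, your argument is essentially a restatement: once $\E[H_n(X)]=0$ for $n\ge 1$ (from the mean identity at $\mu=0$), the claimed identity coincides with the $k=m$ case of the Hermite orthogonality relation stated a few lines earlier in the preliminaries, which is itself unproven in the paper; so you are deriving the Fact from an equally standard but equally unproven preliminary. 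That is acceptable given the way the paper is written, but if a fully self-contained proof were wanted, you would need to prove that orthogonality relation (e.g., again by a tensor-polarization reduction to the one-dimensional orthogonality $\E[h_k(G)h_m(G)]=\delta_{k,m}$). Finally, your aside on the $n=0$ edge case is slightly off: $\mathrm{Sym}(I_{d^0})=1$, not $0$, so the covariance identity literally fails at $n=0$ and must simply be understood as applying for $n\ge 1$; the ``$1-1=0$'' reading is not a valid interpretation of the right-hand side. None of this affects the substance.
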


For a polynomial $p: \R^d \to \R$, we will use $\|p\|_r \eqdef \E_{x \sim N(0, I)} [|p(x)|^r]^{1/r}$,
for $r \geq 1$. We recall the following well-known hypercontractive
inequality~\cite{Bon70,Gross:75}:
\begin{fact} \label{thm:hc}
Let $p: \R^d \to \R$ be a degree-$k$ polynomial and $q>2$.  Then
$\|p\|_q \leq (q-1)^{k/2} \|p\|_2$.
\end{fact}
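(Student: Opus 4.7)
The plan is to prove Fact~\ref{thm:hc} via the Ornstein--Uhlenbeck (OU) semigroup and the classical Nelson--Gross hypercontractive estimate. Define, for $\rho \in [0,1]$, the operator $T_\rho$ on $L^2(\R^d, N(0,I))$ by
\[
T_\rho f(x) \;=\; \E_{y \sim N(0,I)}\bigl[f(\rho x + \sqrt{1-\rho^2}\, y)\bigr].
\]
The key input is Nelson's Gaussian hypercontractivity theorem: for every $q > 2$ and every $\rho \in [0, 1/\sqrt{q-1}]$, the operator $T_\rho$ maps $L^2$ into $L^q$ as a contraction, i.e., $\|T_\rho f\|_q \leq \|f\|_2$ for all $f \in L^2(\R^d, N(0,I))$. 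I would quote this as a black box, since it is exactly the result of \cite{Bon70, Gross:75}.

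The second ingredient is that $T_\rho$ is diagonalized by the Hermite basis: if $\phi$ is a Hermite polynomial of degree exactly $j$, then $T_\rho \phi = \rho^j \phi$. Given a polynomial $p$ of degree at most $k$, decompose it into its orthogonal Hermite components $p = \sum_{j=0}^k p_j$, where $p_j$ is the part of degree exactly $j$. Set $\rho := 1/\sqrt{q-1}$ and define
\[
\tilde p \;:=\; \sum_{j=0}^k \rho^{-j}\, p_j,
\]
so that $T_\rho \tilde p = p$ by the eigenvalue identity above. By Hermite orthogonality and the fact that $\rho \leq 1$, we get
\[
\|\tilde p\|_2^2 \;=\; \sum_{j=0}^k \rho^{-2j}\, \|p_j\|_2^2 \;\leq\; \rho^{-2k} \sum_{j=0}^k \|p_j\|_2^2 \;=\; \rho^{-2k}\, \|p\|_2^2.
\]

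Combining these two facts concludes the proof:
\[
\|p\|_q \;=\; \|T_\rho \tilde p\|_q \;\leq\; \|\tilde p\|_2 \;\leq\; \rho^{-k}\, \|p\|_2 \;=\; (q-1)^{k/2}\, \|p\|_2,
\]
where the first inequality uses Nelson's hypercontractivity at the chosen $\rho$. The only genuinely substantive step is the hypercontractive estimate $\|T_\rho f\|_q \leq \|f\|_2$, and I expect this to be the main (and only) obstacle; I would not reprove it here, but note that the standard paths are either (i) passing from the two-point Bonami inequality on $\{\pm 1\}^n$ via tensorization plus a CLT argument, or (ii) deriving it directly from the Gaussian logarithmic Sobolev inequality by differentiating $\|T_{\rho(t)} f\|_{q(t)}$ along a suitable schedule. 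Once that result is in hand, everything else is linear algebra on the Hermite expansion.
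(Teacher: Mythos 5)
Your proof is correct. The paper states Fact~\ref{thm:hc} without proof, simply citing \cite{Bon70,Gross:75}, so there is no internal argument to compare against; but your derivation is exactly the standard route those references support. The chain of steps is sound: Nelson's hypercontractive bound $\|T_\rho f\|_q \le \|f\|_2$ at $\rho = 1/\sqrt{q-1}$, the eigenfunction identity $T_\rho p_j = \rho^j p_j$ on the degree-$j$ Hermite component, the orthogonal decomposition $\|\tilde p\|_2^2 = \sum_j \rho^{-2j}\|p_j\|_2^2 \le \rho^{-2k}\|p\|_2^2$ (valid since $\rho \le 1$ makes $\rho^{-2j}$ nondecreasing in $j$), and finally $\|p\|_q = \|T_\rho \tilde p\|_q \le \|\tilde p\|_2 \le (q-1)^{k/2}\|p\|_2$. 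You are also right that the only nontrivial input is the hypercontractive inequality itself, which is reasonable to invoke as a black box here exactly as the paper does.
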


\section{Main Result: Proof of Proposition~\ref{prop:main-informal}} \label{sec:alg-general}

The structure of this section is as follows:
Section~\ref{ssec:formal} presents our formal framework
and the detailed statement of Proposition~\ref{prop:main-informal}.
In Section~\ref{ssec:rp}, we establish some technical results that are necessary
for our implicit tensor estimation algorithm and its analysis.
Finally, in Section~\ref{ssec:alg-analysis} we give the pseudocode of our algorithm
and prove its correctness.

\subsection{Formal Framework and Statement of Main Result} \label{ssec:formal}

To state our result, we first need to formalize what we mean
by a ``nice'' arithmetic circuit to compute our tensors. The key point is that
this method of computation must be compatible with our method
for implicitly representing the subspaces $W_t$ (described in Section~\ref{sssec:tech-impl}):
namely, repeatedly tensoring with $\R^d$ and then applying
some linear transformation back to $\R^k$. Compatibly with this,
we can add two elements of a single $W_t$ or multiply one by a scalar.
We can also take the tensor product of a tensor in $W_t$ with an element of $\R^d$
and map the result to $W_{t+1}$. Unfortunately, we cannot easily perform other operations
like take the tensor product of two elements of $W_t$ and find its image in $W_{2t}$.
The notion of a \emph{Sequential Tensor Computation} below amounts
to an arithmetic circuit that performs only operations compatible with these computations.

\begin{definition}[Sequential Tensor Computation] \label{def:seq-tensor}
A {\em sequential tensor computation} (STC) is an arithmetic circuit $\mathcal{S}_t$
that takes as input $\mathcal{I}$ a set of scalars and vectors in $\R^d$ and outputs
an order-$t$ tensor of dimension $d$,
denoted by $\mathcal{S}_t(\mathcal{I}) \in (\R^d)^{\otimes t}$,
for some $t \in \Z_+$,
by applying the following operations:
\begin{itemize}
\item[(i)] Multiplication of a tensor, vector, or scalar by a scalar.
\item[(ii)] Addition of two tensors, vectors or scalars of the same order and dimension.
\item[(iii)] Tensor product of an order-$k$ tensor, for some $k < t$,
by a vector (in the last component)
to obtain an order-$(k+1)$ tensor.
\end{itemize}
We let the order of $\mathcal{S}_t$ be used to denote the
order of the output tensor, and the size of $\mathcal{S}_t$
to denote the size of the underlying arithmetic circuit.
\end{definition}

\noindent We note that an STC is a general way to produce higher-order tensors
that is compatible with the kinds of computations that we need to perform on them
(see Lemmas~\ref{lem:rpp-stc} and~\ref{lem:tensor-rpp}) in the context of learning applications.

Our main algorithmic result (i.e., the formal version of Proposition~\ref{prop:main-informal}) is the following:

\begin{proposition}[Main Algorithmic Result on Implicit Tensor Computation] \label{prop:main}
Let $d, k \in \Z_+$, $w_i \in \R_{\geq 0}$ and $v_i \in \R^d$ for all $i \in [k]$.
Consider the sequence of  dimension-$d$ tensors $\{M_t\}$ with $M_t$ order-$t$, for $t\in\Z_+$,
defined by $M_t = \sum_{i=1}^{k} w_i v_i^{\otimes t}$.

Let $m \in \Z_+$.
Suppose that for all positive integers $t \leq 2m$, with $t$ even or equal to $m$,
there exists an STC $\mathcal{S}_t$ of order-$t$ and
size at most $S$ with the following property:
when given as input $\mathcal{I}$
a sample drawn from some efficiently samplable distribution $\mathcal{D}_t$,
the output tensor $\mathcal{S}_t(\mathcal{I})$, $\mathcal{I} \sim \mathcal{D}_t$,
has mean $M_t$ and covariance bounded above by $V$, for some $V>0$.
Let $\mathcal{F}_m$ be another STC
of order-$m$ and size at most $S$, whose input is partitioned
as $(\mathcal{X}, \mathcal{Y})$, with the following properties:
when $\mathcal{X} \sim X$ and $\mathcal{Y} \sim \mathcal{D}'$, where $\mathcal{D'}$ is efficiently samplable,
the tensor $\mathcal{F}_m(\mathcal{X}, \mathcal{Y})$ has expectation over
$\mathcal{Y} \sim \mathcal{D'}$
\begin{equation}\label{eqn:T}
T(\mathcal{X}): = \E_{\mathcal{Y} \sim \mathcal{D'}}[\mathcal{F}_m(\mathcal{X}, \mathcal{Y})]
\end{equation}
and second moment over both $\mathcal{X}$ and $\mathcal{Y}$ bounded by $V$.

There is an algorithm that given $S, V, k$,
sample access to $\mathcal{D}_t, \mathcal{D'}$, and $\tau>0$,
it has the following guarantee:
on input a single sample $\mathcal{X} \sim X$,
the algorithm draws $N$
samples from each of $\mathcal{D}_t, \mathcal{D'}$,
runs in $\poly(N, S, d)$ time,
and with probability $1-\tau$ (over the samples drawn from $\mathcal{D}_t, \mathcal{D'}$)
outputs an approximation $\mathcal{A}$ to $\langle T(\nnew{\mathcal{X}}), M_m \rangle$ such that
\begin{equation} \label{eqn:error-bound}
\E_{\mathcal{X} \sim X}[(\mathcal{A}-\langle T(\nnew{\mathcal{X}}), M_m \rangle)^2] \leq
\frac{\poly\left(k, m, d, V, 1/\tau, 1+\littlesum_{i=1}^k w_i \right) \left(1+\max_{i \in [k]}\|v_i\|_2\right)^{2m}}{N^{1/2}} \;.
\end{equation}
\end{proposition}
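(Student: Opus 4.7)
I would implement the iterative dimension-reduction scheme outlined in Section~\ref{sssec:tech-impl}. For each $t=1,2,\ldots,m$ the algorithm maintains a compact representation of a $k$-dimensional subspace $W_t\subseteq(\R^d)^{\otimes t}$ that approximately contains every $v_i^{\otimes t}$, and hence approximately contains $M_t$. This representation is an isometry $\Phi_t:W_t\to\R^k$ built recursively as $(\Phi_{t-1}\otimes I_d)$ followed by an isometry from $(\Phi_{t-1}\otimes I_d)(W_t)\subseteq\R^{kd}$ onto $\R^k$. Lemmas~\ref{lem:rpp-stc}/\ref{lem:tensor-rpp} (to be established in Section~\ref{ssec:rp}) will guarantee that for any STC $\mathcal{S}_t$ the vector $\Phi_t(\mathcal{S}_t(\mathcal{I}))\in\R^k$ can be evaluated in $\poly(S,d,k,t)$ time, without ever materializing the $d^t$-entry tensor $\mathcal{S}_t(\mathcal{I})$ --- this is the primitive that makes everything algorithmic.

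The inductive step uses the STCs $\mathcal{S}_{2t}$ to refine the subspace. Given $\Phi_{t-1}$, every $v_i^{\otimes t}=v_i^{\otimes(t-1)}\otimes v_i$ lies approximately in $W_{t-1}\otimes\R^d$, so the matricized $M_{2t}=\sum_i w_i v_i^{\otimes t}\otimes v_i^{\otimes t}$ approximately lives in the $(kd)\times(kd)$ subspace $(W_{t-1}\otimes\R^d)^{\otimes 2}$. I would draw $N$ samples $\mathcal{I}\sim\mathcal{D}_{2t}$, push each $\mathcal{S}_{2t}(\mathcal{I})$ through $(\Phi_{t-1}\otimes I_d)^{\otimes 2}$ via the evaluation primitive, and average. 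The covariance bound $V$ together with the contractivity of $\Phi_{t-1}$ yields a Frobenius error of $\tilde O(kd\sqrt{V/N})$ on the resulting $kd\times kd$ matrix. Extracting its top-$k$ left singular subspace gives a new $k$-dimensional subspace of $W_{t-1}\otimes\R^d$ together with the isometry needed to form $\Phi_t$. A Wedin/Davis--Kahan argument then bounds the distance between $\mathrm{span}\{v_i^{\otimes t}\}_i$ and $W_t$ in terms of this Frobenius error, the spectral gap at the $k$-th singular value of the ideal compressed $M_{2t}$, and the prior approximation error inherited from $W_{t-1}$.

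Once $\Phi_m$ is built, the estimator is $\mathcal{A}=\langle\widehat T,\widehat{M}_m\rangle_{\R^k}$, where $\widehat{M}_m=\frac{1}{N}\sum\Phi_m(\mathcal{S}_m(\mathcal{I}))$ on fresh $\mathcal{I}\sim\mathcal{D}_m$ and $\widehat T=\frac{1}{N}\sum\Phi_m(\mathcal{F}_m(\mathcal{X},\mathcal{Y}))$ on fresh $\mathcal{Y}\sim\mathcal{D}'$. Both averages live in $\R^k$, so each coordinate has mean-squared error $O(V/N)$. Since $M_m$ lies in $W_m$ up to the recursion error, the decomposition $\langle T(\mathcal{X}),M_m\rangle = \langle\mathrm{proj}_{W_m}T(\mathcal{X}),M_m\rangle + \langle T(\mathcal{X}),M_m-\mathrm{proj}_{W_m}M_m\rangle$ splits the total error into a variance part (handled by Cauchy--Schwarz on the two $\R^k$ averages) and a bias part controlled by the subspace approximation guarantee, yielding \eqref{eqn:error-bound}.

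\textbf{Main obstacle.} The hard part will be (i) propagating the Wedin/Davis--Kahan perturbation cleanly through all $m$ levels of recursion: naively, errors compound multiplicatively, whereas we need only a $\poly(k,m,d,V,1/\tau,1+\sum_i w_i)$ prefactor together with the $(1+\max_i\|v_i\|_2)^{2m}$ term that reflects the geometric growth of $\|v_i^{\otimes t}\|$; and (ii) verifying that the variance bound $V$ is preserved under composition with $\Phi_{t-1}\otimes I_d$, which holds because each $\Phi_t$ is an isometry (hence a contraction on the ambient space). The positivity of the $w_i$'s is used essentially throughout: without it, the span of $\{v_i^{\otimes t}\}_i$ need not coincide with the top-$k$ singular subspace of $M_{2t}$, and the entire bootstrapping recursion would break down.
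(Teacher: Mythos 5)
Your high-level plan matches the paper's algorithm closely: build a recursive pseudo-projection $\Phi_t$ level by level using the even-order STCs $\mathcal{S}_{2t}$ pushed through $(\Phi_{t-1}\otimes I_d)^{\otimes 2}$, extract the top-$k$ singular subspace, and at the end estimate $\langle T,M_m\rangle$ by inner product of two compressed averages. The appeal to Lemmas~\ref{lem:rpp-stc}/\ref{lem:tensor-rpp} as the algorithmic primitive, the variance preservation under contraction, the split of the final error into a variance term and a subspace-bias term, and the observation that $w_i\geq 0$ is what makes the matricized $M_{2t}$ PSD and hence $\mathrm{image}(M_{2t})=\mathrm{span}\{v_i^{\otimes t}\}$ are all the same as in the paper.

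The genuine gap is in the subspace perturbation step. You propose to bound the angle between $W_t$ and $\mathrm{span}\{v_i^{\otimes t}\}$ via Wedin/Davis--Kahan, explicitly ``in terms of \ldots the spectral gap at the $k$-th singular value of the ideal compressed $M_{2t}$.'' That gap is $\sigma_k\bigl(\sum_i y_{i}y_i^\top\bigr)$ where $y_i\approx\sqrt{w_i}\,v_i^{\otimes t}$, and it can be arbitrarily small (tiny $w_i$, or nearly collinear $v_i^{\otimes t}$'s). A Wedin bound of the form $\delta/\sigma_k$ would therefore blow up, and nothing of that kind appears in the target bound~\eqref{eqn:error-bound}, which depends only on $k,m,d,V,1/\tau,1+\sum w_i$ and $(1+\max\|v_i\|_2)^{2m}$. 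The paper circumvents this in Lemma~\ref{lem:eta-r} by never controlling the full subspace angle: it bounds, for each $i$ separately, the distance $\|v\|_2$ of $y=(\Phi_r\otimes I_d)^{\top}y_{i,r}=u+v$ (with $u\in U_{r+1}$, $v\perp U_{r+1}$) to $U_{r+1}$, using the two one-sided quadratic-form estimates $v^{\top}M^{\ast}v\geq\langle v,y\rangle^2=\|v\|_2^4$ and $v^{\top}A^{\ast}v\leq\lambda_{k+1}\|v\|_2^2=O(\delta)\|v\|_2^2$, whence $\|v\|_2=O(\sqrt{\delta})$ with no gap dependence. This is precisely what produces the recursion $\eta_{r+1}\leq\eta_r\max_i\|v_i\|_2+O(\sqrt{\delta})$ that you flag as ``the hard part'' but do not resolve; unrolling it gives $\eta_m=O(m\sqrt{\delta})\max\{1,(\max_i\|v_i\|_2)^{m-1}\}$, which is the source of the $(1+\max_i\|v_i\|_2)^{2m}$ factor and the $N^{-1/2}$ rate (the square root coming exactly from the quartic-vs-quadratic comparison, not from a generic $\sin\Theta$ theorem). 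Replacing your Wedin step with this per-vector quadratic-form argument is the missing ingredient; the rest of your outline would then go through essentially as in the paper.
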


\paragraph{Discussion} In words, we show that, with high probability over the samples
drawn from $\mathcal{D}_t$ and $\mathcal{D}'$,
the expected squared error of our estimator over a random choice of \nnew{$\mathcal{X} \sim X$}
is bounded as shown in \eqref{eqn:error-bound} above. In particular, our algorithm can be thought of
as taking samples from $\mathcal{D}_t$ and $\mathcal{D}'$ as input and
returning an evaluator for a function of \nnew{$\mathcal{X}$} that with high probability
is close in $L_2$ to the function $\langle T(\nnew{\mathcal{X}}),M_m \rangle$.

\begin{remark} \label{rem:gen}
{\em It turns out that this kind of generalization described above is necessary for our applications---rather than
simply finding the inner product of $M_m$ with a particular tensor $T$.
This is because, for example, we might want to compute $\langle H_m(x),M_m\rangle$
for the Hermite tensor $H_m$ evaluated at a random $x$.
While we could simply let $T=H_m(x)$ for some given $x$,
that would run into the problem that $H_m(x)$ is itself quite large---so a naive application of our result
would give large error. However, as the second moment of $H_m(x)$ is bounded,
our result says that the mean squared error over a random choice of $x$ is small.}
\end{remark}

\noindent A couple of additional remarks are in order. First, we did not attempt to optimize the polynomial dependence
on the relevant parameters, as this would not affect the qualitative aspects of our
learning theory applications. Second, the dependence
on the failure probability $\tau$ can be made logarithmic, i.e.,  $\log(1/\tau)$,
by running the algorithm $O(\log(1/\tau))$ times and taking a median;
such a dependence is not required in our applications.


\subsection{Recursive Pseudo-projections and their Properties} \label{ssec:rp}

In order to describe the subspaces $W_t$, we will need a general tool
that we call a \emph{recursive pseudo-projection}, which we develop below. 
We start by defining the notion of a pseudo-projection.

\begin{definition}[Pseudo-projection] \label{def:pp}
A linear transformation $A: X \to W$ between finite dimensional inner product spaces
$X, W$
is called a {\em pseudo-projection} if $AA^{\top} = I_W$.
\end{definition}

\noindent The following lemma establishes two basic properties of pseudo-projections that we will require.

\begin{lemma}[Properties of Pseudo-projections]\label{lem:pp-basics}
The following properties hold:
\begin{enumerate}[leftmargin=*]
\item The composition of pseudo-projections is a pseudo-projection.
\item A pseudo-projection $A: X \to W$ can be written as a projection
$P: X \to U$, for some subspace $U$ of $X$, composed with an isometry from $U$ to $W$.
\end{enumerate}
\end{lemma}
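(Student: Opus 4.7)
The plan is to handle the two parts separately; both reduce to short linear-algebra manipulations using only the defining identity $AA^\top = I_W$ and standard facts about adjoints.

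For part (1), I would simply compose and compute. Let $A : X \to Y$ and $B : Y \to W$ be pseudo-projections, so $AA^\top = I_Y$ and $BB^\top = I_W$. Then
\[
(BA)(BA)^\top = B A A^\top B^\top = B \, I_Y \, B^\top = B B^\top = I_W,
\]
so $BA$ is a pseudo-projection. Nothing subtle happens here.

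For part (2), the natural candidate is $U := (\ker A)^\perp \subseteq X$, with $P : X \to U$ the orthogonal projection and $\iota := A|_U : U \to W$. I would first record the standard identity $(\ker A)^\perp = \mathrm{image}(A^\top)$, which holds in any finite-dimensional inner product space. Since $A$ vanishes on $\ker A$, we have $A = \iota \circ P$, so the only real content is to show that $\iota$ is an isometry, i.e.\ that $\iota^\top \iota = I_U$, equivalently $A^\top A u = u$ for every $u \in U$. Given $u \in U = \mathrm{image}(A^\top)$, pick $w \in W$ with $u = A^\top w$; then
\[
A^\top A u = A^\top A A^\top w = A^\top (AA^\top) w = A^\top I_W w = A^\top w = u,
\]
which is exactly what we need. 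This also shows $\iota$ is surjective (its image contains $A A^\top w = w$ for every $w \in W$) and injective (an isometry is automatically injective), hence a genuine isometry onto $W$.

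I do not anticipate a real obstacle: both parts are one-line computations once one writes down the right decomposition. The only point that requires a moment of care is invoking $(\ker A)^\perp = \mathrm{image}(A^\top)$ to guarantee that every $u \in U$ can be written as $A^\top w$, which is precisely what lets the identity $AA^\top = I_W$ be applied in the chain above. Everything else is bookkeeping.
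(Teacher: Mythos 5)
Your proof is correct, and it identifies the same subspace $U = (\ker A)^\perp = \mathrm{image}(A^\top)$ that the paper uses (the paper's $U$ is the image of the projection $A^\top A$, which coincides with $\mathrm{image}(A^\top)$). The difference is stylistic: the paper invokes the singular value decomposition of $A$ (right singular values all equal to $1$, left singular values equal to $1$ or $0$) to conclude that $A^\top A$ is an orthogonal projection, then observes that $A^\top: W \to U$ and $A: U \to W$ are mutually inverse norm-preserving maps, and finally writes $A = A(A^\top A)$. You instead verify the required identities by hand: you check $\iota^\top\iota = I_U$ on $U = \mathrm{image}(A^\top)$ by the one-line computation $A^\top A A^\top w = A^\top w$, and you check surjectivity of $\iota$ directly from $AA^\top = I_W$. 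Your route avoids any appeal to SVD and is slightly more self-contained, at the small cost of having to recall $(\ker A)^\perp = \mathrm{image}(A^\top)$; either argument is complete, and both buy essentially the same thing.
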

\begin{proof}
Property (1) follows immediately from the definition.
For property (2), note that
the right singular values of $A$
are equal to $1$, and its left singular values are $1$ and $0$.
Therefore, $A^{\top} A$ is a projection onto some subspace
$U$ of $X$ of the same dimension as $W$.
Since $A^{\top}$ maps $W$ to $U$ and preserves norms, it is an isometry.
Since $A$, mapping $U$ to $W$, is its inverse, it follows that
$A: U \to W$ (the restriction of $A$ on $U$) is also an isometry.
Note that $A: X \to W$ can be written as
$A = A(A^{\top}A) = (AA^{\top})A$,
which is the composition of a projection onto $U$ and an isometry.
\end{proof}

\noindent In order to describe the {\em recursive} projections, we introduce the following definition:

\begin{definition}[Recursive Pseudo-projection] \label{def:rpp}
For $d, t,  n \in \Z_+$, a {\em $(d, t, n)$-recursive pseudo-projection}
is a pseudo-projection $\Phi: (\R^d)^{\otimes t} \to \R^n$,
defined as follows. There exist positive integers $n_0,n_1,\ldots,n_t$
with $n_0=1$ and $n_t = n$ and base pseudo-projections $\phi_i: \R^{n_{i-1}} \otimes \R^d \to \R^{n_i}$,
$i \in [t]$, such that the following holds: We define
$\Phi: \R^1 \otimes (\R^d)^{\otimes t} \to \R^n$
by starting with a tensor $T_0$ in $\R^{n_0} \otimes (\R^d)^{\otimes t}$,
applying $\phi_1$ to the first two components of $T_0$
to get a tensor $T_1$ in $\R^{n_1} \otimes (\R^d)^{\otimes (t-1)}$, then applying
$\phi_2$ to the first two components of $T_1$
to get a tensor $T_2$ in $\R^{n_2} \otimes (\R^d)^{\otimes (t-2)}$;
and more generally applying $\phi_{i}$, $i \in [t]$, to the first two components of $T_{i-1}$,
which is a tensor in $\R^{n_{i-1}} \otimes (\R^d)^{\otimes (t-i+1)}$,
to get a tensor $T_{i}$ in $\R^{n_{i}} \otimes (\R^d)^{\otimes (t-i)}$.
We say that the recursive pseudo-projection $\Phi$ has order $t$ and size $\max_i n_i$.
\end{definition}

It is easy to verify that a recursive pseudo-projection is itself a pseudo-projection.
We will require a few basic properties of recursive pseudo-projections.
First, we show that a recursive pseudo-projection can efficiently
be applied to the output of a sequential tensor computation:

\begin{lemma}[Efficient Computation]\label{lem:rpp-stc}
Let $\Phi: (\R^d)^{\otimes t} \to \R^n$ be an order-$t$ recursive pseudo-projection
whose defining pseudo-projections $\phi_i$ are given explicitly as matrices.
Let $\mathcal{S}$ be a sequential tensor computation returning an order-$t$ and dimension-$d$ tensor.
Then there is an algorithm that given input vectors $v_i$ for $\mathcal{S}$
returns $\Phi(\mathcal{S}(\{v_i\}))$ and runs in time $\poly(d, \size(\Phi), \size(\mathcal{S}))$.
\end{lemma}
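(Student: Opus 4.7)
The plan is to simulate the arithmetic circuit $\mathcal{S}$ gate-by-gate, but with each intermediate tensor replaced by a succinct ``compressed'' representation lying in a space of dimension at most $\size(\Phi)$. Concretely, for $0 \leq k \leq t$ define the order-$k$ \emph{prefix} of $\Phi$ as $\Phi^{(k)} : (\R^d)^{\otimes k} \to \R^{n_k}$, obtained by applying the base pseudo-projections $\phi_1, \ldots, \phi_k$ successively to the first two factors (where we identify $(\R^d)^{\otimes k}$ with $\R^{n_0} \otimes (\R^d)^{\otimes k}$ using $n_0 = 1$); in particular $\Phi^{(t)} = \Phi$. For each intermediate order-$k$ tensor $A$ appearing as a gate of $\mathcal{S}$, the algorithm stores only $\widetilde{A} := \Phi^{(k)}(A) \in \R^{n_k}$, and never materializes $A$ itself.

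The main step is to verify, by structural induction on the gates of $\mathcal{S}$, that each of the three STC operations can be carried out on these compressed representations. Scalar multiplication and same-order addition follow immediately from the linearity of $\Phi^{(k)}$: $\widetilde{cA} = c\widetilde{A}$ and $\widetilde{A + B} = \widetilde{A} + \widetilde{B}$. The only non-trivial case is the tensor product of an order-$k$ tensor $A$ with an input vector $v \in \R^d$, for which I will establish the key identity
\[
\Phi^{(k+1)}(A \otimes v) \;=\; \phi_{k+1}\bigl(\Phi^{(k)}(A) \otimes v\bigr) \;=\; \phi_{k+1}\bigl(\widetilde{A} \otimes v\bigr) \;.
\]
This follows directly from the recursive definition of $\Phi$: because operation~(iii) of Definition~\ref{def:seq-tensor} appends $v$ as the \emph{last} tensor factor, the first $k$ base pseudo-projections $\phi_1, \ldots, \phi_k$ act only on the $\R^{n_0} \otimes (\R^d)^{\otimes k}$ portion corresponding to $A$, leaving $v$ untouched; applying $\phi_{k+1}$ last then compresses the remaining $\R^{n_k} \otimes \R^d$ factor.

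The base cases are easy: input scalars already live in $\R = \R^{n_0}$ and need no compression, and input vectors in $\R^d$ only ever appear as the second argument of a tensor-product gate (by operation~(iii)), so they too need not be pre-compressed. Processing the gates of $\mathcal{S}$ in topological order and reading off the compressed value at the root yields $\Phi^{(t)}(\mathcal{S}(\{v_i\})) = \Phi(\mathcal{S}(\{v_i\}))$, as required. For the running time, each compressed gate takes time $\poly(d, \max_i n_i) \leq \poly(d, \size(\Phi))$: scalar multiplication and addition are $O(n_k)$, while the tensor-product step requires forming $\widetilde{A} \otimes v \in \R^{n_k d}$ and multiplying by the explicit $n_{k+1} \times n_k d$ matrix $\phi_{k+1}$, a total of $O(n_{k+1} n_k d)$ operations. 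Summing over the $\size(\mathcal{S})$ gates yields the claimed running time of $\poly(d, \size(\Phi), \size(\mathcal{S}))$. The whole proof essentially boils down to the displayed identity; this is where the fact that STCs only allow ``right-appending'' vectors in operation~(iii) is crucial---had the definition permitted tensoring on the \emph{left} or tensoring two higher-order tensors together, the compressed simulation would break down, which is precisely why Definition~\ref{def:seq-tensor} is tailored as it is.
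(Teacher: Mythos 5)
Your argument is correct and follows essentially the same approach as the paper's, which also processes the circuit gate-by-gate via dynamic programming while storing only the $\Phi^{(k)}$-compressed image of each wire; you make explicit the commutation identity $\Phi^{(k+1)}(A\otimes v)=\phi_{k+1}(\Phi^{(k)}(A)\otimes v)$ and the per-gate cost, both of which the paper leaves as ``straightforward.'' One small imprecision: a vector-valued wire can also feed the \emph{left} slot of a tensor-product gate (as an order-$1$ tensor) or an addition gate, not only the right slot, so for order-$1$ wires one should keep both the raw $\R^d$ vector and its $\phi_1$-image; this costs nothing asymptotically and does not affect correctness.
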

\begin{proof}
At any point in the circuit defining $\mathcal{S}$
if there is an order-$s$ tensor, we can compute the first $s$ steps of
$\Phi$ applied to that value.
The goal is to for every wire in the circuit computing $\mathcal{S}$ to compute the value
of the appropriate iterate of the operation defining $\Phi$ applied to the value being
carried by that wire in the circuit.
We claim that this can be computed efficiently via dynamic programming
(see Figure~\ref{fig:DP} for an illustration).
We start with the wires at the beginning of the circuit
and work our way towards the end. It suffices to verify that if we know
the appropriate values for the input wires,
we can efficiently compute the corresponding value of the output wire.
This involves the operations of tensor summation,
multiplication of a tensor by a scalar sum,
and tensor product of tensor with a vector in the last coordinate.
It is straightforward to see that any of the allowed operations in $\mathcal{S}$ are compatible
with this computation and that it runs in polynomial time in the relevant parameters.
\end{proof}

\begin{figure}
\begin{center}
\epsfig{file=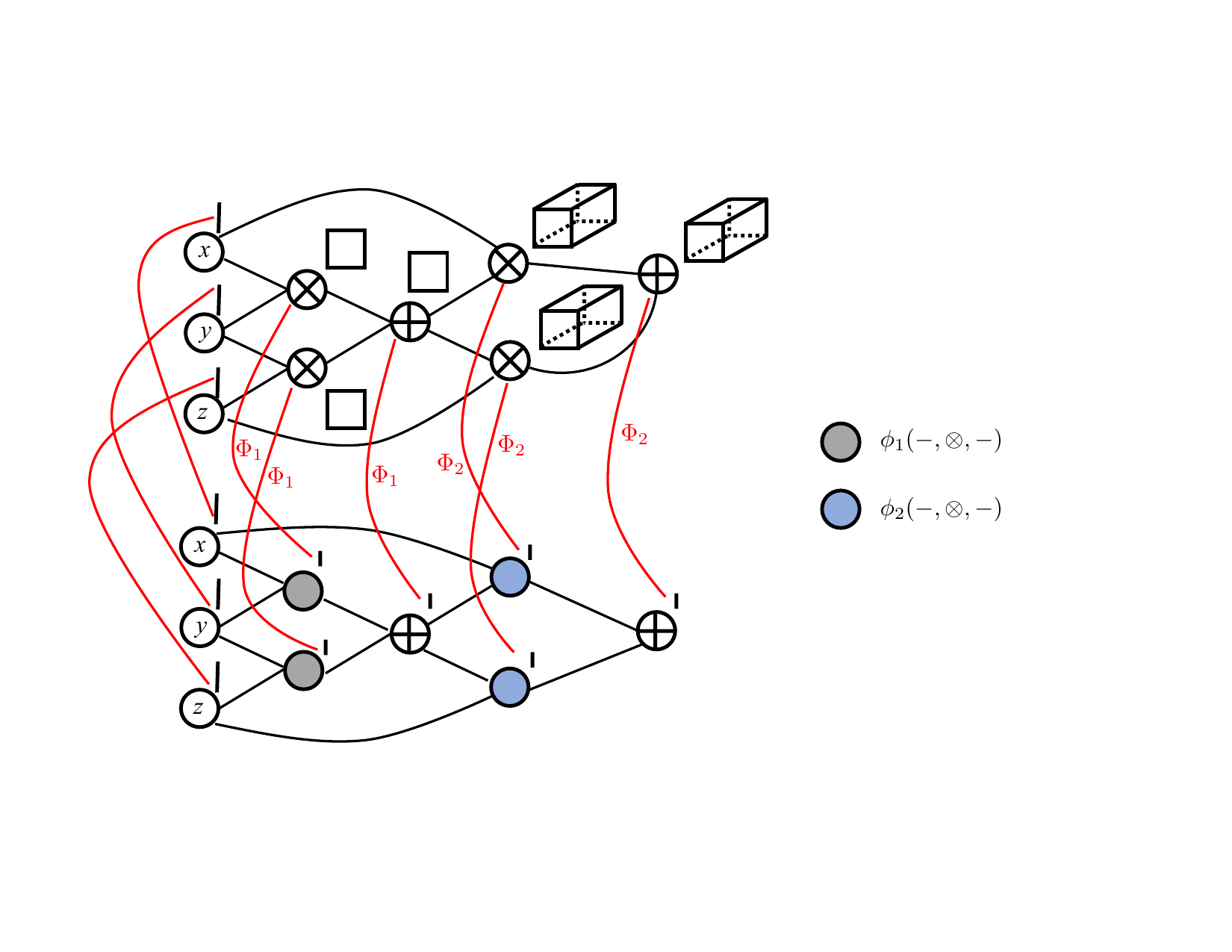, width=20cm}
\end{center}
\caption{The upper circuit represents the original Sequential Tensor Computation (STC).
The markings next to the circuit elements represent the complexity of the elements being dealt with.
The original inputs are vectors (lines). Then many of the middle components deal with $2$-tensors (squares),
and finally the circuit elements on the right deal with $3$-tensors (cubes). In an STC of higher degree,
one might need to use even higher degree tensors which quickly become computationally unmanageable.
Fortunately, we are able to simulate this computation using the circuit below,
keeping track of the projected values of each circuit element.
This keeps the computational complexity low by reducing everything
to vectors of dimension at most $\dim(\Phi)$ (represented by short lines).}
\label{fig:DP}
\end{figure}

\noindent We also need to be able to efficiently take tensor products of recursive pseudo-projections:

\begin{lemma}[Efficient Tensorization] \label{lem:tensor-rpp}
Given recursive pseudo-projections
$\Phi: (\R^d)^{\otimes t} \to \R^n$ and
$\Theta: (\R^d)^{\otimes s} \to \R^m$ along with their defining pseudo-projections
$\phi_i, \theta_j$, given explicitly as matrices, one can efficiently
construct the recursive pseudo-projection
$\Phi \otimes \Theta: (\R^d)^{\otimes (t+s)} \to \R^{n m}$ of size
$\max(\size(\Phi), n \; \size(\Theta))$.
\end{lemma}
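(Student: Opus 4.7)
The plan is to concatenate the defining sequences of $\Phi$ and $\Theta$, inserting an identity factor on $\R^n$ for the second portion. Let $\Phi$ be defined by base pseudo-projections $\phi_1, \ldots, \phi_t$ with intermediate dimensions $n_0 = 1, n_1, \ldots, n_t = n$, and let $\Theta$ be defined by $\theta_1, \ldots, \theta_s$ with intermediate dimensions $m_0 = 1, m_1, \ldots, m_s = m$. I would define $\Psi := \Phi \otimes \Theta$ by the sequence of base pseudo-projections $\psi_1, \ldots, \psi_{t+s}$ given by $\psi_i = \phi_i$ for $1 \leq i \leq t$, and $\psi_{t+j} = I_n \otimes \theta_j$ for $1 \leq j \leq s$, where in the second case I identify $\R^{n m_{j-1}}$ with $\R^n \otimes \R^{m_{j-1}}$ (and similarly for the output) so that the domain $\R^{n m_{j-1}} \otimes \R^d$ and codomain $\R^{n m_j}$ prescribed by Definition~\ref{def:rpp} match.

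Next, I would verify the two formal requirements. Each $\psi_i$ is a pseudo-projection: the $\phi_i$'s by hypothesis, and for the others we have $(I_n \otimes \theta_j)(I_n \otimes \theta_j)^{\top} = I_n \otimes \theta_j \theta_j^{\top} = I_n \otimes I_{m_j} = I_{n m_j}$. To show that $\Psi$ really computes $\Phi \otimes \Theta$, by linearity it suffices to test on pure tensors $A \otimes B$ with $A \in (\R^d)^{\otimes t}$ and $B \in (\R^d)^{\otimes s}$. Applying $\phi_1, \ldots, \phi_t$ contracts the first $t$ copies of $\R^d$ (those carrying $A$) while leaving the trailing $B$ factors untouched, producing $\Phi(A) \otimes B \in \R^n \otimes (\R^d)^{\otimes s}$; then each $I_n \otimes \theta_j$ acts as the identity on the $\R^n$ coordinate while performing the $j$-th step of $\Theta$ on the rest, so after all $s$ steps we arrive at $\Phi(A) \otimes \Theta(B)$, as required.

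Finally, I would read off the size and runtime. The intermediate dimensions of $\Psi$ are $n_0, n_1, \ldots, n_t = n, n m_1, \ldots, n m_s = nm$, so its size is $\max(\max_i n_i,\; n \cdot \max_j m_j) = \max(\size(\Phi),\; n\cdot \size(\Theta))$, matching the claim. Efficiency is immediate: the $\phi_i$ are given explicitly, and each $I_n \otimes \theta_j$ can be written as a Kronecker product of two given matrices in time polynomial in $n$ and the entries of $\theta_j$; the whole construction therefore runs in time polynomial in $n, d, \size(\Phi), \size(\Theta)$.

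I do not anticipate a genuine obstacle here: the lemma is essentially a bookkeeping exercise. The one mildly delicate point is keeping the tensor-factor positions aligned with Definition~\ref{def:rpp}, which stipulates that each base pseudo-projection acts on the first two tensor factors; the identification $\R^{n m_{j-1}} \cong \R^n \otimes \R^{m_{j-1}}$ is what makes the ``parallel'' copy of $I_n$ sit in the first factor throughout the second phase, and this is the only place where one must be careful to keep the definition literally satisfied.
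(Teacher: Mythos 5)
Your construction is exactly the paper's: the defining sequence $\phi_1,\ldots,\phi_t,\;I_n\otimes\theta_1,\ldots,I_n\otimes\theta_s$. The paper states this in one line without the verification you supply, but the approach is identical.
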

\begin{proof}
The proof follows by using as the base pseudo-projections $\phi_1, \phi_2, \ldots ,\phi_t$,
$I_n \otimes \theta_1, \ldots ,I_n \otimes \theta_s$.
\end{proof}

\subsection{Algorithm and Analysis} \label{ssec:alg-analysis}

\noindent We are now ready to present the pseudo-code of the algorithm establishing Proposition~\ref{prop:main}.

\bigskip


\fbox{\parbox{6.1in}{
{\bf Algorithm} {\tt Implicit-Moment-Estimation}\\
\vspace{-0.2cm}

{\bf Input:}  Parameters $k, d, m \in \Z_+$; STCs $\mathcal{S}_t, \mathcal{F}_m$,
with size bound $S \in \Z_+$ and second moment bound $V \in \R_+$;
$N$ i.i.d.\ samples from $\mathcal{D}_t$, $\mathcal{D}'$,
single sample $\mathcal{X} \sim X$; failure probability $\tau$.\\
\vspace{-0.4cm}

{\bf Output:} A real number approximating $\langle M_m, T(\mathcal{X}) \rangle $,
where $M_m = \sum_{i=1}^k w_i v_i^{\otimes m}$ and
$T(\mathcal{X})= \E_{\mathcal{Y} \sim \mathcal{D'}}[\mathcal{F}_m(\mathcal{X}, \mathcal{Y})]$,
as defined in \eqref{eqn:T}.

\begin{enumerate}

\item Let $N$ be a positive integer quantifying the number of samples drawn
from the relevant distributions in each step.

\item Let $A_{1}$ be the average of $N$ runs of $\mathcal{S}_2$
on $N$ independent samples drawn from $\mathcal{D}_2$, thought of as a $d \times d$ matrix.

\item Let $W_{1}$ be the span of the top-$k$ singular vectors of $A_{1}$.
Let $\phi_1 : \R^1 \otimes \R^d \to \R^k$ be the projection of $\R^d$
onto $W_{1}$ composed with an isometry from $W_{1}$ to $\R^k$.

\item For $r = 1$ to $m-1$:
\begin{enumerate}
\item Let $\Phi_r: (\R^d)^{\otimes r} \to \R^k$ be the order-$r$
recursive pseudo-projection given by $\phi_1, \ldots ,\phi_{r}$.

\item Let $\Phi'_{r}: (\R^d)^{\otimes (2r+2)} \to \R^{k^2 d^2}$
be the recursive pseudo-projection $\Phi_r \otimes I_d \otimes \Phi_r \otimes I_d$.

\item Let $A_{r+1}$ be the average of $N$ copies of $\Phi_{r}'$ applied to the outputs
of $\mathcal{S}_{2r+2}$ on $N$ independent samples
drawn from $\mathcal{D}_{2r+2}$, using Lemma~\ref{lem:rpp-stc} to compute efficiently,
thought of as a $dk \times dk$ matrix.

\item Let $W_{r+1} \subset \R^{dk}$ be the span
of the top-$k$ singular vectors of $A_{r+1}$.

\item Let $\phi_{r+1}$ be the composition of the projection of $\R^k \otimes \R^d \to W_{r+1}$
with an isometry from $W_{r+1}$ to $\R^k$.
\end{enumerate}

\item  Let $\Phi_m: (\R^d)^{\otimes m} \to \R^k$
be the recursive pseudo-projection given by $\phi_1, \ldots ,\phi_{m}$.

\item Let $A$ be the average of $N$ copies of $\Phi_m$ applied to the output of $\mathcal{S}_m$
on $N$ independent samples drawn from $\mathcal{D}_m$,
using Lemma~\ref{lem:rpp-stc} to compute efficiently.

\item Let $B(\nnew{\mathcal{X}})$ be the average of $N$ copies of $\Phi_m$ applied to
the output of $\mathcal{F}_m$ on the single sample $\nnew{\mathcal{X}}$ and
$N$ independent samples from $\mathcal{D}'$,
using Lemma~\ref{lem:rpp-stc} to compute efficiently.

\end{enumerate}

Return $\langle A, B(\nnew{\mathcal{X}}) \rangle$.

}}

\newpage

\paragraph{Proof of Proposition~\ref{prop:main}}
By the definition of the $A_r$'s in the algorithm pseudocode we have that
(i) $A_1$ will be close to $M_2$, and (ii) $A_{r+1}$, $r \in [m-1]$,
will be close to $\Phi_{r}'(M_{2r+2})$ with high probability. Specifically,
with probability at least $1-\tau/4$ over the
samples drawn from the $\mathcal{D}_t$'s in the various steps,
we will have $\|A_1 - M_2\|_2 \leq \delta$
and $\|A_{r+1}- \Phi_{r}'(M_{2r+2})\|_2 \leq \delta$,
for all $r \in [m-1]$, where
\begin{equation}\label{eqn:delta}
\delta  \eqdef O\left(\frac{m \, d \, k\, V}{\sqrt{N \, \tau}} \right) \;.
\end{equation}
We will condition on the event that all these approximations hold in the subsequent analysis.

Let $\Phi_r: (\R^d)^{\otimes r} \to \R^k$ be
the recursive pseudo-projection given by $\phi_1,\ldots,\phi_r$.
By the second statement of Lemma~\ref{lem:pp-basics},
$\Phi_r$ is the composition of the projection of
$(\R^d)^{\otimes r}$ onto some subspace $U_r$ of $(\R^d)^{\otimes r}$
composed with an isometry.

\medskip

For $i \in [k]$ and $r \in [m]$, we will denote
$x_{i,r} \eqdef \sqrt{w_i} v_i^{\otimes r}$. With this notation, we can write
 $M_{2r} = \sum_{i=1}^{k} x_{i,r}^{\otimes 2}$.
Let
$$
\eta_r \eqdef \max_{i \in [k]}  \mathrm{dist}(x_{i,r}, U_{r}) =
\max_{i \in [k]} \left\|x_{i, r}  - \proj_{U_r}(x_{i, r})  \right\|_2 \;,$$
where we recall that
$U_r$ is the subspace of $(\R^{d})^{\otimes r}$ corresponding to $\Phi_r$.

Controlling the size of the $\eta_r$ will be vital to our analysis. 
We begin by establishing the following recursive bound: 

\begin{lemma} \label{lem:eta-r}
For all $r\in [m-1]$, it holds
$\eta_{r+1} \leq \eta_r \max_i \|v_i\|_2 + O(\sqrt{\delta})$.
\end{lemma}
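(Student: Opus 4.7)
The plan is to set up a recursion by exploiting the factorization $x_{i,r+1} = x_{i,r} \otimes v_i$ (after a standard reshaping of the tensor), which gives $M_{2r+2} = \sum_{i=1}^k x_{i,r+1}^{\otimes 2}$. Applying the composite recursive pseudo-projection $\Phi_r' = \Phi_r \otimes I_d \otimes \Phi_r \otimes I_d$ and viewing the result as a $(kd)\times(kd)$ matrix produces
\[
\Phi_r'(M_{2r+2}) \;=\; \sum_{i=1}^k y_i y_i^{\top}, \qquad y_i := \Phi_r(x_{i,r}) \otimes v_i \in \R^k \otimes \R^d,
\]
a symmetric PSD matrix of rank at most $k$. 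Since $\|A_{r+1} - \Phi_r'(M_{2r+2})\|_2 \le \delta$, I would use its top-$k$ singular subspace $W_{r+1}$ to capture each $y_i$ approximately.

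The key technical step is to show that $\mathrm{dist}(y_i, W_{r+1}) = O(\sqrt{\delta})$ for every $i$, \emph{without} assuming a spectral gap on $\Phi_r'(M_{2r+2})$. For any unit $u$ in the orthogonal complement of $W_{r+1}$, we have $\|A_{r+1}^{\top} u\|_2 \le \sigma_{k+1}(A_{r+1}) \le \sigma_{k+1}(\Phi_r'(M_{2r+2})) + \delta = \delta$ by Weyl's inequality, so $\|\Phi_r'(M_{2r+2})\, u\|_2 \le 2\delta$. Cauchy--Schwarz then gives $\sum_i (u \cdot y_i)^2 = u^{\top}\Phi_r'(M_{2r+2})\, u \le 2\delta$, and choosing $u$ in the direction of $\proj_{W_{r+1}^{\perp}}(y_i)$ yields $\|\proj_{W_{r+1}^{\perp}}(y_i)\|_2 \le \sqrt{2\delta}$ as desired. (A minor technicality is that $A_{r+1}$ need not be exactly symmetric; this is absorbed by working with left singular vectors as above, or by passing to the symmetric part at negligible cost.)

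To pull this bound back to $(\R^d)^{\otimes(r+1)}$, I would decompose $x_{i,r} = \tilde x_{i,r} + e_i$ with $\tilde x_{i,r} = \proj_{U_r}(x_{i,r})$ and $\|e_i\|_2 \le \eta_r$; since $e_i \perp U_r$, we get $\Phi_r(x_{i,r}) = \Phi_r(\tilde x_{i,r})$, and $\Phi_r$ restricted to $U_r$ is an isometry. By Definition~\ref{def:rpp} and Lemma~\ref{lem:pp-basics}, $U_{r+1}$ is precisely $(\Phi_r^{\top} \otimes I_d)(W_{r+1}) \subset U_r \otimes \R^d$, and $\Phi_r^{\top} \otimes I_d$ is an isometry on its image. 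Applying $\Phi_r^{\top} \otimes I_d$ to a nearest point of $W_{r+1}$ to $y_i$ therefore produces an element of $U_{r+1}$ that is $O(\sqrt{\delta})$-close to $\tilde x_{i,r} \otimes v_i$. Writing $x_{i,r+1} = \tilde x_{i,r} \otimes v_i + e_i \otimes v_i$ and applying the triangle inequality yields
\[
\mathrm{dist}(x_{i,r+1}, U_{r+1}) \;\le\; \|e_i \otimes v_i\|_2 + O(\sqrt{\delta}) \;\le\; \eta_r\|v_i\|_2 + O(\sqrt{\delta}),
\]
and maximizing over $i$ completes the proof.

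The hard part is the gap-free perturbation argument in the second paragraph: the naive Davis--Kahan/Wedin bounds would introduce a factor of $1/\sigma_k(\Phi_r'(M_{2r+2}))$, which we have no control over since some components may have very small weight or nearly coincide. The quadratic-form trick sidesteps this entirely, and the remaining bookkeeping is routine given the isometry properties of pseudo-projections already established.
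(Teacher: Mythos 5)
Your proof is correct and follows essentially the same route as the paper's: both identify $y_i = \Phi_r(x_{i,r}) \otimes v_i$ so that $\Phi_r'(M_{2r+2}) = \sum_i y_i y_i^\top$, establish the gap-free bound $\mathrm{dist}(y_i, W_{r+1}) = O(\sqrt{\delta})$ via a quadratic-form argument rather than Davis--Kahan, and then transport this back to $U_{r+1}$ using the fact that $(\Phi_r \otimes I_d)^\top$ is an isometry and that $\Phi_r$ annihilates the component $e_i$ of $x_{i,r}$ orthogonal to $U_r$. Your use of Weyl's inequality to get $\sigma_{k+1}(A_{r+1}) \le \delta$ and the direct bound $\sum_i (u\cdot y_i)^2 \le \|u\|_2\|\Phi_r'(M_{2r+2})u\|_2 \le 2\delta$ is a slightly tidier packaging of the same estimate the paper derives by bounding $v^\top A^* v$ via a sup over the orthogonal complement of the $y_{j,r}$'s, but the underlying idea is identical.
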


\begin{proof}
By the definition of $\Phi_{r}'$ and $M_{2r+2}$, we can write:
$$\Phi_{r}'(M_{2r+2}) = \sum_{i=1}^{k} (\underbrace{\Phi_r(x_{i,r}) \otimes v_i}_{y_{i,r}})^{\otimes 2}
= \sum_{i=1}^{k} y_{i,r}^{\otimes 2} \;.$$
Moreover, we have that
$(\Phi_r \otimes I_d)^{\top} y_{i,r} = \Phi_r^\top(\Phi_r(x_{i,r}))\otimes v_i = \mathrm{Proj}_{U_r}(x_{i,r})\otimes v_i$.
Therefore,
\begin{equation} \label{eqn:x-y-dist}
\left\| (\Phi_r \otimes I_d)^{\top} y_{i,r} - x_{i,r+1} \right\|_2 \leq \eta_r   \| v_i\|_2 \;.
\end{equation}
There are three relevant objects for the analysis:
$M_{2r+2}$,
$$M^{\ast} \eqdef (\Phi_{r}')^{\top} \Phi_{r}' M_{2r+2}
= \sum_{i=1}^k (\Phi_{r}')^{\top}\Phi_r'(y_{i,r}^{\otimes 2}) \;,$$
and  $A^{\ast} \eqdef (\Phi_{r}')^{\top} A_{r+1}$.

We note that
$$\|A^{\ast} - M^{\ast}\|_2 = \| A_{r+1} - \Phi_{r}' M_{2r+2} \|_2 \leq \delta \;.$$
Recall that $\eta_{r+1} = \max_{i \in [k]} \mathrm{dist}(x_{i,r+1}, U_{r+1})$.
For all $i \in [k]$,
$$\mathrm{dist}(x_{i,r+1}, U_{r+1}) \leq \|x_{i,r+1} - (\Phi_r \otimes I_d)^{\top}y_{i,r}\|_2
+ \mathrm{dist}((\Phi_r \otimes I_d)^{\top}y_{i,r}, U_{r+1}) \;.$$
It follows from \eqref{eqn:x-y-dist} that
$$ \|x_{i,r+1} - (\Phi_r \otimes I_d)^{\top}y_{i,r}\|_2 \leq \eta_r \|v_i\|_2 \;.$$
Moreover, note that $\mathrm{dist}((\Phi_r \otimes I_d)^{\top}y_{i,r}, U_{r+1})$
is the distance from $(\Phi_r \otimes I_d)^{\top}y_{i,r}$ to the span
of the top-$k$ singular vectors of $A^{\ast}$.
Since $A^{\ast}$ is $O(\delta)$-close to
$M^{\ast}= \sum_{i=1}^k (\Phi_{r}')^{\top} y_{i,r}^{\otimes 2}$
(which is rank-$k$), we can show that
$\mathrm{dist}((\Phi_r \otimes I_d)^{\top}y_{i,r}, U_{r+1})  = O(\sqrt{\delta})$.
We prove this as follows.
Let $y:= (\Phi_r \otimes I_d)^{\top}y_{i,r} = u + v$
with $u \in U_{r+1}$ and $v$ orthogonal.
The distance in question is now just $\|v\|_2$.
Note that $v^{\top} M^{\ast} v \geq  \langle v, y \rangle ^2 \geq \|v\|_2^4$.
On the other hand $v^{\top} A^{\ast} v$ must be relatively small.
This is because $v$ is orthogonal to the top-$k$ singular vectors of $A^{\ast}$,
and so it is at most $\|v\|_2^2 \lambda_{k+1}$,
where $\lambda_{k+1}$ is the $(k+1)$-st singular value.
We have that
$$\lambda_{k+1}\leq  \sup_{w \perp (\Phi_r \otimes I_d)^{\top}y_{i,r}, 1\leq i \leq k, \|w\|_2=1} w^{\top} A^{\ast} w =
\sup_{w \perp (\Phi_r \otimes I_d)^{\top}y_{i,r}, 1\leq i \leq k, \|w\|_2=1} w^{\top} M^{\ast} w + O(\delta) = O(\delta) \;.$$
So $v^{\top} M^{\ast} v \geq \|v\|_2^4$, and $v^{\top} A^{\ast} v = O(\delta) \|v\|_2^2$, which gives that
the difference is
$$v^{\top} (A^{\ast}-M^{\ast}) v = O(\delta)\|v\|_2^2 \;.$$
Putting these together implies that $\|v\|_2 = O(\sqrt{\delta})$.

Thus, we have that $\eta_{r+1} = O(\eta_r \max_i \|v_i\|_2 + \sqrt{\delta})$,
proving Lemma~\ref{lem:eta-r}.
\end{proof}

\medskip

Lemma~\ref{lem:eta-r} allows us to establish an appropriate
upper bound on $\eta_m$, which suffices to show that
$\Phi_m^{\top} A$ is close to $M_m$ with high probability.
This in turns gives that $\langle A, B(X) \rangle$
close to $\langle M_m, T(X) \rangle$ in the mean squared sense.
We present the detailed argument below.

We start by using Lemma~\ref{lem:eta-r} to show an upper bound of $\eta_m$.
To do this, we need an upper bound on
$\eta_1 = \max_{i \in [k]} \mathrm{dist}(x_{i, 1}, U_1)$.
Recall that $x_{i,1} = \sqrt{w_i} v_i$ and note that  by construction it holds
$U_1 = W_1$,
where $W_1$ is the span of the top-$k$ singular vectors of $A_1$.

We will use the fact that $\|A_1 - M_2\|_2 \leq \delta$.
In the limit, when $\delta \rightarrow 0$, we have that
$A_1 = M_2 =  \sum_{i=1}^k x_{i,1} x_{i,1}^{\top}$,
in which case $W_1$ is the span of the $x_{i, 1}$'s and $\eta_1=0$.
For the case of $\delta>0$, a standard argument using Weyl's inequality
(similar to the one used in the proof of Lemma~\ref{lem:eta-r})
shows that $\|x_{i, 1} - \proj_{W_1}(x_{i, 1})\|_2  = O(\sqrt{\delta})$, which gives
that $\eta_1  = O(\sqrt{\delta})$.

Unrolling the recursion, and using the fact that $\eta_1  = O(\sqrt{\delta})$,
we obtain
\begin{equation} \label{eqn:etam}
\eta_m = O(m \sqrt{\delta}) \max\left\{1, \left(\max_{i \in [k]} \|v_i\|_2\right)^{m-1} \right\} \;.
\end{equation}

We now write $\Phi_m$ as the composition of a projection $\mathrm{P}_m$ onto $U_m$
composed with an isometry $\mathrm{R}_m: U_m \to \R^k$.
We establish the following claim:

\begin{claim} \label{clm:M-proj}
We have that $\| M_m - \mathrm{P}_m(M_m) \|_2
\leq s \eqdef  \eta_m \, \sqrt{k} \, \left(\sum_{i=1}^k w_i \right)^{1/2}$.
\end{claim}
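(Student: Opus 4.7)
The plan is to prove the claim through a direct triangle-inequality argument, leveraging the rewriting $M_m = \sum_{i=1}^k \sqrt{w_i} \, x_{i,m}$ together with linearity of the projection $\mathrm{P}_m$.

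First I would rewrite the target moment tensor in the $x_{i,r}$ notation introduced earlier in the proof of Proposition~\ref{prop:main}. Since $x_{i,m} = \sqrt{w_i}\, v_i^{\otimes m}$, we have $\sqrt{w_i}\, x_{i,m} = w_i\, v_i^{\otimes m}$, so that $M_m = \sum_{i=1}^k \sqrt{w_i}\, x_{i,m}$. Next, I would use the linearity of $\mathrm{P}_m$ (it is an orthogonal projection, hence linear) to obtain
\[
M_m - \mathrm{P}_m(M_m) \;=\; \sum_{i=1}^k \sqrt{w_i}\,\bigl(x_{i,m} - \mathrm{P}_m(x_{i,m})\bigr).
\]

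Then I would apply the triangle inequality in $(\R^d)^{\otimes m}$ and the definition $\eta_m = \max_{i\in[k]} \mathrm{dist}(x_{i,m}, U_m) = \max_{i\in[k]} \|x_{i,m} - \mathrm{P}_m(x_{i,m})\|_2$ (recall $\mathrm{P}_m$ is the orthogonal projection onto $U_m$, so $x_{i,m} - \mathrm{P}_m(x_{i,m})$ realizes the Euclidean distance from $x_{i,m}$ to $U_m$) to obtain
\[
\|M_m - \mathrm{P}_m(M_m)\|_2 \;\leq\; \sum_{i=1}^k \sqrt{w_i}\,\|x_{i,m} - \mathrm{P}_m(x_{i,m})\|_2 \;\leq\; \eta_m \sum_{i=1}^k \sqrt{w_i}.
\]

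Finally, I would apply Cauchy--Schwarz to the sum $\sum_{i=1}^k \sqrt{w_i} \cdot 1$, giving $\sum_{i=1}^k \sqrt{w_i} \leq \sqrt{k} \bigl(\sum_{i=1}^k w_i\bigr)^{1/2}$, which combined with the previous bound yields exactly $\|M_m - \mathrm{P}_m(M_m)\|_2 \leq \eta_m \sqrt{k}\,(\sum_{i=1}^k w_i)^{1/2} = s$. There is no real obstacle here: the claim is a straightforward consequence of linearity of projection, triangle inequality, and Cauchy--Schwarz, once one uses the scaling $x_{i,m} = \sqrt{w_i}\, v_i^{\otimes m}$ to split the weights symmetrically between the coefficient in $M_m$ and the norm bound given by $\eta_m$.
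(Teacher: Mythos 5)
Your proof is correct and follows essentially the same route as the paper: rewrite $M_m = \sum_i \sqrt{w_i}\,x_{i,m}$, use linearity of $\mathrm{P}_m$, apply the triangle inequality and the definition of $\eta_m$, then finish with Cauchy--Schwarz. (In fact you are slightly more careful than the paper's writeup, which misprints the triangle-inequality step as an equality.)
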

\begin{proof}
Since $M_m = \sum_{i=1}^k \sqrt{w_i} \, x_{i,m}$, by linearity of $\mathrm{P}_m$
it follows that
$\mathrm{P}_m(M_m) =  \sum_{i=1}^k \sqrt{w_i} \, \mathrm{P}_m (x_{i,m})$.
Therefore,
\begin{eqnarray*}
\| M_m - \mathrm{P}_m(M_m) \|_2
&=&  \sum_{i=1}^k \sqrt{w_i} \, \| x_{i,m}- \mathrm{P}_m (x_{i,m}) \|_2 \\
&\leq& \eta_m \, \sum_{i=1}^k \sqrt{w_i} \\
&\leq& \eta_m \, \sqrt{k} \, \left(\sum_{i=1}^k w_i \right)^{1/2} \;,
\end{eqnarray*}
where the first inequality follows from the definition of $\eta_m$
and the second is Cauchy-Schwarz.
\end{proof}

By construction, $A$ is the average of $N$ copies of $\mathrm{R}_m(\mathrm{P}_m(\mathcal{S}_m))$
on independent samples from the distribution $\mathcal{D}_m$.
By assumption, $\mathrm{P}_m(\mathcal{S}_m)$ has mean
$\mathrm{P}_m(M_m)$ and covariance bounded by $V$
(since $\mathrm{P}_m$ is a projection).
Given that these random variables lie in $U_m$, which is $k$-dimensional,
we have that with probability $1-\tau/4$ over samples from $\mathcal{D}_m$ it holds
$$\left\| Average(\mathrm{P}_m(\mathcal{S}_m)) - \mathrm{P}_m(M_m) \right\|_2
= O\left(\sqrt{\frac{k V}{N \tau}} \right) \;.$$
Since $\mathrm{R}_m$ is an isometry, we also get that
with probability $1-\tau/4$  over samples from $\mathcal{D}_m$
$$\|A - \mathrm{R}_m(\mathrm{P}_m(M_m))\|_2 = O\left(\sqrt{\frac{k V}{N \tau}} \right)  \;.$$

By similar logic, with probability $1-\tau/4$ over samples from $\mathcal{D}'$,
we have that
$$\E_{\nnew{\mathcal{X}} \sim X} [\|B(\nnew{\mathcal{X}})-\mathrm{R}_m(\mathrm{P}_m(T(\nnew{\mathcal{X}})))\|^2_2]^{1/2} =
 O\left(\sqrt{\frac{k V}{N \tau}} \right) \;.$$

Using the above, with probability at least $1-\tau$ over the samples drawn from $\mathcal{D}$
and $\mathcal{D'}$, we obtain the following chain of (in)equalities.
\begin{eqnarray*}
&& \langle A , B(\nnew{\mathcal{X}}) \rangle \\
&=& \langle \mathrm{R}_m(\mathrm{P}_m(T(\nnew{\mathcal{X}}))), \mathrm{R}_m(\mathrm{P}_m(M_m)) \rangle +
O\left(\sqrt{kV/(N \tau}) \right) \left(\|\mathrm{P}_m(T(\nnew{\mathcal{X}}))\|_2 + \|\mathrm{P}_m(M_m)\|_2\right) \\
&=& \langle \mathrm{P}_m(T(\nnew{\mathcal{X}})), \mathrm{P}_m(M_m) \rangle +
O\left(\sqrt{kV/(N \tau)} \right) \left( \|\mathrm{P}_m(T(\nnew{\mathcal{X}}))\|_2 + \|M_m\|_2 + s \right)
\quad \quad \textrm{($\mathrm{R}_m$ is an isometry)} \\
&=& \langle T(\nnew{\mathcal{X}}), \mathrm{P}_m(M_m) \rangle +
O\left(\sqrt{kV/(N \tau)}  \right) \left(\|\mathrm{P}_m(T(\nnew{\mathcal{X}}))\|_2 + \|M_m\|_2 + s\right) \quad \quad \textrm{($\mathrm{P}_m$ is a projection)} \\
&=& \langle T(\nnew{\mathcal{X}}), M_m \rangle +
O\left(\sqrt{kV/(N \tau)} \right) \left(\|\mathrm{P}_m(T(\nnew{\mathcal{X}}))\|_2 + \|M_m \|_2 + s\right)
+ s \, \| \proj_{M_m- \mathrm{P}_m(M_m)}(T(\nnew{\mathcal{X}})) \|_2 \;.
\end{eqnarray*}
Note that all of the computations before the last two steps of our algorithm
are independent of $T(\nnew{\mathcal{X}})$, and thus $T(\nnew{\mathcal{X}})$
is independent of $\mathrm{P}_m$
and the vector $M_m - \mathrm{P}_m(M_m)$. Since $\mathrm{P}_m(T(\nnew{\mathcal{X}}))$
is the projection of $T(\nnew{\mathcal{X}})$ onto $U_m$,
a $k$-dimensional subspace, we have that $\E_{\nnew{\mathcal{X}} \sim X}[\|\mathrm{P}_m(T(\nnew{\mathcal{X}}))\|^2_2]^{1/2} = O(\sqrt{Vk})$.
Since $M_m-\mathrm{P}_m(M_m)$ is just a vector, the square root of the expected squared error
of the projection of $T(\nnew{\mathcal{X}})$
onto that vector is $O(\sqrt{V})$.

By taking the expectation over $\nnew{\mathcal{X}} \sim X$ in the above,
using Claim~\ref{clm:M-proj}, \eqref{eqn:delta} and \eqref{eqn:etam}
completes the proof of Proposition~\ref{prop:main}. \qed

\section{Learning Theory Applications} \label{sec:apps}

The structure of this section is as follows: We start in Section~\ref{ssec:app-tech} with some technical machinery
that is useful in all our subsequent learning applications. Section~\ref{ssec:sum-relus} presents our results on one-hidden-layer
neural networks. Section~\ref{ssec:gmms} presents our algorithmic results for mixtures of spherical Gaussians (both
density and parameter estimation). Finally, Section~\ref{ssec:mlr} presents our algorithm for learning mixtures of linear regressions.

\subsection{Technical Machinery} \label{ssec:app-tech}

Before we get to the applications,
we require the following technical tool that we briefly motivate.
For our learning applications, we need to compute inner products of functions/distributions
with Hermite polynomials. We have technology for computing inner products of tensors
with objects that can be expressed as the expectation of sequential tensor computations.
Unfortunately, the Hermite tensor $H_n(x)$ cannot conveniently be written in this form.

\cite{LL21-opt} resolves this issue by expressing $H_n(x)$ as a sum of tensors of $x$'s and $I$'s
and replacing each copy of $I$ by $y_i \otimes y_i$ for independent Gaussians $y_i$
(noting that $\E[y_i \otimes y_i] = I$). We follow a different, more efficient approach: we take
all of the copies of $I$ in our product and replace them by
a tensor power of a single random Gaussian vector $y$.
This has the same expectation, but as a sum of only $2^n$
rank-$1$ terms --- rather than roughly $n^n$. Furthermore, we can compute this quantity more cleverly
using a sequential tensor computation of size only $O(n)$.

\begin{definition}[Extended Hermite Tensor] \label{def:Hxy}
For $n\in \N$ and $x, y \in \R^d$, we define
\[ H_n(x,y) = \frac{1}{\sqrt{n!}} \sum_{\substack{\text{Partitions of $[n]$ into $S_1, S_2$}\\ \text{with $|S_2|$ even}}} (-1)^{|S_2|/2} x^{\otimes_{S_1}} \otimes y^{\otimes_{S_2}} = \mathrm{Re}((x+iy)^{\otimes n})/\sqrt{n!} \;.\]
\end{definition}

Note that $H_n(x,y)$ is an order-$n$ tensor of dimension $d$.
The important property of the above definition is that
$H_n(x,y)$ can be computed by a sequential tensor computation of size $O(n)$
and that $\E_{Y \sim N(0,I)}[H_n(x,Y)] = H_n(x)$.
This follows from the fact that
\[ \E[Y^{\otimes 2t}] = \sum_{\text{partitions $P$ of $[2t]$ into sets of size $2$}} I^{P} \;.\]
Plugging this in to the expansion of $H_n(x,y)$ gives exactly the standard expansion of $H_n(x)$.

\begin{remark} \label{rem:comp}
{\em
All examples in~\cite{LL21-opt} involve explicitly writing the tensors as sums of rank-$1$ tensors
(i.e., products of vectors). While these are all STCs, not all STCs can be efficiently written in this way.
For example, if expanded out, $H_n(x,y)$ can be written as a sum of $2^n$ rank-$1$ tensors.
However, thinking of it as the real part of $(x+iy)^{\otimes n}$,
we can use a dynamic program keeping track of the real and imaginary parts,
and write it as the output of a STC of size $O(n)$ instead.
}
\end{remark}

To apply Proposition~\ref{prop:main},
we also need to bound the second moment of $H_n(X,Y)$ over $Y \sim N(0,I)$ and
independent $X \sim N(\mu,I)$.

\begin{lemma}[Second Moment Bound of $H_n(X,Y)$] \label{lem:cov-hxy}
We have that $\E_{Y, X}[H_n(X,Y)\otimes H_n(X,Y)]$, where $X \sim N(\mu, I)$, $Y \sim N(0, I)$
and $X, Y$ are independent, is bounded above by $O(\|\mu\|_2^2/n + 1)^n$.
\end{lemma}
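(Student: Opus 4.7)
The plan is to exploit the complex-analytic representation $H_n(x,y) = \mathrm{Re}((x+iy)^{\otimes n})/\sqrt{n!}$ together with Wick's theorem for circular complex Gaussians. Writing $X + iY = \mu + G$ with $G := (X-\mu) + iY$, the complex random vector $G$ satisfies $\E[G_j \bar G_k] = 2\delta_{jk}$ and $\E[G_j G_k] = 0$, which is exactly the structure that makes the resulting moment computation clean.

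Since $H_n(X,Y)$ is symmetric, the supremum $\sup_{\|W\|_2=1}\E[\langle W, H_n(X,Y)\rangle^2]$ is attained on symmetric $W$. Setting $p(u) := \langle W, u^{\otimes n}\rangle$, a degree-$n$ homogeneous polynomial with real coefficients, I have $\langle W, H_n(X,Y)\rangle = \mathrm{Re}(p(X+iY))/\sqrt{n!}$, hence $\langle W, H_n(X,Y)\rangle^2 \leq |p(\mu+G)|^2/n!$. By symmetry of $W$,
$$p(\mu + G) = \sum_{k=0}^n \binom{n}{k} \langle T_k, G^{\otimes k}\rangle, \qquad T_k := W[\mu^{\otimes (n-k)}],$$
where $T_k$ is the order-$k$ symmetric tensor obtained by partially contracting $W$ with $n-k$ copies of $\mu$.

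Since $p$ has real coefficients, $\overline{p(\mu+G)} = p(\mu+\bar G)$. Taking expectations and invoking the circular-Gaussian Wick identities $\E[G^{\otimes k}\otimes \bar G^{\otimes l}] = 0$ for $k \neq l$ and, for $k = l$, $\E[G^{\otimes k}\otimes \bar G^{\otimes k}] = 2^k \sum_{\pi \in S_k} I_\pi$ (where $I_\pi$ denotes the tensor pairing the first $k$ indices with the last $k$ via $\pi$), the symmetry of $T_k$ makes every one of the $k!$ permutations contribute $\|T_k\|_2^2$, giving
$$\E[|p(\mu+G)|^2] = \sum_{k=0}^n \binom{n}{k}^2 k! \, 2^k \|T_k\|_2^2 \;.$$
A straightforward Cauchy--Schwarz on the $n-k$ contracted indices yields $\|T_k\|_2 \leq \|W\|_2 \, \|\mu\|_2^{n-k} = \|\mu\|_2^{n-k}$, so the second moment in question is bounded by $\frac{1}{n!}\sum_{k=0}^n \binom{n}{k}^2 k! \, 2^k \|\mu\|_2^{2(n-k)}$.

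All that remains is to show this sum is at most $C^n(\|\mu\|_2^2/n + 1)^n$ for an absolute constant $C$. Reindexing with $l = n - k$, the $l$-th summand becomes $\binom{n}{l}\frac{2^{n-l}}{l!}\|\mu\|_2^{2l}$; comparing to the binomial expansion $(\|\mu\|_2^2/n + 1)^n = \sum_l \binom{n}{l}(\|\mu\|_2^2/n)^l$, the ratio of the $l$-th terms equals $2^{n-l} n^l / l! \leq 2^n \max_{l \geq 0}(n/2)^l/l! \leq 2^n e^{n/2} = (2\sqrt e)^n$, yielding the lemma. The main obstacle, I expect, is careful bookkeeping of combinatorial factors in the Wick step and in the estimate on $\|T_k\|_2$; once those are in place, the closing sum inequality is a routine term-by-term comparison.
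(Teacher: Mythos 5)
Your argument is correct, and it takes a genuinely different and cleaner route than the paper. The paper works directly from the partition-sum definition of $H_n(x,y)$: it expands $H_n(X,Y)\otimes H_n(X,Y)$ into a double sum over partitions, takes the Gaussian expectations of $X^{\otimes R_1}$ and $Y^{\otimes R_2}$ via real-Gaussian Wick, and then identifies a sign-cancellation pattern by hand (pairs contained entirely in $[n]$ or entirely in $[2n]\setminus[n]$ kill each other), leaving a sum over ``crossing'' pairs that is bounded by a direct count. You instead package the whole computation through the complex representation $H_n(x,y)=\mathrm{Re}((x+iy)^{\otimes n})/\sqrt{n!}$ and the observation that $G=(X-\mu)+iY$ is a circular complex Gaussian with $\E[G_j\bar G_k]=2\delta_{jk}$ and $\E[G_jG_k]=0$. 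The circularity makes the cross terms with $k\neq l$ vanish automatically (rotation invariance $G\mapsto e^{i\theta}G$), so you never have to identify cancellations by hand, and the remaining Wick contractions collapse to the exact closed form $\frac{1}{n!}\sum_k\binom{n}{k}^2 2^k k!\,\|T_k\|_2^2$; the Cauchy--Schwarz bound $\|T_k\|_2\leq\|\mu\|_2^{n-k}$ and a term-by-term comparison with the binomial expansion of $(\|\mu\|_2^2/n+1)^n$ (ratio at most $2^n e^{n/2}$) finish the proof. Each step checks out, including the reindexing $l=n-k$, the identity $\binom{n}{n-l}^2(n-l)!/n!=\binom{n}{l}/l!$, and the observation that $\max_l(n/2)^l/l!\leq\sum_l(n/2)^l/l!=e^{n/2}$. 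What your approach buys is a shorter, more structural argument that produces an exact expression for $\sup_{\|W\|_2=1}\E[\langle W,H_n(X,Y)\rangle^2]$ before any inequality is applied, at the modest cost of invoking complex-Gaussian Wick's theorem, which the paper's proof deliberately avoids by working entirely over the reals.
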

\begin{proof}
The expectation of $H_n(X,Y) \otimes H_n(X,Y)$ can be expressed as follows:
\begin{align*}
\E_{X, Y}[H_n(X,Y) & \otimes H_n(X,Y)] = \\
& \frac{1}{n!} \sum_{S_1, S_2 \text{partition of } [n]} (-1)^{|S_2|/2} x^{\otimes S_1}y^{\otimes S_2} \otimes
\sum_{T_1, T_2 \textrm{ partition of } [n]} (-1)^{|T_2|/2} x^{\otimes T_1}y^{\otimes T_2} \;.
\end{align*}
Viewing the above as a partition of $[2n]$ into
$R_1 = S_1 \cup T_1$ and $R_2 = S_2 \cup T_2$, we can write
\[\E_{X, Y}[H_n(X,Y) \otimes H_n(X,Y)] =
\frac{1}{n!} \sum_{\substack{\text{Partitions of $[2n]$ into}\\ \text{$R_1$, $R_2$ with
$|R_2|, |R_2 \cap [n]|$ even}}}
(-1)^{|R_2|/2} x^{\otimes R_1} y^{\otimes R_2}\]

\[=  \frac{1}{n!} \sum_{\substack{\text{Partitions of $[2n]$ into}\\ \text{$R_1$, $R_2$ with
$|R_2|, |R_2 \cap [n]|$ even}}} (-1)^{|R_2|/2} \sum_{
\substack{\text{partitions $P_1$ of $R_1$}\\ \text{into sets of size 1 and 2},}
\substack{\text{partitions $P_2$ of $R_2$}\\ \text{into sets of size 2}}}
\otimes_{\{i\} \in P_1}\mu_i \otimes_{\{i,j\} \in P_1 \textrm{ or } P_2} I_{i,j} \;,\]
where the $\mu_i$ denotes a $\mu$ in the $i$-th tensor
slot and $I_{i,j}$ denotes a copy of the identity in the $i$th and $j$th.

If we now consider $P_1 \cup P_2$, we get a partition of $[2n]$ into sets of size $1$ and $2$
with the sets of size $2$ coming from either $P_1$ or $P_2$
(which we label as type $1$ and type $2$).
We have the restriction that the number of elements of $[n]$ in pairs of type $2$ is even.
Thus, we get
\[ \frac{1}{n!} \sum_{
\substack{\text{Partitions $P$ of $[2n]$ into sets of size 1 and 2}
\\ \text{with the sets of type 2 labelled 1 \& 2} \\ \text{and an even number of elements of $[n]$ in pairs of the latter type
}}}
(-1)^{\text{number of pairs of type 2}} \otimes_{\{i\} \in P} \mu_i \otimes_{\{i,j\}  \in P} I_{i,j} \;.\]

We think of this as first choosing the partition $P$ of $[2n]$ into sets of size $1$ and $2$
and only later choosing which to label as type $1$ and type $2$.
We note that if there is any pair contained entirely in $[n]$ or entirely in $[2n] \setminus [n]$,
then switching its type from $1$ to $2$ or back keeps the term the same but reverses its sign.
Thus, these terms cancel out.

Hence, we have:
\[  \frac{1}{n!} \sum_{\substack{\text{Partitions $P$ of $[2n]$ into sets of size 1 and 2} \\
\text{with an even number of sets of size 2} \\ \text{all of which cross from $[n]$ to $[2n]$
}}}
\otimes_{\{i\} \in P} \mu_i \otimes_{\{i,j\}  \in P} I_{i,j} \;.\]
Considering terms with exactly $k$ sets of size $2$, there are
$\binom{n}{k}^2$ ways to choose the sets of size $1$,
and $k!$ ways to choose how to pair up the remaining elements;
and at most $2^k$ ways to label the sets of size $2$.
When thought of as a $d^n \times d^n$ matrix, the tensor in the sum then
has spectral norm at most $\|\mu\|_2^{2n-2k}$. Thus,
the spectral norm of the covariance is at most
$$\frac{2^{O(n)}}{n^n} \; \sum_k \|\mu\|_2^{2n-2k} \sqrt{n}^{2k}
= O(\|\mu\|_2^2/n + 1)^n \;,$$
completing the proof.
\end{proof}

\begin{remark} \label{rem:GMM-cluster-opt}
{\em For the case of a mixture of $k$ identity covariance Gaussians,
the above second moment bound is polynomial in $k$ times
the largest mean, even when $n = O(\log(k))$. This bound is the key ingredient that
allows us to obtain the optimal mean separation condition of $O(\sqrt{\log(k)})$
for the parameter estimation problem,  improving on the bound of~\cite{LL21-opt}.}
\end{remark}

\subsection{Learning Positive Linear Combinations of Non-Linear Activations} \label{ssec:sum-relus}

\paragraph{Problem Setup}
We start by formally defining the target class of functions to be learned.

\begin{definition}[Positive Linear Combinations of an Activation Function] \label{def:one-hidden-nn}
Let $\sigma: \R\rightarrow\R$ be an activation function.
Denote by $\mathcal{C}_{\sigma,d,k}$ the class of functions on $\R^d$
given by a one-layer network with positive coefficients using the activation $\sigma$.
In particular, a function $F \in \mathcal{C}_{\sigma,d,k}$ if and only if there exist $k$ unit vectors
$v_i \in \R^d$ and non-negative coefficients $w_i \in \R_+$, $i \in [k]$,
with $\sum_{i=1}^k w_i=1$
such that $F(x) = \sum_{i=1}^k w_i \sigma(v_i \cdot x)$.
\end{definition}

This is a prototypical family of neural networks whose learnability has been extensively
studied over the past decade; see, e.g.,~\cite{SedghiJA16,
ZhongS0BD17, GeLM18, BakshiJW19, DKKZ20, DK20-ag, ChenKM21}.
A particularly noteworthy special case is that of ReLU networks,
where $\sigma(u) = \relu(u) := \max(0,u).$
As the majority of prior work on this problem, we will assume that the feature vectors
are normally distributed.
The definition of the PAC learning problem in our setting is the following.

\begin{definition}[PAC Learning $\mathcal{C}_{\sigma,d, k}$]\label{def:PAC-sums-ReLUs}
The PAC learning problem for the class $\mathcal{C}_{\sigma,d, k}$
is the following: The input is a multiset of i.i.d.\ labeled examples $(x, y)$,
where $x \sim N(0,I)$ and $y = F(x)$, for an unknown $F \in \mathcal{C}_{\sigma, d, k}$,
for a known activation $\sigma$. 
The goal of the learner is to output a hypothesis
$H: \R^d \to \R$ that with high probability is close to $F$ in $L_2$-norm, i.e.,
satisfies $\|H-F\|_2 \leq \eps$.
The hypothesis $H$ is allowed to lie in any efficiently
representable hypothesis class $\mathcal{H}$.
\end{definition}

Our main algorithmic result in this context is the following.

\begin{theorem}[Learning Algorithm for $\mathcal{C}_{\sigma, d, k}$]\label{thm:formal-sum-ReLUs}
Suppose that the activation $\sigma$ has $\|\sigma\|_4 = O(1)$.
Letting $c_{\sigma,t} := \E[\sigma(G)h_t(G)]$ be the $t^{th}$ Hermite coefficient of $\sigma$,
suppose additionally that for some $\eps,\delta>0$ and some positive $n$ it holds:
\begin{itemize}
\item $\sum_{t>n} c_{\sigma,t}^2 < \eps^2/4$
\item $|c_{\sigma,t}| \geq \delta$ for all $1\leq t \leq 2n$ unless $t$ is odd and $c_{\sigma,t}=0$.
\end{itemize}
Then there is a PAC learning algorithm for $\mathcal{C}_{\sigma, d, k}$ with respect to
the standard Gaussian distribution on $\R^d$ with the following performance guarantee:
Given $\eps>0$, $\sigma, k, d$ and access to labeled examples
from an unknown target $F \in \mathcal{C}_{\sigma,d, k}$,
the algorithm has sample and computational complexity
$\poly(dk/(\eps\delta)) 2^{O(n)}$,
and outputs an efficiently computable hypothesis $H: \R^d \to \R$ that with high probability satisfies
$\|H-F\|_2  \leq \eps $.
\end{theorem}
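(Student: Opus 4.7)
My plan is to approximate the target function $F \in \mathcal{C}_{\sigma, d, k}$ by the low-degree part of its Hermite expansion. Writing $F(x) = \sum_{m \geq 0} \langle T_m, H_m(x) \rangle$ with $T_m = \E_{G \sim N(0,I)}[F(G) H_m(G)]$, the identity $\E[\sigma(v \cdot G) H_m(G)] = c_{\sigma, m} v^{\otimes m}$ for unit $v$ (which follows from the Hermite orthogonality recalled in Section~\ref{sec:prelims} together with $\langle H_k(x), v^{\otimes k}\rangle = h_k(v \cdot x)$) yields $T_m = c_{\sigma, m} M_m$ with $M_m = \sum_{i=1}^k w_i v_i^{\otimes m}$. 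Since the $v_i$ are unit and $\sum_i w_i = 1$, we have $\|M_m\|_2 \leq 1$, so Parseval gives $\|F - \tilde F\|_2^2 \leq \sum_{m > n} c_{\sigma, m}^2 < \eps^2/4$ where $\tilde F$ is the degree-$n$ truncation. It thus suffices to produce, for each $m \leq n$ with $c_{\sigma, m} \neq 0$, an evaluator $\widehat{g}_m(x) \approx c_{\sigma, m}\langle M_m, H_m(x)\rangle$ with small $L_2$-error, and output $H(x) = \sum_{m \leq n} \widehat g_m(x)$.

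To evaluate $\langle M_m, H_m(x)\rangle$ at a query point $x$, I invoke Proposition~\ref{prop:main} using the extended Hermite tensor of Definition~\ref{def:Hxy}. For each STC $\mathcal{S}_t$ producing $M_t$ (for even $t \leq 2m$, or $t = m$), I use a labeled example $(x', y' = F(x'))$ together with a fresh independent $z \sim N(0, I)$, and let the circuit output $(y'/c_{\sigma, t}) H_t(x', z)$; its expectation equals $\E_{x'}[F(x') H_t(x')]/c_{\sigma, t} = M_t$. This circuit has size $O(t)$ because $H_t(x', z) = \mathrm{Re}((x' + iz)^{\otimes t})/\sqrt{t!}$ admits a sequential tensor computation of size $O(t)$ (Remark~\ref{rem:comp}), and scalar multiplication is an allowed operation. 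The assumption $|c_{\sigma, t}| \geq \delta$ on the needed orders keeps the $1/c_{\sigma, t}$ factor under control; odd orders $t \neq m$ with $c_{\sigma, t} = 0$ do not show up among the required $t$. For $\mathcal{F}_m$, I split the input as $\mathcal{X} = x$ and $\mathcal{Y} = z' \sim N(0, I)$ and output $H_m(x, z')$, giving $T(x) = \E_{z'}[\mathcal{F}_m(x, z')] = H_m(x)$ as required.

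Next I bound the second moments $V$ of these two STCs. For $\mathcal{F}_m$, Lemma~\ref{lem:cov-hxy} applied with $\mu = 0$ directly yields $V \leq 2^{O(m)}$. For $\mathcal{S}_t$, I need to bound $\E\bigl[(F(x') \langle W, H_t(x', z)\rangle / c_{\sigma, t})^2\bigr]$ over unit tensors $W$: Cauchy--Schwarz bounds this by $\E[F(x')^4]^{1/2} \E[\langle W, H_t(x', z)\rangle^4]^{1/2} / c_{\sigma, t}^2$, Jensen together with $\|\sigma\|_4 = O(1)$ and $\sum w_i = 1$ handles the first factor, and Gaussian hypercontractivity (Fact~\ref{thm:hc}) reduces the fourth moment of the degree-$t$ polynomial $\langle W, H_t(x', z)\rangle$ in the Gaussian $(x', z)$ to its second moment, which Lemma~\ref{lem:cov-hxy} bounds by $2^{O(t)}$. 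This gives $V \leq 2^{O(n)}/\delta^2$. Plugging into \eqref{eqn:error-bound} with $\max_i \|v_i\|_2 = 1$, the per-$m$ estimator for $\langle M_m, H_m(x)\rangle$ has mean squared $L_2$-error at most $\poly(k, n, d, 1/\delta, 1/\tau) \cdot 2^{O(n)}/N^{1/2}$. Summing the resulting $\widehat g_m$ over $m \leq n$, each weighted by $|c_{\sigma, m}| \leq 1$, and using the triangle inequality in $L_2(N(0, I))$, the choice $N = \poly(dk/(\eps \delta)) \cdot 2^{O(n)}$ keeps the total estimation error at most $\eps/2$, which combined with the $\eps/2$ truncation bound yields $\|H - F\|_2 \leq \eps$ with high probability.

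The main obstacle is the dimension-free control of the second moment $V$ for $\mathcal{S}_t$. A naive bound on $\E[F(x')^2 \|H_t(x', z)\|^2]$ would introduce factors of $d^{O(t)}$ and destroy the $\poly(d) \cdot 2^{O(n)}$ runtime. It is essential that Lemma~\ref{lem:cov-hxy} collapses the dimension dependence by exploiting the symmetry structure of the Hermite tensor, and that hypercontractivity lets us decouple the heavy-tailed scalar factor $F(x')^2$ from the Hermite tensor without accumulating polynomial factors of $d$. A secondary point, worth mentioning only to avoid confusion, is that Proposition~\ref{prop:main} does not output an explicit approximation to $M_m$ but rather an evaluator $\widehat g_m$ whose expected squared error over the query $x \sim N(0,I)$ is small; this is exactly the object a PAC learner in $L_2(N(0,I))$ needs, and the guarantees compose linearly across $m$ via the $L_2$ triangle inequality.
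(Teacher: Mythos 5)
Your proposal is correct and follows essentially the same route as the paper's proof: truncate $F$ to its degree-$n$ Hermite expansion, write $T_m = c_{\sigma,m} M_m$ and feed $M_t = \E[F(x')H_t(x',z)/c_{\sigma,t}]$ (via the extended Hermite tensor of Definition~\ref{def:Hxy}) together with $\mathcal{F}_m(x,z') = H_m(x,z')$ into Proposition~\ref{prop:main}, bound the STC second moments by combining Lemma~\ref{lem:cov-hxy}, hypercontractivity (Fact~\ref{thm:hc}) and Cauchy--Schwarz/H\"older, and finish with the $L_2$ triangle inequality over $m \leq n$ plus the truncation bound $\sum_{t>n} c_{\sigma,t}^2 < \eps^2/4$. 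The only difference from the paper's write-up is cosmetic (Cauchy--Schwarz stated in place of H\"older with exponents $(2,2)$).
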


It is easy to see that the algorithm of Theorem~\ref{thm:formal-sum-ReLUs}
straightforwardly extends to the case that
the labels have been corrupted by random zero-mean additive noise.

As will become clear from the analysis that follows, our PAC learning algorithm is not proper.
The hypothesis $H$ it outputs is a succinct description of a low-degree polynomial
(namely, of degree $n$). Specifically, each Hermite coefficient will be expressed
in compressed form, as per Proposition~\ref{prop:main}.
This succinct description allows us to produce an efficient evaluation oracle for $H(x)$.

\begin{remark}
{\em Theorem \ref{thm:formal-sum-ReLUs} should provide useful results 
for \emph{most} activation functions $\sigma$ that are not odd 
(which would lead to vanishing of the even degree coefficients). 
Unfortunately, for many such activations, the analysis will be somewhat subtle---as one needs 
to show that $c_{\sigma,t}$ is not too close to $0$, which usually can only be done 
by computing these Hermite coefficients exactly. 
The two examples given below correspond to common activations 
for which this could be easily done.}
\end{remark}

In particular, for the special case of sums of ReLUs we obtain the following:
\begin{corollary} \label{cor:sum-relu}
For $\sigma(u) = \relu(u)$, there is a PAC learning algorithm that learns
$\mathcal{C}_{\sigma,d,k}$ to error $\eps$
with sample and computational complexity $\poly(d,k) 2^{\poly(1/\eps)}$.
\end{corollary}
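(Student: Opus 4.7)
The plan is to instantiate Theorem~\ref{thm:formal-sum-ReLUs} for $\sigma=\relu$ by verifying its hypotheses and then tracking how the parameters $n$ and $\delta$ depend on $\eps$. First, $\|\relu\|_4 = \E[G^4\mathbf{1}\{G>0\}]^{1/4} = (3/2)^{1/4}=O(1)$, which gives the fourth-moment condition for free.

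The key step is to understand the Hermite coefficients $c_t := \E[\relu(G)h_t(G)]$. A direct computation (e.g.\ integrating by parts, or using the standard formula $\E[\relu(G)He_t(G)] = He_{t-2}(0)$ for $t\ge 2$) shows that $c_0=1/\sqrt{2\pi}$, $c_1=1/2$, and $c_t=0$ for all odd $t\ge 3$, while for even $t\ge 2$ one has the explicit expression
\[
c_t \;=\; \frac{(-1)^{t/2-1}}{\sqrt{2\pi\, t!}}\,(t-3)!!\,,
\]
which by Stirling satisfies $|c_t|=\Theta(t^{-5/4})$. In particular, the even-degree coefficients are bounded below polynomially in $1/t$ and never vanish.

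Given this, I would choose $n$ large enough that $\sum_{t>n} c_t^2 < \eps^2/4$; since $c_t^2=\Theta(t^{-5/2})$ on even $t$, it suffices to take $n = \Theta(1/\eps^{4/3})=\poly(1/\eps)$. For condition (ii), every odd $t\in[3,2n]$ is automatically excluded by $c_t=0$, and $c_1=1/2$; on the remaining even $t\in[2,2n]$ we have $|c_t|\ge c\cdot(2n)^{-5/4}=:\delta = \poly(\eps)$. Both hypotheses of Theorem~\ref{thm:formal-sum-ReLUs} are therefore met, and plugging in yields sample and time complexity $\poly(dk/(\eps\delta))\cdot 2^{O(n)} = \poly(d,k)\cdot 2^{\poly(1/\eps)}$, as desired.

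The only slightly delicate step is the second one: one must compute the Hermite coefficients exactly in order to certify both their polynomial decay (for the tail bound) and their polynomial lower bound on even indices (for $\delta$). Everything else is a routine parameter substitution into Theorem~\ref{thm:formal-sum-ReLUs}.
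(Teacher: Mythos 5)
Your proof is correct and follows essentially the same route as the paper: verify $\|\relu\|_4=O(1)$, compute the Hermite coefficients, establish $|c_{\sigma,t}|=\Theta(t^{-5/4})$ on even $t$ and vanishing on odd $t>1$, then plug into Theorem~\ref{thm:formal-sum-ReLUs}. Two minor notes: your closed form $c_t = (-1)^{t/2-1}(t-3)!!/\sqrt{2\pi\,t!}$ (via $\E[\relu(G)He_t(G)]=He_{t-2}(0)$) is an equivalent rewriting of the expression the paper pulls from Lemma~3.1 of~\cite{DK23-schur}; and your choice $n=\Theta(\eps^{-4/3})$ is actually the right scaling from the tail sum $\sum_{t>n}c_t^2 = \Theta(n^{-3/2}) < \eps^2/4$, whereas the paper states $n=O(\eps^{-2/3})$ (which appears to be a typo, though it does not affect the $\poly(d,k)\,2^{\poly(1/\eps)}$ conclusion).
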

\begin{proof}
Given Theorem \ref{thm:formal-sum-ReLUs}, this amounts to proving some
basic statements about the Fourier coefficients of $\sigma$.
In particular, it follows immediately from the claim that $c_{\sigma,t}=0$ if $t>1$ is odd and
$$c_{\sigma, t} = (-1/4)^{(t-2)/4}\sqrt{\binom{t-2}{(t-2)/2}}/\sqrt{2\pi t(t-1)} = \Theta(t^{-5/4})$$
if $t$ is even. This follows from Lemma 3.1 of~\cite{DK23-schur}. Thus, we may take $n = O(\eps^{-2/3})$ and the rest follows.
\end{proof}

Another application is to periodic activations 
given by $\sigma(u) = \cos(\gamma u)$ for some frequency parameter $\gamma$. 
While the majority of work on neural networks considers  
monotonic activations (e.g., ReLUs), an important branch of the literature
considers networks with periodic activations, specifically the cosine activation. 
Such activations are particularly useful in signal processing and computer vision applications,
where it has been empirically observed that networks with periodic activations are capable 
of representing details in the signals better than ReLU networks. The reader is referred to
~\cite{SitzmannMBLW20, MildenhallSTBRN20, RaiNMZYYFMPRABC21, VargasPHA24} and references therein.

In particular, we show:
\begin{corollary} \label{cor:sum-cos}
For $\sigma(u) = \cos(\gamma u)$, for some parameter $\gamma > 0$, 
there is a PAC learning algorithm that learns
$\mathcal{C}_{\sigma,d,k}$ to error $\eps$
with sample and computational complexity $2^{O(1/\gamma^2)}\poly(dk/\eps)$.
\end{corollary}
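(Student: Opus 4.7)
The plan is to apply Theorem~\ref{thm:formal-sum-ReLUs} directly to the activation $\sigma(u) = \cos(\gamma u)$. Three hypotheses must be verified: the $L_4$ bound on $\sigma$, the Hermite tail bound beyond some cutoff $n$, and a uniform lower bound $\delta$ on the even-index Hermite coefficients up to degree $2n$. The $L_4$ bound is immediate since $|\cos(\gamma u)| \leq 1$ pointwise gives $\|\sigma\|_4 \leq 1$.

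For the Hermite expansion, I would use the generating identity $\sum_{t \geq 0} \frac{s^t}{\sqrt{t!}} h_t(u) = e^{su - s^2/2}$, specialize to $s = i\gamma$, and take the real part. This yields
\[
\cos(\gamma u) \;=\; e^{-\gamma^2/2} \sum_{t \text{ even}} \frac{(-1)^{t/2} \gamma^t}{\sqrt{t!}} \, h_t(u) \;,
\]
so $c_{\sigma,t} = 0$ for odd $t$ (which is exactly the escape clause allowed by the theorem) and $c_{\sigma,t} = e^{-\gamma^2/2}(-1)^{t/2} \gamma^t/\sqrt{t!}$ for even $t$.

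To pick the cutoff $n$, I would note that
\[
\sum_{t > n} c_{\sigma,t}^2 \;=\; e^{-\gamma^2} \sum_{\substack{t > n \\ t \text{ even}}} \frac{\gamma^{2t}}{t!} \;\leq\; \Pr\bigl[\mathrm{Poisson}(\gamma^2) > n\bigr] \;,
\]
and a standard Poisson tail bound shows that $n = O(\gamma^2 + \log(1/\eps))$ suffices to make this quantity at most $\eps^2/4$, and in particular to push $n$ past the mode of the Hermite spectrum.

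The remaining step is to lower bound $\delta := \min_{t \leq 2n,\, t \text{ even}} |c_{\sigma,t}|$. Since for $n$ chosen above we are past the mode of the sequence $|c_{\sigma,t}|$, the minimum over even $t \in [2, 2n]$ is attained at the endpoints and is controlled by $|c_{\sigma,2n}|$. Using Stirling's approximation $\sqrt{(2n)!} = (2n/e)^n \cdot \poly(n)$, I would obtain a bound of the form $1/\delta = 2^{O(n)}$, so the $\poly(1/\delta)$ factor is absorbed by the $2^{O(n)}$ term already present in Theorem~\ref{thm:formal-sum-ReLUs}'s complexity. Plugging these $n$ and $\delta$ into the complexity bound $\poly(dk/(\eps\delta)) \cdot 2^{O(n)}$ then yields the claimed runtime.

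The main technical obstacle is the delicate balance between $n$ and $\delta$: increasing $n$ decreases the Hermite tail but also shrinks $\delta$, since $|c_{\sigma,2n}|$ decays super-exponentially with $n$ once $n \gg \gamma^2$. One therefore has to choose $n$ just barely large enough to drive the tail below $\eps^2/4$ and then show that the resulting $1/\delta$ can still be absorbed into $2^{O(n)}$ by comparing $(2n/\gamma^2)^n$ to $2^n$ in both the $\gamma^2 \gtrsim \log(1/\eps)$ and $\gamma^2 \lesssim \log(1/\eps)$ regimes.
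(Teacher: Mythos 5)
Your approach is essentially the paper's. You compute the Hermite coefficients via the generating identity $\sum_t s^t h_t(u)/\sqrt{t!}=e^{su-s^2/2}$ at $s=i\gamma$; the paper instead integrates $h_t(x)e^{\alpha x}e^{-x^2/2}$ for real $\alpha$ and then analytically continues to $\alpha = i\gamma$. These are the same calculation and give the same $c_{\sigma,t} = e^{-\gamma^2/2}(-1)^{t/2}\gamma^t/\sqrt{t!}$ for even $t$, $0$ for odd $t$. Your cutoff $n = O(\gamma^2 + \log(1/\eps))$ is the correct one; the paper's displayed $n = O(1/\gamma^2 + \log(1/\eps))$ and the corollary's $2^{O(1/\gamma^2)}$ are $\gamma\leftrightarrow 1/\gamma$ slips, as one can see from the paper's own surrounding text: the cited SQ lower bound is $\exp(\Omega(\gamma^2))$, and the remark that $\gamma \ll 1/\sqrt{\log(dk/\eps)}$ gives $\poly(dk/\eps)$ runtime only makes sense with exponent $\gamma^2$.

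The one place you leave a real gap is the step you yourself flag. A raw Stirling bound gives $1/\delta = |c_{\sigma,2n}|^{-1} \asymp e^{\gamma^2/2}\bigl(2n/(e\gamma^2)\bigr)^{n}\poly(n)$, which is $2^{O(n)}$ only when $n=O(\gamma^2)$; in the regime $\log(1/\eps)\gg\gamma^2$ the ratio $2n/\gamma^2$ grows, so comparing $(2n/\gamma^2)^n$ to $2^n$, as you propose, does not close the argument by itself. What saves it is the minimality of $n$: if $n$ is the smallest even index making the tail $<\eps^2/4$, then $c_{\sigma,n}^2=e^{-\gamma^2}\gamma^{2n}/n!=\Omega(\eps^2/\poly(n))$, i.e.\ $\gamma^{2n}=\Omega\bigl(n!\,e^{\gamma^2}\eps^2/\poly(n)\bigr)$. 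Substituting this into $\delta = e^{-\gamma^2/2}\gamma^{2n}/\sqrt{(2n)!}$ and using $n!/\sqrt{(2n)!}=\binom{2n}{n}^{-1/2}\ge 2^{-n}$ yields $\delta = \Omega\bigl(e^{\gamma^2/2}\eps^2\,2^{-n}/\poly(n)\bigr)$, hence $\poly(1/\delta)=2^{O(n)}\poly(1/\eps)$, and the final complexity is $2^{O(\gamma^2)}\poly(dk/\eps)$ as desired. This is exactly the calculation implicit in the paper's (unexplained) claim $\delta=\Omega(e^{-\gamma^2}\eps^2)$; your instinct about the ``delicate balance'' is correct, and this substitution is the missing finishing move.
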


While the complexity of our algorithm blows up as $\gamma$ goes to infinity, 
there is evidence that this kind of dependence is necessary even for $k=1$. 
In particular,~\cite{DiakonikolasKRS23} (see Theorem 1.13) implies that any 
Statistical Query (SQ) algorithm learning a single cosine activation 
to non-trivial error requires $\exp(\min(\Omega(\gamma^2),d^{\Omega(1)}))$ resources. 
Furthermore, ~\cite{SongZB21} shows that if a tiny amount of adversarial label noise is added, 
then the problem is hard for polynomially-sized $\gamma$ under certain cryptographic assumptions.
Note that for $\gamma \ll 1/\sqrt{\log(dk/\eps)}$, our algorithm runs in 
$\poly(dk/\eps)$ time. By the above discussion, this is best possible for SQ algorithms. 

\begin{proof}[Proof of Corollary~\ref{cor:sum-cos}]
In light of Theorem \ref{thm:formal-sum-ReLUs}, 
this result comes down to calculating $c_{\sigma,t}$. 
Note that $c_{\sigma,t}$ is the real part of
$$
\frac{1}{\sqrt{2\pi}}\int_{-\infty}^\infty h_t(x) e^{i\gamma x} e^{-x^2/2}dx.
$$
In order to analyze this, we begin by computing
$$
\frac{1}{\sqrt{2\pi}}\int_{-\infty}^\infty h_t(x) e^{\alpha x} e^{-x^2/2}dx.
$$
for real valued $\alpha$. By completing the square, we see that this is
$$
e^{\alpha^2/2} \left( \frac{1}{\sqrt{2\pi}}\int_{-\infty}^\infty h_t(x)  e^{-(x-\alpha)^2/2}dx\right) = e^{\alpha^2/2} \E[h_t(G+\alpha)].
$$
Note that by Lemma 2.7 of~\cite{Kane20} 
(and noting the difference in normalization between their Hermite polynomials and ours), 
we have that
$$
\E[h_t(G+\alpha)] = \alpha^t/\sqrt{t!}.
$$
Thus, we have that for real $\alpha$
$$
\frac{1}{\sqrt{2\pi}}\int_{-\infty}^\infty h_t(x) e^{\alpha x} e^{-x^2/2}dx = e^{\alpha^2/2} \alpha^t/\sqrt{t!}.
$$
However, by analytic continuation, this must also hold for all complex $\alpha$ as well. 
Thus, plugging in $\alpha=i\gamma$ and taking the real part we have that
$$
c_{\sigma,t} = 
\begin{cases} e^{-\gamma^2/2} \gamma^t (-1)^{t/2} / \sqrt{t!} & 
\textrm{if }t \textrm{ is even} \\ 0 & 
\textrm{if }t\textrm{ is odd}. 
\end{cases}
$$

From this, it is not hard to see that it suffices to take $n = O(1/\gamma^2 + \log(1/\eps))$ and $\delta = \Omega(e^{-\gamma^2}\eps^2)$ in Theorem \ref{thm:formal-sum-ReLUs}, yielding our result.
\end{proof}

\begin{remark}
{\em Corollary~\ref{cor:sum-cos} easily generalizes  
to periodic activations $\sigma(u) = \cos(\gamma u + \theta)$ with an additional phase $\theta$, 
so long as $\theta$ is not close to a multiple of $\pi/2$
(which would again lead to vanishing Fourier coefficients).}
\end{remark}

\medskip

To prove Theorem~\ref{thm:formal-sum-ReLUs},
our basic strategy is to study the Hermite expansion of $F$,
which can be related to the kind of tensors approximated in Proposition \ref{prop:main}.
Our hypothesis will then be an approximation to the low degree part of the Hermite expansion of $F$.
To make this work, we will first need to understand the Hermite expansion of the relevant functions.

\paragraph{Hermite Decomposition of Functions in $\mathcal{C}_{\sigma, d, k}$}

We will require the following lemma:

\begin{lemma}[Hermite Expansion of $\mathcal{C}_{\sigma, d, k}$] \label{lem:Hermite-sum-ReLUs}
Let $F: \R^d \to \R$ be any function of the form
$F(x) = \sum_{i=1}^k w_i \sigma(v_i \cdot x)$ for some $v_i \in \R^d$ with $\|v_i\|_2=1$
and $w_i \in \R_+$. Then $F$ has Hermite expansion
$F(x) = \sum_{m=0}^{\infty} \langle T_m, H_m(x) \rangle$,
where $H_m$ is the normalized Hermite tensor (Definition~\ref{def:Hermite-tensor}) and
$T_m \in (\R^d)^{\otimes n}$ is defined by
\begin{equation} \label{eqn:Tn-sum-ReLUs}
T_m \eqdef c_{\sigma,m} \sum_{i=1}^k w_i v_i^{\otimes m}.
\end{equation}
\end{lemma}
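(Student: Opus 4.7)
The plan is to exploit linearity of the Hermite expansion and reduce the claim to the case of a single summand $\sigma(v_i \cdot x)$. When $x \sim N(0, I_d)$ and $v_i$ is a unit vector, $v_i \cdot x$ is distributed as a standard Gaussian scalar. Since the assumption $\|\sigma\|_4 = O(1)$ in particular gives $\|\sigma\|_2 < \infty$, the univariate function $u \mapsto \sigma(u)$ admits the one-dimensional Hermite expansion $\sigma(u) = \sum_{m \geq 0} c_{\sigma,m}\, h_m(u)$, convergent in $L^2(\R, N(0,1))$, where by definition $c_{\sigma,m} = \E_{G \sim N(0,1)}[\sigma(G) h_m(G)]$.

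Next I would invoke the identity $h_m(v \cdot x) = \langle H_m(x), v^{\otimes m}\rangle$ for any unit vector $v$, which is recorded in Section~\ref{sec:prelims} as an immediate consequence of the compatibility of the Hermite tensor with one-dimensional subspaces (namely $H_m^{(V)}(\mathrm{Proj}_V x) = \mathrm{Proj}_V^{\otimes m} H_m(x)$ applied with $V = \mathrm{span}(v)$). Substituting this into the univariate expansion yields, in $L^2(\R^d, N(0, I))$,
\[
\sigma(v_i \cdot x) \;=\; \sum_{m \geq 0} c_{\sigma,m}\, \langle H_m(x),\, v_i^{\otimes m}\rangle.
\]

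Now I would multiply by $w_i$ and sum over the finite index set $i \in [k]$; since this sum is finite, it commutes with the $L^2$-convergent series in $m$, giving
\[
F(x) \;=\; \sum_{i=1}^k w_i\, \sigma(v_i \cdot x) \;=\; \sum_{m \geq 0} \Bigl\langle H_m(x),\; c_{\sigma,m} \sum_{i=1}^k w_i v_i^{\otimes m} \Bigr\rangle \;=\; \sum_{m \geq 0} \langle T_m, H_m(x)\rangle,
\]
where the last equality uses the definition \eqref{eqn:Tn-sum-ReLUs} of $T_m$ and the symmetry of the inner product.

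There is essentially no obstacle: all the real content is packaged into the projection identity $h_m(v \cdot x) = \langle H_m(x), v^{\otimes m}\rangle$, which the paper has already established, and the rest is linearity together with the elementary fact that swapping a finite sum with an $L^2$-convergent series is legitimate. The only thing worth double-checking is that the resulting series also converges in $L^2(\R^d, N(0,I))$ for $F$ itself; this follows because the partial sums form a Cauchy sequence (the $T_m$ being orthogonal in the Hermite basis, with $\sum_m \|T_m\|_2^2 \leq \|F\|_2^2 \leq (\sum_i w_i)^2 \|\sigma\|_2^2 < \infty$ by Cauchy–Schwarz and $\sum_i w_i = 1$).
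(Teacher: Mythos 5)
Your proof is correct and follows essentially the same route as the paper's: reduce to a single activation by linearity, expand $\sigma$ in the univariate Hermite basis, and lift via the projection identity $h_m(v\cdot x) = \langle H_m(x), v^{\otimes m}\rangle$ (the paper packages the same two facts as citations to Lemma 3.2 and Corollary 3.3 of \cite{DK23-schur} and organizes the argument slightly differently, computing $T_m = \E_{X\sim N(0,I)}[F(X)H_m(X)]$ directly). The extra paragraph you include on $L^2$ convergence is a minor addition the paper leaves implicit, but it changes nothing substantive.
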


\begin{proof}
These calculations have essentially appeared in prior work.
For the sake of completeness, here we show how to directly deduce the lemma from prior work.

By orthogonality of the Hermite tensors, we have that the tensor $T_n$
in the Hermite expansion of $F$, $F(x) = \sum_{m=0}^{\infty} \langle T_m, H_m(x) \rangle$,
is defined by
$T_m =  \E_{X \sim N(0, I)}[F(X) H_m(X)]$.

By definition, $\E_{G \sim N(0, 1)} [\sigma(G) h_m(G)] = c_{\sigma,m}$.
This in turn implies that for any unit vector $v \in \R^d$, it holds that
$\sigma(v \cdot x) = \sum_{m=0}^\infty c_{\sigma,m} \langle H_m(x), v^{\otimes m} \rangle$
(Lemma 3.2 of~\cite{DK23-schur}). Via orthogonality, we additionally
get that
$ \E_{X \sim N(0, I)}[\sigma(v\cdot X) H_m(X)] = c_{\sigma,m} v^{\otimes m}$ (Corollary 3.3 of~\cite{DK23-schur}).

Since $F(x) = \sum_{i=1}^k w_i \sigma(v_i \cdot x)$, by linearity of expectation it follows that
$$\E_{X \sim N(0, I)}[F(X) H_m(X)] =  \sum_{i=1}^k w_i \E_{X \sim N(0, I)}[\sigma(v_i\cdot X) H_m(X)]
 = \sum_{i=1}^k w_i c_{\sigma, m} v_i^{\otimes m}  =  c_{\sigma,m}  \sum_{i=1}^k w_i  v_i^{\otimes m} \;.$$
This completes the proof.
\end{proof}

\paragraph{Proof of Theorem~\ref{thm:formal-sum-ReLUs}}
The key properties that enable our algorithm are that the tensor $H_m(x, y)$
of Definition~\ref{def:Hxy} (i) can be computed by a sequential tensor computation of size $O(m)$,
(ii) it satisfies $\E_{Y \sim N(0,I)}[H_m(x,Y)] = H_m(x)$, for any $x \in \R^d$,
and (iii) it has bounded second moment (Lemma~\ref{lem:cov-hxy}).

To leverage these properties in the context of our learning problem, we note that
the second property implies that
\begin{equation}\label{eqn:Tn-STC}
T_m =  \E_{X \sim N(0, I)}[F(X) H_m(X)] = \E_{X \sim N(0, I), Y \sim N(0, I)}[F(X) H_m(X,Y)] \;,
\end{equation}
where $X$ and $Y$ are independent. Indeed, the first equality is the definition of $T_n$
and the second follows using property (ii) above (by linearity of expectation).

To apply Proposition \ref{prop:main}, we need to relate $T_m$ (which we can approximate)
to the standard moment tensors, $M_m$. Fortunately, it follows from the definitions
that $M_t = T_t / c_{\sigma,t}$. By assumption, for $t\leq 2n$ even,
we have $|c_{\sigma,t}| > \delta$ and so we can approximate $M_t$
to error $\eta$ by approximating $T_t$ to error $\eta/\delta$.
If $t$ is odd and $c_{\sigma,t}=0$, we will have no need to estimate $M_t$.
Otherwise, for $t$ odd we can approximate $M_m$ to error $\eta$
by approximating $T_t$ to error $\eta/\delta$ again.

In particular, we will want to use Proposition~\ref{prop:main} to efficiently approximate the inner product
$\langle T_m, H_m(x) \rangle$ for $x \sim N(0, I)$. Given that such an efficient
computation is possible, the hypothesis of our learning algorithm
will be an approximation to the low-degree Hermite expansion of $F$ of the function
\begin{equation} \label{eqn:hyp-sum-ReLUs}
\wt{H}(x) =  \sum_{m=0}^{n} \wt{\langle T_m, H_m(x) \rangle}.
\end{equation}
In particular, so long as our approximation to $\langle T_m, H_m(x) \rangle$
has $L_2$ error (over Gaussian $x$) at most $\eps/(4n)$ for each $m$, then we have that
\begin{align*}
\| H(x) - \wt{H}(x)\|_2 & \leq \sum_{m=0}^n \| {\langle T_m, H_m(x) \rangle} - \wt{\langle T_m, H_m(x) \rangle}\|_2
+ \left\| \sum_{m>n} \langle T_m, H_m(x) \rangle \right\|_2\\
& \leq \sum_{m=0}^n \eps/(4n) + \sqrt{\sum_{t>n} \|T_t\|_2^2 }\\
& \leq \eps/2 + \sqrt{\sum_{t>n} c_{\sigma,t}^2 \left(\sum_{i=1}^k w_i \|v_i^{\otimes t}\|_2 \right)^2}\\
& \leq \eps/2 + \sqrt{\sum_{t>n} c_{\sigma,t}^2}\\
& \leq \eps.
\end{align*}

In order to compute $\wh{H}$, we will need use to Proposition \ref{prop:main}
to approximate $\langle T_m, H_m(x) \rangle $ for all $m \leq n$
for which $c_{\sigma,m} \neq 0$. We do this by noting that
$\langle T_m, H_m(x) \rangle = c_{\sigma,m} \langle M_n, H_m(x)\rangle$.
In particular, for $t\leq 2m$ even or equal to $m$ we have that
$$
M_t = \E[F(x)H_t(x,y)/c_{\sigma,t}] \;,
$$
where $x$ and $y$ are independent Gaussians.
Given samples from $(x,F(x))$ and simulated Gaussians $y$,
one can compute $F(x)H_t(x,y)/c_{\sigma,t}$ as
a sequential tensor computation of order $t$ and size $O(t)$. Similarly, we have that
$$
H_m(x) = \E_y[H_t(x,y)].
$$
The second moment of $H_t(x,y)/c_{\sigma,t}$ is $O(1)^t$ by Lemma \ref{lem:cov-hxy}.
To bound the second moment of $F(x)H_t(x,y)/c_{\sigma,t}$, we note that it suffices to bound
$$
\E[(F(x) \langle A,H_t(x) \rangle/c_{\sigma,t})^2]
$$
for any $t$-tensor $A$ with $\|A\|_2 \leq 1$.
By Lemma \ref{lem:cov-hxy}, we have that
$$
\E[(\langle A,H_t(x) \rangle)^2] = O(1)^t.
$$
Since $\langle A,H_t(x)\rangle$ is a degree-$t$ polynomial of Gaussians,
by Gaussian hypercontractivity, we have that
$$
\E[(\langle A,H_t(x) \rangle)^4] = O(1)^t \;,
$$
and thus, by Holder's inequality, that
$$
\E[(F(x) \langle A,H_t(x) \rangle/c_{\sigma,t})^2]
\leq O(1)^t \|F(x)\|_4^2 / c_{\sigma,t}^2
\leq O(1)^t/\delta^2 \;.
$$
Thus, for $m\leq n$ and $c_{\sigma,m}\neq 0$,
we can apply Proposition \ref{prop:main} to get an approximation
$\wt{\langle T_m, H_m(x) \rangle}$ to ${\langle T_m, H_m(x) \rangle}$
with $L_2$ error $\eps/(4n)$ with sample and time complexity
$\poly(dk/(\eps\delta))2^{O(n)}$.
This completes our proof. \qed

\subsection{Learning Mixtures of Spherical Gaussians} \label{ssec:gmms}

In Section~\ref{ssec:gmm-dens},
we give our density estimation algorithm.
In Section~\ref{ssec:gmm-cluster}, we give our parameter estimation algorithm.

\paragraph{Problem Setup}
We start by formally defining the target distribution class to be learned.

\begin{definition}[Mixtures of Spherical Gaussians] \label{def:GMMs}
A $k$-mixture of spherical Gaussians is a distribution on $\R^d$
defined by $F = \sum_{i=1}^k w_i N(\mu_i,  I)$,
where $\mu_i \in \R^d$ are the unknown mean vectors and $w_i \geq 0$,
with $\sum_{i=1}^k w_i = 1$, are the mixing weights.
\end{definition}

\noindent We will consider both density estimation and parameter estimation.
In density estimation, we want to output a hypothesis distribution
whose total variation distance from the target distribution is small.
Recall that an \emph{$\eps$-sampler} for a distribution $D$ is a circuit $C$
that on input a set $z$ of uniformly random bits it generates
then  $y \sim D',$ for some distribution $D'$ which has
$\dtv(D',D) \leq \eps.$

\begin{definition}[Density Estimation for Mixtures of Spherical Gaussians]\label{def:learn-GMMs}
The density estimation problem for mixtures of spherical Gaussians
is the following: The input is a multiset of i.i.d.\ samples in $\R^d$ drawn from an
unknown $k$-mixture $F = \sum_{i=1}^k w_i N(\mu_i,  I)$.
The goal of the learner is to output a (sampler for a) hypothesis distribution
$H$ such that with high probability $\dtv(H, F) \leq \eps$.
\end{definition}

\begin{definition}[Parameter Estimation for Mixtures of Spherical Gaussians]\label{def:cluster-GMMs}
The parameter estimation problem for mixtures of spherical Gaussians
is the following: The input is a multiset of i.i.d.\ samples in $\R^d$ drawn from an
unknown $k$-mixture $F = \sum_{i=1}^k w_i N(\mu_i,  I)$,
where the component means $\mu_i$ satisfy a pairwise separation condition.
The goal of the learner is to accurately estimate the weights and mean vectors of the components.
\end{definition}

\subsubsection{Density Estimation for Mixtures of Spherical Gaussians} \label{ssec:gmm-dens}

We consider mixtures of $k$ identity covariance Gaussians on $\R^d$
with the additional restriction that the mean vectors of the components
lie in a ball of radius $O(\sqrt{\log(k)})$. By subtracting the mean of the mixture,
it suffices to consider the case that each component mean has magnitude
$O(\sqrt{\log(k)})$.
Our main algorithmic result in this context is the following.

\begin{theorem}[Density Estimation Algorithm for Mixtures of Spherical Gaussians with Bounded Means]\label{thm:formal-dens-GMMs}
There is an algorithm that given $\eps>0$ and $n=\poly(d, k, 1/\eps)$ samples
from an unknown $F= \sum_{i=1}^k w_i N(\mu_i,  I)$
on $\R^d$ with means of magnitude
$O(\sqrt{\log(k)})$, it runs in $\poly(n, d)$ time and outputs a (sampler for a)
hypothesis distribution $H$ such that with high probability $\dtv(H, F) \leq \eps$.
\end{theorem}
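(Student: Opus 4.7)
The plan is to reduce density estimation to approximating the density ratio $R(x) := F(x)/G(x)$ in $L^{2}(G)$-norm, and then convert the resulting evaluator into a sampler via rejection sampling. By Fact~\ref{fact:Ht-mean-cov}, the ratio admits the Hermite expansion $R(x) = \sum_{m \geq 0} \langle T_m, H_m(x)\rangle$ with coefficients $T_m = \E_{X \sim F}[H_m(X)] = \frac{1}{\sqrt{m!}}\sum_{i=1}^k w_i \mu_i^{\otimes m}$. Writing $M_m := \sum_i w_i \mu_i^{\otimes m}$, these fit the moment tensors of Proposition~\ref{prop:main} with $v_i = \mu_i$. Under the bounded-means assumption $\|\mu_i\|_2 \leq R_0 = O(\sqrt{\log k})$, Cauchy--Schwarz gives $\|T_m\|_2^2 \leq R_0^{2m}/m!$, so the tail $\sum_{m > n}\|T_m\|^2 \leq \sum_{m>n}(eR_0^2/m)^m$ decays super-exponentially once $m > eR_0^2$; choosing the truncation $n = \Theta(\log(k/\eps))$ makes the tail contribute at most $\eps/4$ in $L^{2}(G)$, and $L^{2}(G)$-error converts to total variation via another Cauchy--Schwarz.

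Next, I would apply Proposition~\ref{prop:main} to estimate each $\langle T_m, H_m(x)\rangle = \langle M_m, H_m(x)\rangle/\sqrt{m!}$ as a function of $x \sim G$. For the source STC, take $\mathcal{S}_t(X, Y) := \sqrt{t!}\, H_t(X, Y)$ with $X \sim F$ and independent $Y \sim N(0,I)$; the unbiasedness $\E_Y[H_t(X,Y)] = H_t(X)$ of the extended Hermite tensor (Definition~\ref{def:Hxy}) combined with Fact~\ref{fact:Ht-mean-cov} gives $\E[\mathcal{S}_t] = M_t$, and Lemma~\ref{lem:cov-hxy} (applied component-wise across the mixture) controls its second moment. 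For the query STC, set $\mathcal{F}_m(\mathcal{X}, \mathcal{Y}) := H_m(\mathcal{X}, \mathcal{Y})$ with $\mathcal{X} \sim G$ and $\mathcal{Y} \sim N(0,I)$, so that $T(\mathcal{X}) = H_m(\mathcal{X})$, again with Lemma~\ref{lem:cov-hxy} giving a bounded second moment (now with $\mu = 0$). Applying Proposition~\ref{prop:main} for each $m \leq n$ with per-coefficient error budget $O(\eps/n)$ and taking a union bound, a suitable polynomial sample budget $N$ produces an evaluator $\wt R(x) := \sum_{m \leq n} \wt A_m(x)/\sqrt{m!}$ satisfying $\|\wt R - R\|_{L^2(G)} \leq \eps/2$ with high probability.

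The main obstacle is the final sampling step, since $\wt R$ is a degree-$n$ polynomial and thus unbounded on $\R^d$, so naive rejection sampling from proposal $G$ with acceptance probability proportional to $\max(\wt R,0)$ lacks a finite envelope. My plan is to restrict attention to a ball $\mathcal{B}_T := \{x \in \R^d : \|x\|_2 \leq T\}$ of radius $T = \Theta(\sqrt{d} + \sqrt{n\log(dk/\eps)})$. Standard Gaussian concentration (together with the bounded-means hypothesis) shows $F(\R^d \setminus \mathcal{B}_T), G(\R^d \setminus \mathcal{B}_T) \leq \eps/10$, so the restriction contributes only $O(\eps)$ to total variation error. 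On $\mathcal{B}_T$, the implicit tensor representation of $\wt R$ furnishes a crude pointwise bound $M = \poly(T, d, k, 1/\eps)$, allowing rejection sampling with proposal $G$ restricted to $\mathcal{B}_T$ and acceptance probability $\max(\wt R(x),0)/M$. Since $\E_{x \sim G}[\max(\wt R, 0)] \approx \E_G[R] = 1$, the expected acceptance rate is $\Omega(1/M)$, giving a $\poly(d, k, 1/\eps)$-time per-sample cost, and the $L^2(G)$-approximation then implies the resulting normalized distribution is within $O(\eps)$ total variation of $F$, completing the proof.
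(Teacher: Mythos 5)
Your overall strategy matches the paper's: expand $F/G$ in Hermite tensors, truncate at $n_0 = \Theta(\log(k/\eps))$, approximate each Hermite coefficient inner product via Proposition~\ref{prop:main} using the extended Hermite tensor of Definition~\ref{def:Hxy}, and finish with rejection sampling. But there are two genuine gaps, one of which is fatal to the running time.

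\textbf{Missing rescaling of the $v_i$.} You instantiate Proposition~\ref{prop:main} with $v_i = \mu_i$ and $\mathcal{S}_t = \sqrt{t!}\,H_t(X,Y)$. This choice makes both contributions to the error bound in Proposition~\ref{prop:main} quasi-polynomial rather than polynomial. First, the error bound carries a factor $(1+\max_i\|v_i\|_2)^{2m}$, which with $\|v_i\|_2 = \Theta(\sqrt{\log k})$ and $m = \Theta(\log(k/\eps))$ is $(\log k)^{\Theta(\log(k/\eps))}$. Second, and worse, the covariance of $\mathcal{S}_t$ picks up the prefactor $t!$; since Lemma~\ref{lem:cov-hxy} controls the second moment of $H_t(X,Y)$ by $O(\|\mu\|_2^2/t + 1)^t$, multiplying by $t!$ gives, via Stirling, $\Theta\bigl((\|\mu\|_2^2 + t)/e\bigr)^t\sqrt{t}$; for $\|\mu\|_2^2, t = \Theta(\log(k/\eps))$ this is $\bigl(\Theta(\log(k/\eps))\bigr)^{\Theta(\log(k/\eps))}$, again quasi-polynomial. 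Since the bound in Proposition~\ref{prop:main} is $\poly(V)\cdot (1+\max\|v_i\|_2)^{2m}/N^{1/2}$, the sample budget $N$ needed to hit per-coefficient error $\eps/n$ is quasi-polynomial, contradicting the theorem. The paper's proof avoids this by setting $v_i = \mu_i/\sqrt{\log k}$ (so $\|v_i\|_2 = O(1)$) and correspondingly $\mathcal{S}_t = \sqrt{t!/\log(k)^t}\,H_t(X,Y)$, $\mathcal{F}_n = \sqrt{\log(k)^n/n!}\,H_n(X,Y)$. The normalization $\sqrt{t!/\log(k)^t}$ approximately cancels the $t!$ growth when $t = \Theta(\log k)$, which is what keeps $V$ polynomial; without this rescaling the argument only yields a quasi-polynomial bound.

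\textbf{Rejection sampling envelope.} Restricting to a ball $\mathcal{B}_T$ with $T = \Theta(\sqrt{d} + \sqrt{n\log(dk/\eps)})$ does not give a polynomial pointwise envelope for $\wt R$. On $\mathcal{B}_T$, a degree-$n$ polynomial with $L^2(G)$-norm $\poly(k)$ can attain values as large as roughly $(T^2/n)^{n/2} \cdot d^{n/2}$, which for $T = \Theta(\sqrt d)$ and $n = \Theta(\log(k/\eps))$ is $d^{\Omega(\log(k/\eps))}$ --- quasi-polynomial in $d$. Your claimed bound $M = \poly(T,d,k,1/\eps)$ is therefore not justified, and without it the rejection loop is too slow. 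The paper instead bounds the \emph{excess mass} $\E_{x\sim G}[\max((F/G)(x) - A, 0)]$ for a polynomial threshold $A = \poly(k/\eps)$, using the tail bound $\Pr[(F/G)(x) > t] \leq O(k)\exp(-\Omega(\log^2(t)/\log k))$ coming from the bounded-means assumption. It then \emph{clips} the estimator $R$ to $R' := \min(\max(R,0), A)$, shows $\|R' - F/G\|_{L^1(G)} = O(\eps)$, and rejection-samples with acceptance probability $R'(x)/A$. This clipping step (rather than a domain restriction) is what makes the envelope $A$ genuinely polynomial. You should adopt this truncation argument in place of the ball restriction.
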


As for our previous application, our learning algorithm is not proper.
The hypothesis distribution $H$ will be such that
$H/G$, where $G$ is the pdf of $N(0, I)$,
will be a low-degree polynomial in compressed form,
as per Proposition~\ref{prop:main}, that allows for efficient sampling.

\medskip

To apply Proposition~\ref{prop:main} for this learning problem, we will need
the following basic facts on the Hermite decomposition of the relevant distributions.

\paragraph{Hermite Decomposition of Spherical Mixtures}

We will require the following lemma:

\begin{lemma}[Hermite Expansion of Spherical Mixtures] \label{lem:Hermite-GMMs}
Let $F: \R^d \to \R$ be any distribution of the form
$F(x) = \sum_{i=1}^k w_i N(\mu_i, I)$ for some $\mu_i \in \R^d$
and $w_i \in \R_+$ with $\sum_{i=1}^k w_i=1$. Let $G(x)$ be the pdf of the standard Gaussian
$N(0, I)$. Then $(F/G)(x)$ has Hermite expansion
$(F/G)(x) = \sum_{n=0}^{\infty} \langle T_n, H_n(x) \rangle$,
where $H_n$ is the normalized Hermite tensor (Definition~\ref{def:Hermite-tensor}) and
$T_n \in (\R^d)^{\otimes n}$ is defined by
\begin{equation} \label{eqn:Tn-GMMs}
T_n \eqdef (1/\sqrt{n!}) \sum_{i=1}^k w_i \mu_i^{\otimes n} \;.
\end{equation}
\end{lemma}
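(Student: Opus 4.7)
The plan is to identify the Hermite coefficients of $F/G$ by taking inner products against the orthonormal basis given by the Hermite tensors. Concretely, I would first note that since the entries of $H_n(x)$, as $n$ ranges over $\N$, form a complete orthonormal basis of $L^2(\R^d,N(0,I))$ (with the orthogonality relation $\E_{x\sim N(0,I)}[H_n(x)\otimes H_m(x)] = \delta_{n,m}\mathrm{Sym}_n(I_{d^n})$ recalled in Section~\ref{ssec:hermite}), any symmetric tensor coefficient $T_n$ in an expansion $(F/G)(x) = \sum_n \langle T_n, H_n(x)\rangle$ must be given by
\[
T_n = \E_{X \sim N(0,I)}\bigl[(F/G)(X)\, H_n(X)\bigr] \;.
\]

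The next step is to rewrite this Gaussian expectation as an expectation under $F$. Since $G$ is the density of $N(0,I)$, we have
\[
\E_{X \sim N(0,I)}\bigl[(F/G)(X)\, H_n(X)\bigr]
= \int_{\R^d} \frac{F(x)}{G(x)} H_n(x)\, G(x)\, dx
= \int_{\R^d} F(x)\, H_n(x)\, dx
= \E_{X \sim F}[H_n(X)] \;.
\]
By linearity of expectation and the definition $F = \sum_{i=1}^k w_i N(\mu_i,I)$, this equals $\sum_{i=1}^k w_i\, \E_{X\sim N(\mu_i,I)}[H_n(X)]$. Now the first identity in Fact~\ref{fact:Ht-mean-cov} gives $\E_{X\sim N(\mu_i,I)}[H_n(X)] = \mu_i^{\otimes n}/\sqrt{n!}$, so that $T_n = (1/\sqrt{n!}) \sum_{i=1}^k w_i\, \mu_i^{\otimes n}$, matching \eqref{eqn:Tn-GMMs}.

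The only nontrivial point is to justify that the Hermite expansion indeed represents $F/G$, rather than being merely a formal series. I would dispatch this by checking that $F/G \in L^2(\R^d, N(0,I))$, so that completeness of the Hermite basis yields the claimed identity. For a single component, $F_i(x)/G(x) = \exp(\mu_i\cdot x - \|\mu_i\|_2^2/2)$, and a direct Gaussian integral gives $\E_{X\sim N(0,I)}[(F_i/G)^2] = \exp(\|\mu_i\|_2^2)$; applying Cauchy--Schwarz (or convexity of $t\mapsto t^2$) to the mixture bounds $\|F/G\|_{L^2(G)}^2$ by $\sum_i w_i \exp(\|\mu_i\|_2^2)$, which is finite. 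This is the only step that could use care, but it is routine; everything else is a direct computation using orthogonality and Fact~\ref{fact:Ht-mean-cov}.
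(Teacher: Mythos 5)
Your proposal is correct and follows essentially the same route as the paper: express $T_n = \E_{X\sim G}[(F/G)(X)H_n(X)] = \E_{X\sim F}[H_n(X)]$, split the expectation by mixture components, and invoke $\E_{X\sim N(\mu,I)}[H_n(X)] = \mu^{\otimes n}/\sqrt{n!}$ (which is Fact~\ref{fact:Ht-mean-cov} and is the same fact the paper cites from Lemma~2.7 of \cite{Kane20}). The one thing you add is an explicit verification that $F/G\in L^2(\R^d,N(0,I))$ — the paper simply asserts this — which is a welcome, if minor, bit of extra rigor.
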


\begin{proof}
Since $(F/G)$ is in $L_2$ with respect to the Gaussian measure $G$,
a Hermite expansion does exist in the form of
$(F/G)(x) = \sum_{n=0}^{\infty} \langle T_n, H_n(x) \rangle $, where
$$T_n = \E_{x \sim G} [(F/G)(x) H_n(x)] = \E_{x \sim F} [H_n(x)] \;.$$
Since $F$ is a mixture, we have that
$\E[H_n(F)] = \sum_i w_i \E[H_n(N(\mu_i,I))]$.
By Lemma 2.7 of~\cite{Kane20}, this is equal to
$(1/\sqrt{n!}) \sum w_i \mu_i^{\otimes n}$.
This completes the proof.
\end{proof}

\paragraph{Proof of Theorem~\ref{thm:formal-dens-GMMs}.}
The proof is analogous to the proof of Theorem~\ref{thm:formal-sum-ReLUs}
for our previous application. We similarly leverage the fact that the tensor $H_n(x, y)$
of Definition~\ref{def:Hxy} (i) can be computed by a sequential tensor computation of size $O(n)$,
(ii) it satisfies $\E_{Y \sim N(0,I)}[H_n(x,Y)] = H_n(x)$, for any $x \in \R^d$,
and (iii) it has bounded second moment (Lemma~\ref{lem:cov-hxy}).

We start by noting that
the second property implies that the tensor $T_n$ of \eqref{eqn:Tn-GMMs} satisfies
\begin{equation}\label{eqn:Tn-STC-GMMs}
T_n =  \E_{X \sim F}[H_n(X)] = \E_{X \sim F, Y \sim N(0, I)}[H_n(X, Y)]\;,
\end{equation}
where $X$ and $Y$ are independent. Indeed, the first equality is the definition of $T_n$
and the second follows using property (ii) above (by linearity of expectation).

We will apply Proposition~\ref{prop:main} to efficiently approximate $T_n$.
Specifically, we will show how to efficiently approximate the inner product
$\langle T_n, H_n(x) \rangle$ for $x \sim N(0, I)$. Given that such an efficient
approximation is possible, we will show how to output an efficiently samplable
hypothesis distribution.

The first step will be to produce an evaluation circuit which computes an approximation to
$(F/G)(x)$ for Gaussian random $x$. We will achieve this via
an application of Proposition~\ref{prop:main}.

Note that
$$\|T_n\|_2^2 = O\left(\max_{i \in [k]} \|\mu_i\|_2^2/n\right)^{n/2}  = O(\log(k)/n)^{n/2} \;,$$
where the first equality follows by a direct calculation and the second follows by our assumed
bound on the $\|\mu_i\|_2$'s.

Therefore, for $n_0$ a sufficiently large constant multiple of $\log(k/\eps)$,
truncating the Hermite expansion of $F/G$ at $n_0$
introduces $L_2$-error (with respect to $G$) at most $\eps/2$, i.e.,
$$(\wt{F}/G)(x) = \sum_{n=0}^{n_0} \langle T_n, H_n(x) \rangle $$
is $\eps/2$-close to $F/G$ in $L_2$ norm. In particular, if we let
$\wt{F}(x) = G(x) \sum_{n=0}^{n_0} \langle T_n,H_n(x) \rangle$,
we have that the $L_1$-error between $F$ and $\wt{F}$ is less than $\eps/2$.

Using Proposition~\ref{prop:main}, we can come up with an evaluation circuit
which approximates $(\wt{F}/G)$ at random $x$ to small error.
The parameter $m$ in the statement of the proposition will be set to $m:=n$, for $n \leq n_0$.
The tensor $M_t$  will be set to $M_t: =\sum_{i=1}^k w_i v_i^{\otimes t}$, where
$w_i$ are the mixture weights and $v_i = \mu_i/\sqrt{\log(k)}$, where $\mu_i$ are
the component means of $F(x)$. By our assumption on the magnitude of the
$\mu_i$'s,  it follows that each $v_i$ has $\ell_2$-norm $O(1)$.
This re-parameterization is necessary in order to get the right error terms in our analysis.
We then have that
$$(\wt{F}/G)(x) = \sum_{n=0}^{n_0} \langle \sqrt{\log(k)^n/n!} H_n(x), M_n \rangle \;.$$
Combining the above with \eqref{eqn:Tn-STC-GMMs} gives that
$M_n = \sqrt{n!/\log(k)^n} \E[H_n(F,G)]$.

Given the above, the sequential tensor computation $\mathcal{S}_n$ has two vector inputs $v, u \in \R^d$ and
is defined so that $\mathcal{S}_n(u, v) = \sqrt{n!/\log(k)^n} H_n(u, v)$.
We have already argued that $H_n(u, v)$ can be computed by a sequential tensor computation of size $O(n)$.
The corresponding input distribution is the joint distribution $\mathcal{D}: = (X, Y)$,
where $X \sim F$ and $Y \sim N(0, I)$ are independent.
Since we have sample access to $F$ the distribution $\mathcal{D}$ is efficiently samplable.
Moreover, we have
$$\E_{(u, v,) \sim \D} [\mathcal{S}_n(u, v)] = \sqrt{n!/\log(k)^n}
\E_{X \sim F, Y \sim N(0, I)}[H_n(X,Y)] = M_n  \;.$$

We can use Lemma~\ref{lem:cov-hxy} to bound the covariance of
$\mathcal{S}_t$. We show the following:

\begin{claim} \label{clm:cov-St-GMMs}
We have that $\Cov_{(u, v) \sim \D} [\mathcal{S}_t(u, v)]$ is bounded above by
$V: = \poly(k/\eps)$.
\end{claim}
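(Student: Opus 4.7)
The plan is to reduce the covariance bound to a second-moment bound, then reduce that (by conditioning on the mixture component of $u$) to the single-Gaussian setting already handled by Lemma~\ref{lem:cov-hxy}, and finally to absorb the normalization factor $t!/\log(k)^t$ by a direct estimate.

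First, recall that the covariance of an order-$t$ tensor-valued random variable is dominated in every direction by its second moment: for any unit tensor $W$, one has $\mathrm{Var}[\langle W,\mathcal{S}_t\rangle]\leq \E[\langle W,\mathcal{S}_t\rangle^2]\leq \E[\|\mathcal{S}_t\|_2^2\,\|W\|_2^2]$, and more strongly the spectral norm of $\E[\mathcal{S}_t\otimes\mathcal{S}_t]$ (viewed as an operator on $(\R^d)^{\otimes t}$) upper bounds the variance in every direction. It therefore suffices to produce a $\mathrm{poly}(k/\eps)$ upper bound on the spectral norm of $\E_{(u,v)\sim\D}[\mathcal{S}_t(u,v)\otimes \mathcal{S}_t(u,v)]$. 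Now, since $u\sim F=\sum_{i=1}^k w_i\,N(\mu_i,I)$ is a mixture and $v$ is independent of the component index, conditioning on $i$ and using linearity gives
\[
\E_{(u,v)\sim\D}[H_t(u,v)\otimes H_t(u,v)] \;=\; \sum_{i=1}^k w_i\,\E_{X\sim N(\mu_i,I),\,Y\sim N(0,I)}[H_t(X,Y)\otimes H_t(X,Y)],
\]
so a spectral-norm bound on each summand (and hence on the convex combination) reduces the problem to the single-component case.

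Next I would apply Lemma~\ref{lem:cov-hxy} to each term, which bounds the spectral norm of $\E[H_t(X,Y)\otimes H_t(X,Y)]$ by $C^{t}\bigl(\|\mu_i\|_2^2/t + 1\bigr)^t$ for an absolute constant $C$. Using the hypothesis $\|\mu_i\|_2 = O(\sqrt{\log k})$ this is at most $C'^{\,t}\bigl(\log(k)/t + 1\bigr)^t$ uniformly in $i$, and multiplying by the prefactor $t!/\log(k)^t$ coming from the definition of $\mathcal{S}_t$ yields
\[
\bigl\|\E[\mathcal{S}_t(u,v)\otimes \mathcal{S}_t(u,v)]\bigr\|_{\mathrm{op}}\;\leq\;\frac{t!}{\log(k)^t}\cdot C'^{\,t}\bigl(\log(k)/t + 1\bigr)^t.
\]

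The remaining step, and the main obstacle, is to check that this quantity is $\mathrm{poly}(k/\eps)$ for all $t\leq 2n_0$ with $n_0 = O(\log(k/\eps))$. I would split into two regimes. For $t\leq \log k$, one has $\log(k)/t + 1 = O(\log(k)/t)$, so the expression becomes $t!\cdot O(1/t)^t$, and Stirling's bound $t!/t^t \leq e\sqrt{t}\,e^{-t}$ collapses it to $O(1)^t \leq O(1)^{\log k} = \mathrm{poly}(k)$. For $\log k < t \leq 2n_0$, one has $(\log(k)/t + 1)\leq 2$, so the product is at most $(2C')^{t}\cdot t!/\log(k)^t \leq (2C')^t (t/\log k)^t$; since $t/\log k = O(\log(k/\eps)/\log k)$, a short calculation (using $t!\leq t^t$ and Stirling as a matching lower bound) shows this is $\exp(O(t)\cdot\log(1+\log(1/\eps)/\log k))$, which is bounded by $\mathrm{poly}(k/\eps)$. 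Combining the two regimes and taking the maximum yields the stated $V = \mathrm{poly}(k/\eps)$. The only delicate point is the bookkeeping in the second regime to ensure that the $t!$ growth is fully absorbed by the normalization $\log(k)^t$ within the polynomial-in-$(k/\eps)$ budget.
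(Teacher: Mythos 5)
Your overall structure — reduce the covariance bound to a second-moment bound, decompose the second moment over the mixture components, and invoke Lemma~\ref{lem:cov-hxy} on each $N(\mu_i,I)$ term — is the same strategy the paper uses, modulo a cosmetic difference: the paper applies the law of total covariance (yielding a within-component sum plus a $\Cov$-of-the-conditional-means term), whereas you pass directly to the second moment, which is cleaner and avoids having to bound the second term separately. Either route is fine.

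There is, however, one place where you are actually being \emph{more} careful than the paper, and one place where your calculation is not quite right. The paper's proof bounds $\Cov[H_t(F,G)]$ by $O(\log(k)/t+1)^t$ and then asserts this is $\poly(k/\eps)$ for $t=O(\log(k/\eps))$ — but $\mathcal{S}_t$ is defined as $\sqrt{t!/\log(k)^t}\,H_t(u,v)$, and the paper's displayed argument never mentions the prefactor $t!/\log(k)^t$. You correctly identified that this prefactor must be addressed; that was the right instinct. The problem is your final "short calculation" in the regime $\log k < t \leq 2n_0$: the bound $\exp\bigl(O(t)\log(1+\log(1/\eps)/\log k)\bigr)$ with $t=\Theta(\log(k/\eps))$ is $\poly(k/\eps)$ \emph{only} when $\log(1/\eps)=O(\log k)$, i.e.\ when $\eps \geq k^{-O(1)}$. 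Writing $\rho := \log(1/\eps)/\log k$, the expression is $\exp\bigl(\Theta(\rho\log k)\,\log(1+\rho)\bigr)$ while $\poly(k/\eps)=\exp\bigl(O(\rho\log k)\bigr)$, so you would need $\log(1+\rho)=O(1)$, i.e.\ $\rho=O(1)$. Using the sharper Stirling bound $t!\leq t^t e^{-t} O(\sqrt{t})$ does not rescue this: the resulting expression is $\sqrt{t}\,(C/e)^t\,(t/\log k)^t$, and the last factor still grows super-polynomially in $k/\eps$ once $t\gg\log k$. So as written, your conclusion in the second regime silently assumes $\eps \geq 1/\poly(k)$. This is a soft issue — the paper's own proof sidesteps it by not confronting the prefactor at all, and the intended regime for density estimation makes the assumption benign — but your proof should state it as an assumption (or restate the covariance bound as $\poly(k/\eps)\cdot(1+\log(1/\eps)/\log k)^{O(\log(k/\eps))}$) rather than asserting an unconditional $\poly(k/\eps)$ bound that does not quite hold.
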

\begin{proof}
Using the fact that $F = \sum_{i=1}^k w_i N(\mu_i,I)$,
we find that the covariance of $H_n(F,G)$ is
$\sum_{i=1}^k w_i \Cov[H_n(N(\mu_i,I),G)] + \Cov[X]$,
where $X$ is the random variable that has expected value
$\E[H_n(N(\mu_i,I),G)] = \mu_i^{\otimes n}/\sqrt{n!}$
with probability $w_i$. By Lemma~\ref{lem:cov-hxy},
the first term above has operator norm at most $O(\log(k)/n+1)^n$
and the second term is similarly bounded.
For $n = O(\log(k/\eps))$, this is at most $\poly(k/\eps)$.
\end{proof}

The sequential tensor computation $\mathcal{F}_n$ and the distribution $\mathcal{D'}$
will be the same as in our previous application, up to rescaling.
Namely, $\mathcal{F}_n(u, v) = \sqrt{\log(k)^n/n!} H_n(u, v)$ and
$\mathcal{D}'$ is the distribution of $Y \sim N(0, I)$.
Since $H_n(X) = \E_{Y}[H_n(X,Y)]$, for $X \sim N(0, I)$,
and the second moment of $H_n(X, Y)$ is bounded above
by Lemma~\ref{lem:cov-hxy}, it follows that the second moment of
$\mathcal{F}_n$ is bounded above by $O(\log(k)/n)^n$.

Therefore, applying Proposition~\ref{prop:main},
we have an algorithm that runs in time $\poly(k, d, n_0, 1/\eps, 1/\tau)$
which with probability $1-\tau$ produces an approximation to
$\langle \sqrt{\log(k)^n/n!}H_n(x), M_n\rangle$, for $n \leq n_0$,
so that the $L_2$ (and thus $L_1$) expected error for $x \sim G$
is at most $\eps/(2(n_0+1))$. Summing this over $0 \leq n \leq n_0$
and taking $\tau < 1/(100 n_0)$, gives a 99\% probability of producing an approximation
to $(\wt{F}/G)$ with $L_1$ error at most $\eps/2$.
This in turn gives an approximation to $F/G$ with $L_1$ error at most $\eps$.

We next need to address how to go from an approximate evaluation oracle for $F/G$
to an approximate sampler for $F$. We achieve this via rejection sampling.
In particular, we produce a random sample $x \sim G$ and compute our oracle $R$ at $x$.
We then return $x$ with probability proportional to $R(x)$, and otherwise resample and repeat.
Note that if $R$ returned exactly $(F/G)$ and if we could keep $x$
with probability exactly proportional to $R(x)$, this would give $F$ exactly.

The first issue we have to face is that $R(x)$ is neither guaranteed
to be non-negative nor bounded,
and thus keeping $x$ with probability proportional to $R(x)$ is impossible.
To fix this, we need to truncate $R$. In particular, we define $R'$ so that
$R' = 0$ if $R < 0$; $R = A$ if $R > A$, and $R$ otherwise,
for $A$ some parameter we will choose shortly.

We claim that $\|R'-(F/G)\|_{1,G} := \E_{x \sim G}[|R(x) - (F/G)(x)|]$
is at most $3\eps$.

First, note that $\|R-(F/G)\|_{1,G} < \eps$.
Next, since $(F/G)$ is non-negative,
replacing $R(x)$ by $0$ when $R(x) < 0$, will only decrease the distance.
It remains to show that
$$\E_{x \sim G}[\max(R(x)-A, 0)] < 2\eps \;.$$
Since $\|R-(F/G)\|_{1,G} < \eps$, it suffices to show that
$\E_{x \sim G}[\max( (F/G)(x)-A, 0)] < \eps$.
The above is equal to
$\int_{t>A} \Pr[(F/G)(x) > t] dt$.
Using the definition of $F$, it follows that
$$(F/G)(x) \leq \max_i \frac{N(\mu_i,I)}{N(0,I)}(x) \leq \max_i \exp(\mu_i \cdot x) \;.$$
Thus, by a union bound
$$\Pr[(F/G)(x) > t] \leq \sum_i \Pr[\mu_i \cdot x > \log(t)] \leq
O(k) \exp\left(-\Omega\left(\log(t)/\sqrt{\log(k)}\right)^2 \right) \;.$$
Thus, if $A$ is a large constant degree polynomial in $k/\eps$,
this resulting integral is at most $\eps$.

If the above holds, then our algorithm can evaluate a function $R'(x)$
so that with 99\% probability $\|R-(F/G)\|_1 < 3 \eps$
(we can re-parameterize, by setting $\eps/6$ instead of $\eps$,
so this is actually less than $\eps/2$);
and $R'(x) \in [0,A]$ for all $x$.

We now use rejection sampling. We repeatedly sample $x \sim G$
and with probability $R'(x)/A$, we return $x$ and otherwise repeat this process.
Note that since $\|R'-F/G\|_1$ is small, $\E[R'(x)] > 1/2$,
and so each iteration has an $\Omega(1/A)$ chance of terminating.
Thus, in expectation, this procedure terminates after only $O(A)$
rounds with 99\% probability. Furthermore, the resulting distribution
that we sample from is proportional to $R' G$.
Observe that this is $\eps$-close to $(F/G) \, G = F$.
Thus, we are sampling from a distribution $\eps$-close to $F$, as desired.

Overall, this gives a density estimation algorithm
with complexity $\poly(d, k, 1/\eps)$,
completing the proof of Theorem~\ref{thm:formal-dens-GMMs}.
\qed

\subsubsection{Parameter Estimation for Mixtures of $O(\sqrt{\log(k)})$-Separated Gaussians} \label{ssec:gmm-cluster}

We start by recalling that \cite{LL21-opt}'s result on clustering/parameter estimation
for mixtures of spherical Gaussians only works
under pairwise mean separation of $\log(k)^{1/2+c}$,
for some constant $c>0$\footnote{For the sake of simplicity, this
description focuses on the case of uniform mixtures.}.
This is because, due to the way that they were approximating Hermite polynomials,
they could only approximate tensors of order $\log(k)/\log\log(k)$ in polynomial time.
Our extended Hermite tensors (Definition~\ref{def:Hxy} and Lemma~\ref{lem:cov-hxy})
improve upon this, allowing us to work with tensors of order $\log(k)$, and thus
obtain optimal (up to constant factors) $O(\sqrt{\log(k)})$ separation.

Specifically, we show the following statement that also handles general weights:

\begin{theorem}[Parameter Estimation for Mixtures of Spherical Gaussians] \label{thm:formal-cluster-GMMs}
Let $F = \sum_{i=1}^k w_i N(\mu_i,I)$ be a mixture of Gaussians
with $w_{\min} \leq \min w_i$ and let $\alpha > 0$ be an accuracy parameter.
Suppose that $s := \min_{i\neq j}\|\mu_i - \mu_j\|_2$ is at least a sufficiently large constant multiple of
$\sqrt{\log(1/(\alpha w_{\min}))}$ and furthermore that
$\max_{i\neq j} \|\mu_i - \mu_j\|_2 = O(\min_{i\neq j} \|\mu_i - \mu_j\|_2).$
Then there exists an algorithm that given $k, w_{\min}, \alpha, s$
and $\poly(d/(w_{\min}\alpha))$ i.i.d.\ samples from $F$,
runs in $\poly(n, d)$ time, and
outputs estimates $\tilde \mu_i$ and $\tilde w_i$ of $\mu_i$ and $w_i$,
so that with probability $2/3$, for some permutation $\pi$ of $[k]$
$$
|\tilde w_i - w_i|, \|\tilde \mu_i - \mu_i\|_2 \leq \alpha
$$
for all $i\in [k]$.
\end{theorem}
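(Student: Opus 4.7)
}
The plan is to follow the pairwise-testing strategy of~\cite{LL21-opt} sketched in Section~\ref{sssec:tech-learn}, but implemented with the improved extended Hermite tensor $H_t(x,y)$ of Definition~\ref{def:Hxy} and its second-moment bound from Lemma~\ref{lem:cov-hxy}. The reduction from parameter estimation to clustering is standard: given sufficiently many samples, if we can reliably decide for each pair whether the two samples come from the same component or from different ones, then a simple graph-connectivity step partitions the samples into the true components (up to a small fraction of misclassified ``bad'' points), after which we recover each $w_i$ as a cluster's empirical fraction and each $\mu_i$ as a cluster's empirical mean; Bernstein-type concentration gives accuracy $\alpha$ using $\poly(d/(w_{\min}\alpha))$ samples per cluster.

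For the pairwise test, I would introduce the paired-difference mixture $Z = (Y-Y')/\sqrt{2}$ with independent $Y,Y' \sim F$, which itself is a $k^2$-mixture $\sum_{i,j} w_iw_j\, N(\mu_{ij}, I)$ with $\mu_{ij}:=(\mu_i-\mu_j)/\sqrt{2}$. Exactly as in Lemma~\ref{lem:Hermite-GMMs}, its order-$t$ Hermite moment is
\[
M_t \;=\; \sqrt{t!}\,\E[H_t(Z)] \;=\; \sum_{i,j} w_iw_j\, \mu_{ij}^{\otimes t}\,.
\]
Using $\E_{G\sim N(0,I)}[H_t(Z,G)] = H_t(Z)$, one can express $M_t$ as the expectation of the STC $\sqrt{t!}\,H_t(Z,G)$ applied to the efficiently samplable distribution $(Y,Y',G)$; by Lemma~\ref{lem:cov-hxy} with $\|\mu_{ij}\|_2 = O(s)$, its second moment is at most $(O(s^2/t)+1)^t$. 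To estimate the test statistic $\tau(x,x'):= \langle H_t((x-x')/\sqrt{2}),\, M_t\rangle$ on a held-out pair $(x,x')$, I would apply Proposition~\ref{prop:main} with query tensor $T(x,x') = H_t((x-x')/\sqrt{2}) = \E_{G'}[H_t((x-x')/\sqrt{2},G')]$, whose second moment is also bounded by Lemma~\ref{lem:cov-hxy}.

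The key quantitative analysis is the separation between the two cases. If $x,x'$ come from the same component, then $(x-x')/\sqrt{2}\sim N(0,I)$, so $\E[\tau\mid \text{same}]=0$ and $\var[\tau\mid \text{same}]\le \|M_t\|_2^2$; the bounded aspect-ratio hypothesis keeps all $\|\mu_{ij}\|_2$ within a constant factor of $s$, so by Cauchy-Schwarz $\|M_t\|_2^2 \le (O(s))^{2t}$. If $x,x'$ come from components $i\neq j$, the $(i',j')=(i,j)$ term in $\E[\tau\mid i,j]$ dominates and yields
\[
\E[\tau\mid i,j] \;\ge\; w_iw_j\,\frac{\|\mu_i-\mu_j\|_2^{2t}}{2^t\, t!}\;-\;(\text{cross terms}),
\]
which by Stirling is at least $w_{\min}^2\, (e s^2/(2t))^t / \sqrt{t}$. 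Setting $t = \Theta(\log(1/(\alpha w_{\min})))$ and using the hypothesis $s^2 \geq C\log(1/(\alpha w_{\min}))$ for a sufficiently large constant $C$, the ratio $|\E[\tau\mid i\neq j]|^2/\var[\tau\mid\text{same}]$ is lower-bounded by an arbitrarily large polynomial in $1/(\alpha w_{\min})$; Proposition~\ref{prop:main} then produces an estimator $\widehat{\tau}(x,x')$ whose $L_2$ error is negligible relative to this gap, and thresholding at an appropriate value correctly classifies every pair with high probability (by union-bounding over $O(n^2)$ pairs, and amplifying $\tau$ in Proposition~\ref{prop:main} as needed via the standard median trick).

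The main obstacle is the interplay between $t$, $s$, and the $k^2$ blow-up in ``effective components'' of $Z$. We need $t$ large enough that the different-component signal $(s^2/t)^t$ dwarfs the same-component noise and the polynomial-in-$k$ error guaranteed by Proposition~\ref{prop:main}, yet small enough that $s^2/t$ remains bounded below by a large constant---which is exactly why the improved moment bound of Lemma~\ref{lem:cov-hxy} (giving $(O(\|\mu\|_2^2/t)+1)^t$ rather than the weaker bound obtained via a direct product of rank-one tensors as in~\cite{LL21-opt}) is essential: it lets us push $t$ up to $\Theta(\log(1/(\alpha w_{\min})))$, matching the information-theoretic separation threshold. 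The bounded aspect-ratio assumption sidesteps the need for the multi-scale recursive clustering of~\cite{LL21-opt} by ensuring every pairwise difference lives at essentially the same scale, so that a single choice of $t$ and threshold works for the whole mixture; relaxing this assumption would require essentially importing their Sections~10--11 verbatim.
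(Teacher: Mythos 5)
Your proposal follows essentially the same route as the paper: reduce parameter estimation to pairwise same/different-component testing, pass to the difference mixture $(Y-Y')/\sqrt{2}$, express its parameter moment tensor $M_t$ as $\sqrt{t!}\,\E[H_t(Z,G)]$, and invoke Proposition~\ref{prop:main} together with Lemma~\ref{lem:cov-hxy} to estimate $\langle H_t((x-x')/\sqrt{2}),M_t\rangle$, with the bounded aspect-ratio hypothesis fixing a single scale $t$. There are, however, two substantive issues.

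First, your signal computation has a normalization error that, as written, defeats the argument. By Fact~\ref{fact:Ht-mean-cov}, $\E_{z\sim N(\nu,I)}[H_t(z)] = \nu^{\otimes t}/\sqrt{t!}$, so the $(i',j')=(i,j)$ term of $\E[\tau\mid i,j]$ is $w_iw_j\,\|\mu_i-\mu_j\|_2^{2t}/(2^t\sqrt{t!})$, not $/(2^t\,t!)$. With your denominator, the Stirling estimate gives $\E[\tau\mid i,j]\approx w_{\min}^2(es^2/(2t))^t/\sqrt{t}$, while your variance bound $\mathrm{Var}[\tau\mid\text{same}]\le\|M_t\|_2^2\le(O(s))^{2t}=(O(s^2))^t$ gives a signal-to-noise ratio of order $w_{\min}^4\,(O(1)s^2/t^2)^t$, which \emph{vanishes} when $s^2=\Theta(t)$ (the regime of interest). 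With the correct $\sqrt{t!}$, the dominant term is $\approx w_{\min}^2(es^4/(4t))^{t/2}$ and the ratio becomes $w_{\min}^4(O(1)s^2/t)^t$, which does diverge once $s^2\ge Ct$ for $C$ a sufficiently large constant; that is the quantitative heart of the proof. Note also that for even $t$ each cross term $(\mu_{ij}\cdot\mu_{i'j'})^t$ is nonnegative, so the cross terms \emph{add to} the signal rather than having to be subtracted.

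Second, your same-component control is genuinely different from the paper's (and that's fine): you bound $\mathrm{Var}[\tau\mid\text{same}]=\|M_t\|_2^2$ and apply Chebyshev, whereas the paper conditions on all the $v_{i'j'}\cdot(x-x')/\sqrt{2}$ having magnitude $O(\sqrt{\log(k/\tau)})$ and then uses a pointwise estimate of $h_t$ in the oscillatory regime. Your version is cleaner to state, but you need to account for the polynomial slack Chebyshev costs when union-bounding over $O(n^2)$ pairs; the exponential SNR absorbs this, but it should be said explicitly. Finally, you omit the paper's preliminary dimension reduction (projecting onto the top-$k$ principal directions so that effectively $d=k$, and arguing that if $\max_{i\neq j}\|\mu_i-\mu_j\|_2$ exceeds $\log k$ one can fall back on~\cite{LL21-opt}). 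This is not cosmetic: the error bound of Proposition~\ref{prop:main} carries the factor $(1+\max\|v_i\|_2)^{2m}=(O(s))^{2t}$, so one needs $s$ bounded (comparable to $\sqrt{t}$) for the stated sample complexity to be polynomial, and the reduction is what licenses that assumption.
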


Note that \cite{LL21-opt} achieves this goal only for separation $\log(1/(\alpha w_{\min}))^{1/2+c}$.
While they do not require the condition that the largest pairwise distance is comparable to the smallest,
in order to remove this condition, they developed and leveraged a complicated recursive clustering
argument (see Sections 10 and 11 of \cite{LL21-opt}), which we expect can also be applied to our setting.

\begin{proof}[{\bf Proof of Theorem~\ref{thm:formal-cluster-GMMs}}]
Firstly, we note that by standard techniques,
we can reduce to the case of $d=k$; see, e.g.,~\cite{VempalaWang:02, DKS18-list}.
In particular, we may assume that $\max \|\mu_i-\mu_j\|_2 < \log(k)$
(or otherwise \cite{LL21-opt}'s result will apply anyway).
Given this and $\poly(d/(w_{\min}\alpha))$ samples,
we can approximate the covariance matrix of $F$ to inverse polynomial accuracy.
Letting $V$ be the span of the $k$ principal eigenvectors of this covariance matrix,
it is not hard to show that all of the $\mu_i$
lie within distance $\alpha/2$ of some translate of $V$
(this translate can be estimated by approximating the mean of the projection of $F$ onto $V^\perp$).
Thus, it suffices to consider our algorithm on the projection of $F$ onto $V$,
which is a $k$-dimensional subspace.

We next draw $\poly(k/(w_{\min}\alpha))$ samples from $F$.
We claim that with high probability we can cluster these samples
so that the samples drawn from the component $N(\mu_i,I)$ lie exactly in their own cluster.
If we can do this, it is straightforward to show that letting $\tilde \mu_i$
be the mean of the elements in the $i^{th}$ cluster and letting $\tilde w_i$
be the fraction of elements in the $i^{th}$ cluster will suffice for our estimates.
Following~\cite{LL21-opt}, in order to do this clustering,
we need only the following procedure:

Given $x,x' \sim F$ and $\poly(1/(w_{\min} \tau))$ i.i.d.\ samples from
$F$ (for some $\tau =\poly(k/(w_{\min}\alpha)) > 0$),
determine with probability at most $1-\tau$ of error whether or not $x$ and $x'$
were drawn from the same component.

For this, we consider the distribution $X'$ obtained by taking $(y-y')/\sqrt{2}$,
where $y,y'$ are independent samples from $F$.
We note that $X'$ is a mixture of at most $k^2$ Gaussians, namely
$$
X' \sim (w_1^2 + w_2^2 + \ldots + w_k^2) N(0,I) + \sum_{i\neq j} w_i w_j N((\mu_i-\mu_j)/\sqrt{2},I).
$$
We define $M_t = \sum_{i,j} w_i w_j (\mu_i - \mu_j)^{\otimes t}$ and note that
\begin{equation}\label{GMM cluster M est eqn}
M_t = \sqrt{t!}\E_{x \sim X', y \sim N(0,I)}[H_t(x,y)] \;.
\end{equation}
Our goal will be to use Proposition \ref{prop:main}
to approximate $P_t := \langle H_t((x-x')/\sqrt{2}), M_t\rangle$,
for $t$ an even integer bigger than a suitably large constant multiple of $\log(1/w_{\min})$.
In particular, we have that
$$
P_t = \sum_{i \neq j} w_i w_j \|\mu_i - \mu_j\|_2^t h_t(v_{i,j} \cdot (x-x')/\sqrt{2}) \;,
$$
where $v_{i,j}$ is the unit vector in the direction of $\mu_i - \mu_j$.

If $x$ comes from the $i^{th}$ component and $x'$ from the $j^{th}$,
$v_{i,j}\cdot(x-x')/\sqrt{2}$ is distributed as $N(\|\mu_i-\mu_j\|_2/\sqrt{2},1)$,
and thus with high probability is at least $s/2$. By direct computation, we have that
$$
h_t(x) = \frac{1}{\sqrt{t!}} \sum_{a=0}^{t/2} (-1)^a x^{t-2a}\binom{t}{2,2,2,\ldots,2,t-2a} = \frac{x^t}{\sqrt{t!}}(1+O((t/x)^2 + (t/x)^4 + \ldots + (t/x)^{t}).
$$
Therefore, if $|x|\leq t$, we have that $h_t(x) = O(t)^{t/2}$;
but if $|x|$ is at least a sufficiently large constant multiple of $t$,
we have that $h_t(x) \geq \frac{x^t}{2\sqrt{t!}}$.
In particular, $h_t(v_{i,j}\cdot(x-x')/\sqrt{2})$ will be at least $\Omega(s^2/\sqrt{t})^t$.
Since all of the other $h_t(v_{i',j'}\cdot (x-x')/\sqrt{2})$ are bounded below,
this implies that $P_{t} = w_{\min}^2 \Omega(s^2/\sqrt{t})^t.$

On the other hand if $x$ and $x'$ come from the same component,
$v_{i,j}(x-x')/\sqrt{2} \sim N(0,1)$, and with probability $1-\tau/3$ all of these
quantities have magnitude $O(\sqrt{\log(k/\tau)})$.
Direct computation similarly gives us that
$$
h_t(x) = \frac{1}{\sqrt{t!}} \sum_{a=0}^{t/2} (-1)^a x^{t-2a} \frac{t!}{(t-2a)!a!2^a} \;.
$$
If $|x| = O(\sqrt{t})$, the largest term above will occur
when $x^2 a \approx (t-2a)^2$, which occurs when $a$ is a constant multiple of $t$.
In this case, it is easy to see that the term in question is $O(1)^t$.
Thus, when $x$ and $x'$ come from the same component,
$P_{t} = O(s^2/t)^t$ (with a small enough implied constant in the big-$O$)
with high probability.

Therefore, by estimating $P_{t}$ to error $O(s)^t$,
we can reliably distinguish between the cases where $x$ and $x'$
come from the same component and the one where they do not.
This can be done directly using Proposition \ref{prop:main}
along with Equation \eqref{GMM cluster M est eqn} and the fact that
$H_t((x-x')/\sqrt{2}) = \E_{y\sim N(0,I)}[H_n((x-x')/\sqrt{2},y)].$
Applying Lemma \ref{lem:cov-hxy}, we find that both of these estimators
have second moment bounded by $O(1+s^2/t)^t$.
This can be estimated to appropriate error with $\poly(k/(w_{\min}\tau))$ samples.
\end{proof}

\subsection{Learning Mixtures of Linear Regressions} \label{ssec:mlr}

\textbf{Problem Setup} We start by defining the underlying probabilistic model.
A linear regression problem produces a distribution on $\R^{d+1}$. In particular,
given $\sigma>0$ and $\beta \in \R^d$, we define a distribution $L_{\beta,\sigma}$ on $\R^{d+1}$
given as the distribution of $(X,y)$, where $X\in \R^d$ is distributed as $N(0,I)$ and $y = \beta\cdot X + N(0,\sigma^2)$.
Notice that as a distribution over $\R^{d+1}$, $L_{\beta,\sigma}$ is just a Gaussian, namely
$$
L_{\beta,\sigma} \sim N\left( 0 , \left[ \begin{matrix} I & \beta \\ \beta^T & \|\beta\|_2^2 + \sigma^2 \end{matrix} \right] \right).
$$

\noindent We can now define the corresponding mixture model.

\begin{definition}[Mixtures of Linear Regressions] \label{def:mlr}
A $k$-mixture of linear regressions ($k$-MLR) with error $\sigma>0$,
is any distribution of the form $F = \sum_{i=1}^k w_i L_{\beta_i,\sigma}$
for some unknown mixing weights $w_i \geq 0$ with $\sum_{i=1}^k w_i =1$
and unknown vectors $\beta_i$. We will assume that the $\beta_i$'s
have some known upper bound $B$ on their norms,
namely $\|\beta_i\|_2 \leq B$ for $1\leq i \leq k$.
\end{definition}

\begin{definition}[Density Estimation for Mixtures of Linear Regressions]\label{def:learn-MLR}
The density estimation problem for mixtures of linear regressions
is the following: The input is a multiset of i.i.d.\ samples in $\R^{d+1}$ drawn from an
unknown $k$-mixture $F = \sum_{i=1}^k w_i L_{\beta_i,\sigma}$,
where $\sigma, k,$ and $ B$ are known.
The goal of the learner is to output a (sampler for a) hypothesis distribution
$H$ such that with high probability $\dtv(H, F) \leq \eps$.
\end{definition}

Our main algorithmic result for this problem is the following:

\begin{theorem}[Density Estimation Algorithm for $k$-MLR]\label{MLRThm}
Suppose that we are given sample access to a $k$-MLR distribution $F$ with $B,\sigma \leq 1$.
Then there exists an algorithm that given $k,\sigma,$ and $\eps>0$,
draws $N=\poly(k,d)\eps^{O(\sigma^{-2})}$ samples from $F$,
runs in $\poly(N, d)$ time, and returns a sampler for a distribution that is $\eps$-close to $F$ in total variation distance.
\end{theorem}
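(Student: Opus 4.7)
My plan is to mirror the density-estimation template used for spherical Gaussian mixtures in Theorem~\ref{thm:formal-dens-GMMs}: construct a degree-$n_0$ approximation to the density ratio $F/G$ with $G = N(0, I_{d+1})$ (for $n_0 = O(\log(1/\eps)/\sigma^2)$), produce an efficient evaluator via Proposition~\ref{prop:main}, truncate, and then rejection-sample against $G$. The new ingredient relative to the GMM case is that the Hermite coefficients $T_n$ of $F/G$ are not directly of the form $\sum_i w_i v_i^{\otimes n}$ required by Proposition~\ref{prop:main}; they must first be reduced, by an explicit algebraic expansion, to inner products with the ``parameter'' moment tensors $M_m := \sum_i w_i \beta_i^{\otimes m}$, which \emph{are} in the required nonnegative-weight form and admit the efficient STC representation $M_m = \E_{(X,y)\sim F,\,Y\sim N(0,I_d)}[(y^m/\sqrt{m!})\,H_m(X,Y)]$ via the extended Hermite tensor of Definition~\ref{def:Hxy}.

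Writing $F = \sum_i w_i N(0, \Sigma_i)$ with $\Sigma_i - I = \bigl[\begin{smallmatrix} 0 & \beta_i \\ \beta_i^T & c_i\end{smallmatrix}\bigr]$ and $c_i = \|\beta_i\|_2^2 + \sigma^2 - 1$, Lemma~\ref{MLRMomentCompLem} gives $T_n = \frac{(n-1)!!}{\sqrt{n!}}\sum_i w_i \mathrm{Sym}((\Sigma_i - I)^{\otimes n/2})$ for even $n$ (and zero for odd). Using the rank-two decomposition $\Sigma_i - I = \widetilde\beta_i \otimes e_{d+1} + e_{d+1}\otimes \widetilde\beta_i + c_i\, e_{d+1}\otimes e_{d+1}$ with $\widetilde\beta_i = (\beta_i, 0)\in\R^{d+1}$, I expand binomially, symmetrize, and further expand each $c_i^{n/2-j}$ in powers of $\|\beta_i\|_2^2$. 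Combining the orthogonal two-subspace identity $\langle H_n((x,y)), \widetilde\beta_i^{\otimes j}\otimes e_{d+1}^{\otimes n-j}\rangle = h_{n-j}(y)\langle H_j^{(\R^d)}(x), \beta_i^{\otimes j}\rangle/\sqrt{\binom{n}{j}}$ with the contraction identity $\|\beta_i\|_2^{2\ell}\langle H_j^{(\R^d)}(x), \beta_i^{\otimes j}\rangle = \langle H_j^{(\R^d)}(x)\otimes I^{\otimes \ell}, \beta_i^{\otimes (j+2\ell)}\rangle$, one arrives at
\[
\langle T_n, H_n((x,y))\rangle = \sum_{j,\ell} a_{n,j,\ell}(\sigma)\, h_{n-j}(y)\,\bigl\langle H_j^{(\R^d)}(x)\otimes I^{\otimes \ell},\, M_{j+2\ell}\bigr\rangle,
\]
for explicit scalar coefficients $a_{n,j,\ell}(\sigma)$.

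Each term on the right is now in the form required by Proposition~\ref{prop:main}: the test tensor $H_j^{(\R^d)}(x)\otimes I^{\otimes \ell}$ is the expectation of the STC $H_j(x, Y_0)\otimes(Y_1\otimes Y_1)\otimes\cdots\otimes(Y_\ell\otimes Y_\ell)$ for independent $Y_0, \ldots, Y_\ell\sim N(0, I_d)$, whose second-moment operator norm factorizes by independence and is $O(1)^n$ via Lemma~\ref{lem:cov-hxy} together with the elementary bound $\var[Y_k^\top W Y_k]\leq 2\|W\|_2^2$ for unit symmetric $W$; the STC for $M_m$ is bounded analogously using Gaussian hypercontractivity on the $y^m$ factor. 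Summing the resulting approximations over the $O(n_0^2)$ pairs $(j,\ell)$ at each $n\leq n_0$, truncating the evaluator to $[0, A]$ with $A = \poly(k/\eps)$ (justified by Gaussian tail bounds on each $L_{\beta_i,\sigma}/G$), and running rejection sampling against $G$ exactly as in Theorem~\ref{thm:formal-dens-GMMs}, produces a sampler for a distribution $\eps$-close to $F$ in total variation, in time $\poly(k, d)(1/\eps)^{O(1/\sigma^2)}$.

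The main obstacle I expect is the quantitative truncation analysis: justifying $n_0 = O(\log(1/\eps)/\sigma^2)$ requires the tail bound $\sum_{n > n_0}\|T_n\|_2^2 \leq (\eps/2)^2$, whose decay rate is ultimately governed by the top eigenvalue of each $\Sigma_i$. Since under the raw hypothesis $B, \sigma \leq 1$ one can have $\Sigma_i \not< 2I$ (equivalently $2B^2 + \sigma^2 \geq 2$), in which case $F/G \notin L_2(G)$ at all, I anticipate that a preliminary rescaling $y \mapsto y/\sqrt{C}$ by a sufficiently large constant $C$---which preserves the MLR structure with $\beta_i' = \beta_i/\sqrt{C}$, $\sigma' = \sigma/\sqrt{C}$, and can be inverted on output without increasing the TV error---will be needed to restore $L_2$-integrability of the rescaled density ratio, after which the closed-form identity $\sum_n \|T_n^{(i)}\|_2^2 = 1/\sqrt{\det(2\Sigma_i^{-1} - I)}$ yields the required $\exp(-\Omega(\sigma^2))$ per-degree decay.
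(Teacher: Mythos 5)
The proposal follows the paper's proof closely. The overall plan---rescale $y$ so that $F/G\in L_2(G)$, truncate the Hermite expansion of $F/G$ at degree $n_0 = O(\sigma^{-2}\log(1/\eps))$, reduce each $\langle T_n, H_n(x)\rangle$ via Lemma~\ref{MLRMomentCompLem} and a multinomial expansion of $(\Sigma_i - I)^{\otimes n/2}$ to a polynomial-size sum of inner products with the parameter moment tensors $M_m=\sum_i w_i\beta_i^{\otimes m}$, approximate each via Proposition~\ref{prop:main} and extended Hermite tensors, and rejection-sample against $G$---is exactly the paper's. The difference in the algebraic step is cosmetic: you factor the $(d+1)$-dimensional Hermite tensor across the split $\R^{d+1}=\R^d\oplus\R e_{d+1}$ to get terms $h_{n-j}(y)\cdot\langle H_j^{(\R^d)}(x)\otimes I^{\otimes\ell}, M_{j+2\ell}\rangle$, whereas the paper works with $H_n(X,Y)$ on $\R^{d+1}$ directly and writes $\langle M_{a+2b}\otimes e_y^{\otimes n-a}, H_n(X,Y)\otimes I_d^{\otimes b}\rangle$ summed over $a+b+c=n/2$; these are the same expansion reindexed, and both ultimately hand $\|\beta_i\|_2^{2\ell}$ to the moment-tensor side by contracting against copies of $I_d$.

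The genuine gap is the truncation bound, which you rightly flag. Finiteness of $\sum_n\|T_n^{(i)}\|_2^2$ (the closed form involving $\det(2\Sigma_i^{-1}-I)$) does not by itself give per-degree decay; one could a priori have $\|T_n\|_2^2$ decaying only polynomially. To extract geometric decay from the $L_2$ identity you would need a further step, e.g., applying it to the family of rescalings $N(0,z\Sigma_i)$ and reading off a radius of convergence. The paper instead does a direct eigenvalue computation: after dividing $y$ by $2$ (so $B,\sigma\leq 1/2$), the $2\times 2$ matrix similar to $\Sigma_i - I$ has eigenvalues $\bigl(\|\beta_i\|_2^2+\sigma^2-1\pm\sqrt{(\|\beta_i\|_2^2+\sigma^2-1)^2+4\|\beta_i\|_2^2}\bigr)/2$, all of magnitude at most $1-\sigma^2/2$, and an explicit norm computation for $\mathrm{Sym}(M_i^{\otimes n/2})$ with diagonal $2\times 2$ $M_i$ then gives $\|T_n\|_2^2\leq n(1-\sigma^2/2)^n$. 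This is the missing ingredient in your sketch. (Your preliminary rescaling by $\sqrt C$ is the same device as the paper's division of $y$ by $2$.)

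A smaller issue: to bound the second moment of your test-tensor STC you cite $\var[Y_k^\top W Y_k]\leq 2\|W\|_2^2$, but Proposition~\ref{prop:main} requires a \emph{second moment} bound for $\mathcal{F}_m$, not a variance bound. Since $\E[Y^\top W Y]=\tr W$ can be as large as $\sqrt{d}$ for $\|W\|_2=1$, the second moment of each stochastic $Y_k\otimes Y_k$ factor is $O(d)$, not $O(1)$, so the claimed $O(1)^n$ does not follow from the stated argument. The paper keeps $I_d^{\otimes b}$ deterministic in the test tensor rather than randomizing it, which at least avoids introducing additional sampling variance; in either formulation the bookkeeping of the $I_d^{\otimes b}$ contribution to the worst-case directional second moment needs more care than the sketch provides.
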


Note that the requirement that $\sigma,B\leq 1$ can be removed at the cost
of making the runtime $\poly(k,d)(1/\eps)^{1+O((B/\sigma)^2)}$.
This is done by computing an upper bound $A$ on $B+\sigma$ (for example by taking
many samples and returning something proportional to the largest value of $|y|$),
and renormalizing by replacing $y$ by $y/A$.


\paragraph{Proof of Theorem~\ref{MLRThm}.}
By dividing the $y$-values by $2$, we get a new distribution of the same form
with $\sigma$ and $\beta_i$ all half as big. Thus, we can reduce to the case
where $\sigma,B\leq 1/2$, which we assume below.

As in the Gaussian mixtures application, we will produce this distribution
through rejection sampling. In particular, we will produce a random sample $x$
from the standard Gaussian $G$, and then we will try to accept it with
probability proportional to $(F/G)(x)$. This depends on being able
to approximate the function $(F/G)(x)$ with small $L_1$ error.
We do this by trying to compute its Taylor expansion using Proposition \ref{prop:main}.

In the lemma below, we start by defining the appropriate parameter
moments for this setting.

\begin{lemma}\label{MLRMomentCompLem}
Writing $F=(X,y)$ with $X\in \R^d$ and $y\in \R$, we have that
\begin{equation} \label{MLGMomentTensorComputationEquation}
M_m := \sum_{i=1}^k w_i \beta_i^{\otimes m} = \E_{(X,y)\sim F, Y\sim N(0,I)}\left[(y^m/\sqrt{m!})H_m(X,Y) \right].
\end{equation}
Furthermore, if $B,\sigma \leq 1$, the second moment of $(y^m/\sqrt{m!})H_n(X,Y)$ above in any direction is bounded by
$$
V = 2^{O(m)}.
$$
\end{lemma}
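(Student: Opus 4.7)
The plan is to split the claim into its two assertions and handle each separately. For the identity \eqref{MLGMomentTensorComputationEquation}, I would begin by using the key property of the extended Hermite tensor that $\E_{Y\sim N(0,I)}[H_m(X,Y)] = H_m(X)$, established right after Definition~\ref{def:Hxy}. Applying this conditionally and using linearity over the $k$ mixture components, the claim reduces to showing that
\[
\E_{(X,y)\sim L_{\beta,\sigma}}\bigl[(y^m/\sqrt{m!})\, H_m(X)\bigr] = \beta^{\otimes m}
\]
for each single linear regression distribution $L_{\beta,\sigma}$. I would then decompose $y = \beta\cdot X + \eta$ with $\eta \sim N(0,\sigma^2)$ independent of $X \sim N(0,I)$, expand $y^m$ by the binomial theorem, and take the expectation over $\eta$ first, leaving terms of the form $\E[\eta^{m-j}]\cdot \E_X[(\beta\cdot X)^j H_m(X)]$.

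The Hermite orthogonality of the entries of $H_m(X)$---namely, that they are orthogonal to all polynomials of degree less than $m$ in $L^2(\R^d, N(0,I))$---kills every term with $j<m$. For $j=m$, I would use the scalar Hermite expansion $x^m = He_m(x) + (\text{lower-degree Hermite terms})$ applied to $\hat\beta\cdot X$ with $\hat\beta := \beta/\|\beta\|_2$, together with the identity $\langle H_m(X), \hat\beta^{\otimes m}\rangle = h_m(\hat\beta\cdot X) = He_m(\hat\beta\cdot X)/\sqrt{m!}$, to write
\[
(\beta\cdot X)^m = \sqrt{m!}\,\langle H_m(X), \beta^{\otimes m}\rangle + (\text{remainder orthogonal to every entry of } H_m(X)).
\]
Combining this with $\E[H_m(X)\otimes H_m(X)] = \mathrm{Sym}_m(I_{d^m})$ and the symmetry of $\beta^{\otimes m}$ gives $\E_X[(\beta\cdot X)^m H_m(X)] = \sqrt{m!}\,\beta^{\otimes m}$, which yields the desired identity after the mixture sum.

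For the second-moment bound, I would verify that for any unit-norm order-$m$ tensor $A$,
\[
\E\bigl[(y^m/\sqrt{m!})^2\,\langle H_m(X,Y),A\rangle^2\bigr] \le 2^{O(m)}.
\]
By Cauchy--Schwarz, this is at most $\sqrt{\E[y^{4m}]/(m!)^2}\cdot \sqrt{\E[\langle H_m(X,Y),A\rangle^4]}$. Under each mixture component, $y$ is Gaussian with variance $\|\beta_i\|_2^2 + \sigma^2 \le 2$, so $\E[y^{4m}] \le (4m-1)!!\cdot 2^{2m} = O(m)^{2m}$, and Stirling cancels with $(m!)^2$ to give $2^{O(m)}$ for the first factor. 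For the second, $\langle H_m(X,Y),A\rangle$ is a degree-$m$ polynomial in the independent Gaussians $X$ and $Y$, so Gaussian hypercontractivity (Fact~\ref{thm:hc}) controls its fourth moment by $3^{2m}$ times the square of its second moment, while Lemma~\ref{lem:cov-hxy} applied with $\mu=0$ bounds the second moment by $O(1)^m$. Multiplying the two factors yields $V = 2^{O(m)}$.

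The main obstacle is not conceptual but rather bookkeeping: keeping the several normalizations straight---the $1/\sqrt{m!}$ from the Hermite tensor definition, the relation $h_k = He_k/\sqrt{k!}$, and the symmetric-tensor inner product convention---so that the degree-$m$ Hermite component of $(\beta\cdot X)^m$ lines up exactly with $\sqrt{m!}\,\beta^{\otimes m}$. Once this is handled, the orthogonality and hypercontractivity steps close out the calculation routinely.
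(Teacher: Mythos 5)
Your proposal is correct and follows essentially the same route as the paper: integrate out $Y$ to reduce to $H_m(X)$, decompose $y=\beta_i\cdot X + \text{noise}$ per component, kill lower-degree terms via Hermite orthogonality, and bound the second moment by splitting the product and controlling each factor's fourth moment. The only cosmetic difference is in the second-moment step, where you bound $\E[y^{4m}]$ by a direct Gaussian moment computation while the paper obtains its $L^4$ bound on the $y$-factor (like the Hermite factor) via hypercontractivity applied to its $L^2$ bound; both give $2^{O(m)}$.
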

\begin{proof}
To begin with, we note that the expectation of $Y$
of the right hand side of Equation \eqref{MLGMomentTensorComputationEquation}
is $\E_{(X,y)\sim F}[(y^m/\sqrt{m!})H_m(X)].$ Since $F$ is a mixture of the $L_{\beta_i,\sigma}$, this is
$$
\sum_{i=1}^k w_i \E_{(X,y)\sim L_{\beta_i,\sigma}}[(y^m/\sqrt{m!})H_m(X)].
$$
For the $i^{th}$ term of this sum, we note that $y = \beta_i\cdot X + \xi$,
where $\xi \sim N(0,\sigma^2)$ is an independent Gaussian.
Thus, $y^m/\sqrt{m!} = (\beta_i\cdot X+\xi)^m/\sqrt{m!}$.
For any fixed value of $\xi$, this means that $(y^m/\sqrt{m!})$ is
$\langle \beta^{\otimes m}, H_{m}(X)\rangle$
plus a polynomial of degree less than $m$ in $X$.
As $H_m(X)$ is orthogonal to polynomials of degree less than $m$, the above expectation is
$$
\sum_{i=1}^k w_i \E_X[\langle \beta_i^{\otimes m}, H_m(X) \rangle, H_m(x)] = M_{m} \;,
$$
by the fact that $\beta_i^{\otimes m}$ is symmetric
and the orthonormality properties of the Hermite tensors $H_m$.

For the second moment bound, we let $T$ be a tensor with $\|T\|_2 \leq 1$ and we wish to bound
$$
\E_{(X,y)\sim F, Y \sim N(0,I)}[((y^m/\sqrt{m!})\langle T, H_m(X,Y)\rangle)^2].
$$
We can do this by bounding the expectation on each component of the mixture.
We note that with $(X,y)\sim L_{\beta_i}$ and $Y\sim N(0,I)$,
all of the terms in the above expectation are polynomials of a Gaussian input.
By Lemma \ref{lem:cov-hxy}, we have that $\|\langle T, H_m(X,Y)\rangle\|_2 = O(1)^m$.
Similarly, a direct computation shows that $\|(y^m/\sqrt{m!})\|_2 = O(1)^m$.
Applying Gaussian Hypercontractivity (Fact~\ref{thm:hc})
and Holder's Inequality, we conclude that $\|(y^m/\sqrt{m!})\langle T, H_m(X,Y)\rangle \|_2 = O(1)^m$,
as desired.
\end{proof}

We next need to determine how to approximate $(F/G)(x)$.
Using Hermite analysis, we know that it is given by
$$
(F/G)(x) = \sum_{n=0}^\infty \langle T_n , H_n(x)\rangle \;,
$$
where
$$
T_n = \E[H_n(F)] = \sum_{i=1}^k w_i \E[H_n(L_{\beta_i,\sigma})].
$$
As $L_{\beta,\sigma}$ is just a Gaussian, we have by Lemma 2.7 of~\cite{Kane20}
(noting the difference in normalization between their Hermite polynomials and ours),
we have that $\E[H_n(L_{\beta,\sigma})]$ is $0$ if $n$ is odd and is
$$
\frac{(n-1)!!}{\sqrt{n!}}\mathrm{Sym}\left(\left[\begin{matrix} 0 & \beta \\
\beta^T & \|\beta\|_2^2 + \sigma^2 -1 \end{matrix} \right]^{\otimes n/2} \right)
$$
if $n$ is even. In particular, letting $e_y$ be the unit vector in the $y$-direction, this is
\begin{align*}
&  \frac{(n-1)!!}{\sqrt{n!}}\mathrm{Sym}\left(\left(2 \beta \otimes e_y + \|\beta\|_2^2 e_y \otimes e_y + (\sigma^2-1) e_y \otimes e_y  \right)^{\otimes n/2} \right)\\
= &  \frac{(n-1)!!}{\sqrt{n!}}\sum_{a+b+c=n/2} \binom{n/2}{a,b,c}2^a\|\beta\|_2^{2b} (\sigma^2-1)^c\mathrm{Sym}\left( \beta^{\otimes a}\otimes e_y^{\otimes n-a} \right) \;.
\end{align*}
Therefore, we have that
\begin{align*}
\langle T_n, H_n(X) \rangle & = \sum_{i=1}^k w_i \langle \E[H_n(L_{\beta_i,\sigma})], H_n(X) \rangle\\
& = \frac{(n-1)!!}{\sqrt{n!}} \sum_{i=1}^k w_i\sum_{a+b+c=n/2} \binom{n/2}{a,b,c}2^a\|\beta\|_2^{2b} (\sigma^2-1)^c \langle \beta^{\otimes a}\otimes e_y^{\otimes n-a}, H_n(X)\rangle  \\
& = \frac{(n-1)!!}{\sqrt{n!}} \sum_{a+b+c=n/2} \binom{n/2}{a,b,c}2^a\ (\sigma^2-1)^c \langle M_{a+2b}\otimes e_y^{\otimes n-a}, H_n(X)\otimes I_d^{\otimes b}\rangle \\
& = \frac{(n-1)!!}{\sqrt{n!}} \sum_{a+b+c=n/2} \binom{n/2}{a,b,c}2^a\ (\sigma^2-1)^c \E_Y\left[\langle M_{a+2b}\otimes e_y^{\otimes n-a}, H_n(X,Y)\otimes I_d^{\otimes b}\rangle\right] \;.
\end{align*}
Using Proposition \ref{prop:main} and Lemmas \ref{MLRMomentCompLem} and \ref{lem:cov-hxy},
we can approximate $\langle M_{a+2b}\otimes e_y^{\otimes n-a}, H_n(X,Y)\otimes I_d^{\otimes b}\rangle$
to expected squared error $O(1)^n/\sqrt{N}$ with $N$ samples in $\poly(N,k,d)$ time.
As the above sum has $\poly(n)$ terms with factors $O(1)^n$ (so long as $\sigma\leq 1$),
we can approximate $\langle T_n, H_n(X)\rangle$ for random Gaussian $X$
to expected error $O(1)^n/\sqrt{N}$ in polynomial time.

Unfortunately, we cannot compute infinitely many values of $\langle T_n,H_n(x)\rangle$,
so we will want to show that we can truncate the sum.
This amounts to showing that $\|T_n\|_2$
is small when $n$ is sufficiently large. For this, we note that
$$
T_n = \frac{(2n-1)!!}{\sqrt{n!}} \sum_{i=1}^k w_i  \mathrm{Sym}\left( (2 \beta_i \otimes e_y + (\|\beta_i\|_2^2 + \sigma^2 -1) e_y \otimes e_y)^{\otimes n/2} \right) \;.
$$
First, using the triangle inequality, we note that
$$
\|T_n\|_2 \leq \max_i \left\| \mathrm{Sym}\left( (2 \beta_i \otimes e_y + (\|\beta_i\|_2^2 + \sigma^2 -1) e_y \otimes e_y)^{\otimes n/2} \right)\right\|_2 \;.
$$
Next we note that the matrix $(2 \beta_i \otimes e_y + (\|\beta_i\|_2^2 + \sigma^2 -1) e_y \otimes e_y)$
is similar to some $2\times 2$ matrix
$M_i = \left[\begin{matrix} \gamma_{1,i} & 0 \\ 0 & \gamma_{2,i} \end{matrix} \right].$
Thus, we have that
\begin{align*}
\|T_n\|_2^2 & \leq \max_i \left\| \mathrm{Sym}(M_i^{\otimes n/2}) \right\|_2^2\\
& = \max_i\left\|\sum_{a=0}^{n/2} \binom{n/2}{a} \gamma_{1,i}^a \gamma_{2,i}^{n/2-a} \mathrm{Sym}(e_1^{\otimes 2a} \otimes e_2^{\otimes n-2a})\right\|_2^2\\
& = \max_i\sum_{a=0}^{n/2} \binom{n/2}{a}^2 \gamma_{1,i}^{2a} \gamma_{2,i}^{n-2a} \binom{n}{2a}^{-1} \;,
\end{align*}
where the last line is because symmetrizations of
$e_1^{\otimes 2a} \otimes e_2^{\otimes n-2a}$ for different values of $a$ are orthogonal,
and two permutations of $e_1^{\otimes 2a} \otimes e_2^{\otimes n-2a}$
are orthogonal unless the copies of $e_1$ end up in the same places
(in which case they have dot product $1$). Thus, we conclude that
$$
\|T_n\|_2^2 \leq n \max_{i,a} \gamma_{1,i}^{2a} \gamma_{2,i}^{n-2a} \leq n \gamma^n \;,
$$
where $\gamma$ is the largest absolute value of any eigenvalue
of any of the matrices $(2 \beta_i \otimes e_y + (\|\beta_i\|_2^2 + \sigma^2 -1) e_y \otimes e_y)$.

To bound the latter quantity, note that $M_i$ has determinant $-\|\beta_i\|_2^2$
and trace $\|\beta_i\|_2^2 + \sigma^2 -1$. Therefore, it has eigenvalues
$$
\frac{\|\beta_i\|_2^2 + \sigma^2 - 1 \pm \sqrt{(\|\beta_i\|_2^2 + \sigma^2 - 1)^2+4\|\beta_i\|_2^2}}{2} \;.
$$
Since $(\|\beta_i\|_2^2 + \sigma^2 - 1)^2+4\|\beta_i\|_2^2 \leq (1+\|\beta_i\|_2^2)^2$,
these eigenvalues are between
$$
\frac{\|\beta_i\|_2^2 + \sigma^2 - 1 \pm (1+\|\beta_i\|_2^2)}{2}
$$
or between $-1+\sigma^2/2$ and $\|\beta_i\|_2^2+\sigma^2/2.$
In particular, if $B,\sigma \leq 1/2$
we have that $\gamma \leq 1-\sigma^2/2$.

Therefore, if this holds, up to $L_2$ error $\eps/2$, we have that
$$
(F/G)(x) = \sum_{n=0}^{O(\sigma^{-2} \log(1/\eps))}\langle T_n, H_n(x)\rangle,
$$
which can be approximated to error $\eps/2$
using $N= \poly(k,d)\eps^{O(\sigma^{-2})} $ samples and $\poly(N, d)$ time.

Finally, noting that $F$ is a mixture of Gaussians with covariances bounded
below by $\Omega(\sigma^2)$ and above by a constant, it is not hard to see
that except with $\eps$ probability, a point sampled from $F$ has $F/G(x)$
at most $\poly(1/(\eps \sigma))$. Thus, if $C$ is a big enough polynomial in $1/(\eps\sigma)$,
then picking a random sample $x\sim G$ and accepting with probability
$\min(1,f(x)/C)$, where $f(x)$ is our approximation to $(F/G)(x)$,
gives an $\eps$-approximation to the distribution $F$.

This concludes the proof of Theorem~\ref{MLRThm}. \qed

\newpage

\section{Conclusions} \label{sec:concl}

In this work, we gave a general efficient algorithm to approximate higher-order
moment tensors, as long as there are reasonable unbiased estimators
for these moments that can be efficiently sampled.
This type of tensors arise in a range of learning problems.
We leveraged our general result to obtain the first polynomial-time
algorithms for learning mixtures of linear regressions, learning sums of ReLUs and cosine activations,
density estimation for mixtures of spherical Gaussians with bounded means,
and parameter estimation for mixtures of spherical Gaussians under optimal separation.
In all cases, our learning algorithms run in $\poly(d, k)$ time.

A number of open questions remain. The most obvious ones are quantitative:
Specifically, can the dependence on $1/\eps$ for learning sums of ReLUs be improved to polynomial?
Can we remove the bounded means assumption for density estimation of mixtures of spherical
Gaussians? Can we obtain a polynomial-time parameter estimation algorithm for mixtures of linear regressions?
More broadly, for what other learning tasks is the implicit moment tensor
technique applicable? These are left as interesting directions for future work.

\bibliographystyle{alpha}
\bibliography{allrefs}

\newpage

\end{document}